\newcommand{\footremember}[2]{%
    \footnote{#2}
    \newcounter{#1}
    \setcounter{#1}{\value{footnote}}%
}
\newlength{\bibitemsep}\setlength{\bibitemsep}{0.2\baselineskip plus .0\baselineskip minus .0\baselineskip}
\newlength{\bibparskip}\setlength{\bibparskip}{2pt}
\let\oldthebibliography\thebibliography
\renewcommand\thebibliography[1]{%
	\oldthebibliography{#1}%
	\setlength{\parskip}{\bibitemsep}%
	\setlength{\itemsep}{\bibparskip}%
}
\newcommand{\R}{\mathbf{R}}
\newcommand{\N}{\mathbf{N}}
\newcommand{\x}{\mathbf{x}}
\newcommand{\y}{\mathbf{y}}
\newcommand{\ALG}{ALG}
\newcommand{\OPT}{OPT}
\newcommand{\robpol}{\textbf{Rob-Pol}}
\newcommand{\tamapol}{\textbf{Tama-Pol}}
\newcommand{\utilpol}{\textbf{Util-Pol}}
\newcommand{\SPUA}{\textbf{SP-UA}}
\DeclareMathOperator{\E}{\mathbf{E}}
\DeclareMathOperator{\Prob}{\mathbf{P}}
\newtheorem*{claim*}{Claim}
\newtheorem{proposition}{Proposition}
\newtheorem{theorem}{Theorem}
\newtheorem{lemma}{Lemma}
\newtheorem{remark}{Remark}
\newlength{\RoundedBoxWidth}
\newsavebox{\GrayRoundedBox}
\newenvironment{GrayBox}[1]%
   {\setlength{\RoundedBoxWidth}{.93\textwidth}
    \def\boxheading{#1}
    \begin{lrbox}{\GrayRoundedBox}
       \begin{minipage}{\RoundedBoxWidth}}%
   {   \end{minipage}
    \end{lrbox}
    \begin{center}
    \begin{tikzpicture}%
       \node(Text)[draw=black!80,fill=white,rounded corners,%
             inner sep=2ex,text width=\RoundedBoxWidth]%
             {\usebox{\GrayRoundedBox}};
        \coordinate(x) at (current bounding box.north west);
        \node [draw=white,rectangle,inner sep=3pt,anchor=north west,fill=white]
        at ($(x)+(6pt,.75em)$) {\boxheading};
    \end{tikzpicture}
    \end{center}}
\newenvironment{defproblemx}[2][]{\noindent\ignorespaces%
                                \FrameSep=6pt%
                                \parindent=0pt%
                \vspace*{-1.5em}
                \ifthenelse{\isempty{#1}}{%
                  \begin{GrayBox}{\textsc{#2}}%
                }{%
                  \begin{GrayBox}{\textsc{#2} parameterized by~{#1}}%
                }
                \newcommand\Prob{Problem:}%
                \begin{tabular*}{\textwidth}{@{\hspace{.1em}} >{\itshape} p{1.8cm} p{0.8\textwidth} @{}}%
            }{
                \end{tabular*}%
                \end{GrayBox}%
                \ignorespacesafterend
            }
\newcommand{\modif}[1]{\textcolor{black}{#1}}
\title{Robust Online Selection with Uncertain Offer Acceptance}
\author{
	Sebastian Perez-Salazar\footremember{1}{Rice University, \texttt{sperez@rice.edu}}
	\and Mohit Singh\footremember{2}{Georgia Institute of Technology, \texttt{mohit.singh@isye.gatech.edu}}
	\and Alejandro Toriello\footremember{3}{Georgia Institute of Technology, \texttt{atoriello@isye.gatech.edu}}
}
\begin{document}

\maketitle

\vspace{-0.5cm}
\begin{abstract}

Online advertising has motivated interest in online selection problems. 
Displaying ads to the right users benefits both the platform (e.g., via pay-per-click) and the advertisers (by increasing their reach). In practice, not all users click on displayed ads, while the platform's algorithm may miss the users most disposed to do so. This mismatch decreases the platform's revenue and the advertiser's chances to reach the right customers. 
With this motivation, we propose a secretary problem where a candidate may or may not accept an offer according to a known probability $p$. Because we do not know the top candidate willing to accept an offer, the goal is to maximize a robust objective defined as the minimum over integers $k$ of the probability of choosing one of the top $k$ candidates, given that one of these candidates will accept an offer. Using Markov decision process theory, we derive a linear program for this max-min objective whose solution encodes an optimal policy. The derivation may be of independent interest, as it is generalizable and can be used to obtain linear programs for many online selection models. We further relax this linear program into an infinite counterpart, which we use to provide bounds for the objective and closed-form policies. For $p \geq p^* \approx 0.6$, an optimal policy is a simple threshold rule that observes the first $p^{1/(1-p)}$ fraction of candidates and subsequently makes offers to the best candidate observed so far. 

\end{abstract}

\section{Introduction}

The growth of online platforms has spurred renewed interest in online selection problems, auctions and stopping problems \citep{edelman2007internet,lucier2017economic,devanur2009adwords,alaei2012online,mehta2014online}. Online advertising has particularly benefited from developments in these areas. As an example, in 2005 Google reported about \$6 billion in revenue from advertising, roughly 98\% of the company's total revenue at that time; in 2020, Google's revenue from advertising grew to almost \$147 billion. Targeting users is crucial for the success of online advertising. Studies suggest that targeted campaigns can double \emph{click-through rates} in ads \citep{farahat2012effective} despite the fact that internet users have acquired skills to navigate the web while ignoring ads \citep{cho2004people,dreze2003internet}. Therefore, it is natural to expect that not every displayed ad will be clicked on by a user, even if the user likes the product on the ad, whereas the platform and advertiser's revenue depend on this event \citep{pujol2015annoyed}. An ignored ad misses the opportunity of being displayed to another user willing to click on it and decreases the return on investment for the advertiser, especially in cases where the platform uses methods like pay-for-impression to charge the advertisers. At the same time, the ignored ad uses the space of another, possibly more suitable ad for that user. In this work, we take the perspective of a single ad, and we aim to understand the right time to begin displaying the ad to users as a function of the ad's probability of being clicked.

We model the interaction between the platform and the users using a general online selection problem. We refer to it as the \emph{secretary problem with uncertain acceptance} ($\SPUA$ for short). Using the terminology of \emph{candidate} and \emph{decision maker}, the general interaction is as follows: 
\begin{enumerate}[leftmargin=*]
	\item Similar to other secretary problems, a finite sequence of candidates of known length arrives online, in a random order. In our motivating application, candidates represent platform users.
	
	\item Upon an arrival, the decision maker (DM) is able to assess the quality of a candidate compared to previously observed candidates and has to irrevocably decide whether to extend an offer to the candidate or move on to the next candidate. This captures the online dilemma the platform faces: the decision of displaying an ad to a user is based solely on information obtained up to this point.
	
	\item When the DM extends an offer, the candidate accepts with a known probability $ p \in (0, 1] $, in which case the process ends, or turns down the offer, in which case the DM moves on to the next candidate. This models the users, who can click on the ad or ignore it. 
	
	\item The process continues until either a candidate accepts an offer 
or the DM has no more candidates to assess.
\end{enumerate}


A DM that knows in advance that at least one of the top $k$ candidates is willing to accept the offer would like to maximize the probability of making an offer to one of these candidates. In reality, the DM does not know $k$; hence, the best she can do is maximize the minimum of all these scenario-based probabilities. We call the minimum of these scenario-based probabilities the \emph{robust ratio} and our max-min objective the \emph{optimal robust ratio} (see Subsection~\ref{subsec:model_benchmark} for a formal description). Suppose that the DM implements a policy that guarantees a robust ratio $\gamma\in (0,1]$. 
This implies the DM will succeed with probability at least $\gamma$ in obtaining a top $k$ candidate, in any scenario where a top $k$ candidate is willing to accept the DM's offer. This is an ex-ante guarantee when the DM knows the odds for each possible scenario, but the policy is independent of $k$ and offers the same guarantee for any of these scenarios.
Moreover, if the DM can assign a numerical valuation to the candidates, a policy with robust ratio $\gamma$ can guarantee a factor at least $\gamma$ of the optimal offline value. \citet{tamaki1991secretary} also studies the $\SPUA$ and considers the objective of maximizing the probability of selecting the best candidate willing to accept the offer. Applying \citeauthor{tamaki1991secretary}'s policy to value settings can also guarantee an approximation factor of the optimal offline cost; however, the policy with the optimal robust ratio attains the largest approximation factor of the optimal offline value among rank-based policies (see Proposition~\ref{prop:utility_characterization}).

$\SPUA$ captures the inherent unpredictability in online selection, as other secretary problems do, but also the uncertainty introduced by the posibility of candidates turning down offers. $\SPUA$ is broadly applicable; the following are additional concrete examples.
\begin{description}[leftmargin=*]
	
	\item[Data-driven selection problems] When selling an item in an auction, buyers' valuations are typically unknown beforehand. Assuming valuations follow a common distribution, the aim is to sell the item at the highest price possible; learning information about the distribution is crucial for this purpose. In particular auction settings, the auctioneer may be able to sequentially observe the valuations of potential buyers, and can decide in an online manner whether to sell the item or continue observing valuations. Specifically, the auctioneer decides to consider the valuation of a customer with probability $p$ and otherwise the auctioneer moves on to see the next buyer's valuation. The auctioneer's actions can be interpreted as an \emph{exploration-exploitation} process, which is often found in bandit problems and online learning \citep{cesa2006prediction,hazan2019introduction,freund1999adaptive}. 
	This setting is also closely related to data-driven online selection and the prophet inequality problem \citep{campbell1981choosing,kaplan2020competitive,kertz1986stop}; some of our results also apply in these models (see Section \ref{sec:upper_bound}).
	
	\item[Human resource management] As its name suggests, the original motivation for the secretary problem is in hiring for a job vacancy. Screening resumes can be a time-consuming task that shifts resources away from the day-to-day job in Human Resources. Since the advent of the internet, several elements of the hiring process can be partially or completely automated; for example, multiple vendors offer automated resume screening \citep{raghavan2019mitigating}, and machine learning algorithms can score and rank job applicants according to different criteria. Of course, a highly ranked applicant may nevertheless turn down a job offer. Although we consider the rank of a candidate as an absolute metric of their capacities, in reality, resume screening may suffer from different sources of bias \citep{salem2019closing}, but addressing this goes beyond our scope. See also \citep{smith1975secretary,tamaki1991secretary,vanderbei2012postdoc} for classical treatments. Similar applications include apartment hunting \citep{bruss1987optimal,cowan1979optimal,presman1973best}, among others.
\end{description}


\subsection{Our Contributions}

(1) We propose a framework and a robust metric to understand the interaction between a DM and competing candidates, when candidates can reject the DM's offer. (2) We state a linear program (LP) that computes the optimal robust ratio and the best strategy. We provide a general methodology to derive our LP, and this technique is generalizable to other online selection problems. (3) We provide bounds for the optimal robust ratio as a function of the probability of acceptance $p\in (0,1]$. (4) We present a family of policies based on simple threshold rules; in particular, for $p\geq p^* \approx 0.594$, the optimal strategy is a simple threshold rule that skips the first $p^{1/(1-p)}$ fraction of candidates and then makes offers to the best candidate observed so far. We remark that as $p\to 1$ we recover the guarantees of the standard secretary problem and its optimal threshold strategy. (5) Finally, for the setting where candidates also have non-negative numerical values, we show that our solution is the optimal approximation among rank-based algorithms of the optimal offline value, where the benchmark knows the top candidate willing to accept the offer. The optimal approximation factor equals the optimal robust ratio.

\subsection{Problem formulation}\label{subsec:model_benchmark}

A problem instance is given by a fixed probability $p\in(0,1]$ and the number of candidates $n$. These are ranked by a total order, $1 \prec 2 \prec \cdots \prec n $, with $1$ being the best or highest-ranked candidate. The candidate sequence is given by a random permutation $ \pi = (R_1, \dotsc, R_n) $ of $[n]\doteq\{1,2,\ldots,n \}$, where any permutation is equally likely. At time $t$, the DM observes the partial rank $r_t \in [t]$ of the $t$-th candidate in the sequence compared to the previous $t-1$ candidates. The DM either makes an offer to the $t$-th candidate or moves on to the next candidate, without being able to make an offer to the $t$-th candidate ever again. If the $t$-th candidate receives an offer from the DM, she accepts the offer with probability $p$, in which case the process ends. Otherwise, if the candidate refuses the offer (with probability $1-p$), the DM moves on to the next candidate and repeats the process until she has exhausted the sequence. A candidate with rank in $[k]$ is said to be a \emph{top $k$ candidate}. \modif{The goal is a policy that maximizes the probability of extending an offer to a highly ranked candidate that \emph{will accept the offer}. However, since the DM does not know which candidate will accept the offer, the DM would like to be robust against any possible scenario. To measure the quality of a policy $\mathcal{P}$, we use the \emph{robust ratio}
\begin{align}
	\gamma_{\mathcal{P}}=\gamma_{\mathcal{P}}(p) = \min_{k=1,\ldots,n} \frac{\Prob(\text{$\mathcal{P}$ selects a top $k$ candidate}, \text{ candidate accepts offer} )}{\Prob(\text{At least one of the top $k$ candidates accepts offer})}. \label{eq:robust_ratio_pol}
\end{align}
The $k$-th term in the minimization operator, $\gamma_{\mathcal{P},k}(p)$, is the probability that policy $\mathcal{P}$ successfully selects a top $k$ candidate \emph{given} that some top $k$ candidate will accept the offer. Then, the \emph{robust ratio} $\gamma_{\mathcal{{P}}}=\min_{k=1,\ldots,n} \gamma_{\mathcal{P},k}(p)$ captures the situation where policy $\mathcal{P}$ has the worst possible performance over all such scenarios. When every candidate accepts an offer with certainty, $ p = 1 $, the robust ratio $\gamma_\mathcal{P}$ equals the probability of selecting the highest ranked candidate, thus we recover the standard secretary problem and $\gamma_{\mathcal{P}}(1)\approx 1/e$ for the optimal policy $\mathcal{P}$. The goal is to find a policy that maximizes this robust ratio, $\gamma_n^*\doteq \sup_{\mathcal{P}} \gamma_{\mathcal{P}}$. We say the policy $\mathcal{P}$ is \emph{$\gamma$-robust} if $\gamma \leq \gamma_\mathcal{P}$.}
	
\modif{\paragraph{The Robust Ratio and Related Objectives} The $\SPUA$ has been studied before under different objectives. \cite{smith1975secretary} studied the $\SPUA$ with the objective of maximizing the probability of selecting the top candidate and having that candidate accept the offer. This is the unconditional version of $\gamma_{\mathcal{P},k}$ for $k=1$; however, the top candidate may not accept the offer, and the objective does not plan for this contingency. This is particularly inadequate when $ p $ is small, as in many of our motivating applications.} 
%

\modif{
\cite{tamaki1991secretary} instead studied the $\SPUA$ with the objective of maximizing the probability of choosing the top candidate willing to accept the offer. Despite being more realistic than \citet{smith1975secretary}, this objective is often overly selective and may not make an offer hoping to encounter a better candidate in the future. Our objective overcomes this selectiveness, and makes an offer to a candidate 
as long as their rank is high compared to other candidates willing to accept the offer. A further distinction between \citeauthor{tamaki1991secretary}'s objective and the robust ratio emerges when values are assigned to the candidates. In this case, a value-driven DM would like to maximize the value obtained from a candidate that accepts the offer. The robust ratio turns out to be the optimal approximation ratio of any rank-based algorithm in this setting (see Proposition~\ref{prop:utility_characterization}). In Section~\ref{sec:experiments}, we provide extensive numerical experiments for the value version of the problem. Our policy consistently yields better results for small acceptance probabilities, $ p < 0.2 $, demonstrating its effectiveness compared to \cite{tamaki1991secretary}.}


\subsection{Our technical contributions}\label{subsec:technical_contributions}

Recent works have studied secretary models using linear programming (LP) methods  \citep{buchbinder2014secretary,chan2014revealing,correa2020sample,dutting2020secretaries}. We also give an LP formulation that computes the best robust ratio and the optimal policy for our model. Whereas these recent approaches derive an LP formulation using ad-hoc arguments, our first contribution is to provide a general framework to obtain LP formulations that give optimal bounds and policies for different variants of the secretary problem. The framework is based on Markov decision process (MDP) theory \citep{altman1999constrained,puterman2014markov}. This is surprising since early literature on secretary problem used MDP techniques, e.g.\ \citet{dynkin1963optimum,lindley1961dynamic}, though typically not LP formulations. In that sense, our results connect the early algorithms based on MDP methods with the recent literature based on LP methods. Specifically, we provide a mechanical way to obtain an LP using a simple MDP formulation (Section \ref{sec:lp_finite_formulation}). Using this framework, we present a structural result that completely characterizes the space of policies for the $\SPUA$:
\begin{theorem}\label{thm:main_polyhedron}
	Any policy $\mathcal{P}$ for the $\SPUA$ can be represented as a vector in the set
	\[
	\textsc{Pol} = \left\{ (\x,\y) \geq 0   : x_{t,s} + y_{t,s} = \frac{1}{t}\sum_{\sigma=1}^{t-1} \left(y_{t-1,\sigma} + (1-p) x_{t-1,\sigma} \right), \forall t> 1, s\in [t], x_{1,1} + y_{1,1} = 1 \right\} .
	\]
	Conversely, any vector $(\x,\y)\in \textsc{Pol}$ represents a policy $\mathcal{P}$. The policy $\mathcal{P}$ makes an offer to the first candidate with probability $x_{1,1}$ and to the $t$-th candidate with probability ${t x_{t,s}}/ {\left(\sum_{\sigma=1}^{t-1}y_{t-1,\sigma} + (1-p) x_{t-1,\sigma}\right)}$ if the $t$-th candidate has partial rank $r_t=s$.
\end{theorem}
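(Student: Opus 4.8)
The plan is to read the coordinates $x_{t,s}$ and $y_{t,s}$ as \emph{occupancy measures} of the underlying Markov decision process. Concretely, $x_{t,s}$ should be the probability that the DM is still active at time $t$ (no earlier offer has been accepted), the $t$-th candidate has partial rank $r_t = s$, and the DM extends an offer; $y_{t,s}$ is the analogous probability when the DM does not extend an offer. Under this reading, $x_{t,s}+y_{t,s}$ equals the probability of the event $E_{t,s}$ that the DM is active at time $t$ with $r_t = s$, and the equalities defining $\textsc{Pol}$ are exactly the flow-conservation identities relating $E_{t,s}$ to the events one step earlier.

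For the forward direction I would fix an arbitrary policy $\mathcal{P}$ and define $(\x,\y)$ as above. The key probabilistic input is the standard fact that in a uniformly random permutation the partial rank $r_t$ of the $t$-th arrival is uniform on $[t]$ and independent of $(r_1,\dots,r_{t-1})$. Since ``active at time $t$'' is determined by the first $t-1$ partial ranks, the offer decisions on them, and the acceptance coin flips -- all independent of $r_t$ -- the probability factors as $\Prob(E_{t,s}) = \frac{1}{t}\,\Prob(\text{active at } t)$. Expanding ``active at time $t$'' as ``active at $t-1$, and either no offer was made or an offer was made and rejected'' gives $\Prob(\text{active at } t) = \sum_{\sigma=1}^{t-1}\left(y_{t-1,\sigma} + (1-p)x_{t-1,\sigma}\right)$, and substituting yields precisely the defining equation of $\textsc{Pol}$; the base case $x_{1,1}+y_{1,1}=1$ holds because the DM is always active at time $1$ with $r_1=1$.

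For the converse, given $(\x,\y)\in\textsc{Pol}$ I would define a Markovian policy that, upon being active at time $t$ with partial rank $s$, makes an offer with probability $q_{t,s} = x_{t,s}/(x_{t,s}+y_{t,s})$. Using the $\textsc{Pol}$ constraint to rewrite the denominator turns this into exactly ${t x_{t,s}}/{\left(\sum_{\sigma=1}^{t-1}y_{t-1,\sigma}+(1-p)x_{t-1,\sigma}\right)}$, the expression in the statement. I would then verify by induction on $t$ that the occupancy measures induced by this policy coincide with $(\x,\y)$: the inductive hypothesis gives $\Prob(\text{active at } t) = \sum_{\sigma=1}^{t-1}\left(y_{t-1,\sigma}+(1-p)x_{t-1,\sigma}\right)$, hence $\Prob(E_{t,s}) = \frac{1}{t}\,\Prob(\text{active at } t) = x_{t,s}+y_{t,s}$, and multiplying by $q_{t,s}$ and $1-q_{t,s}$ recovers $x_{t,s}$ and $y_{t,s}$ respectively.

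The main obstacle I anticipate is not a single computation but two points of rigor. First, I must justify the independence and uniformity of $r_t$ relative to the past, which drives the factorization $\Prob(E_{t,s}) = \frac{1}{t}\,\Prob(\text{active at } t)$; this holds for arbitrary, even history-dependent, policies in the forward direction, whereas in the converse it suffices to exhibit one Markovian policy realizing the given vector, so the map is a surjection rather than a bijection. Second, the ratio $q_{t,s}$ is ill-defined when $x_{t,s}+y_{t,s}=0$; here I would observe that the $\textsc{Pol}$ constraint forces $\Prob(\text{active at } t)=0$, so that state is unreachable and the offer probability there may be fixed arbitrarily without altering the induced measure.
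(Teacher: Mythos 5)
Your proposal is correct and follows essentially the same route as the paper's own proof: the paper (Propositions in Appendix~\ref{sec:app:lp_via_mdp}) likewise identifies $(x_{t,s},y_{t,s})$ with the occupancy measures of reaching state $(t,s)$ and offering/passing, derives the flow-conservation equalities from the uniformity and independence of partial ranks, and proves the converse by defining the randomized policy with offer probability ${t x_{t,s}}/{\bigl(\sum_{\sigma=1}^{t-1} y_{t-1,\sigma}+(1-p)x_{t-1,\sigma}\bigr)}$ and checking by induction on $t$ that its occupancy measures reproduce $(\x,\y)$. Your explicit treatment of the degenerate case $x_{t,s}+y_{t,s}=0$ (unreachable states) is a small point of rigor the paper leaves implicit, but it does not change the argument.
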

The variables $x_{t,s}$ represent the probability of reaching candidate $t$ and making an offer to that candidate when that candidate has partial rank $s\in [t]$. Likewise, variables $y_{t,s}$ represent the probability of reaching candidate $t$ and \modif{not making an offer} when this candidate's partial rank is $s\in [t]$. We note that although the use of LP formulations in MDP is a somewhat standard technique, see e.g.\ \citet{puterman2014markov}, the recent literature in secretary problems and related online selection models does not appear to make an explicit connection between LP's used in analysis and the underlying MDP formulation. 

Problems solved via MDP can typically be formulated as reward models, where each action taken by the DM generates some immediate reward. Objectives in classical secretary problems fit in this framework, as the reward (e.g.\ the probability of selecting the top candidate) depends only on the current state (the number $t$ of observed candidates so far and the current candidate's partial rank $r_t = s$), and on the DM's action (make an offer or not); see Section \ref{sec:subsec:warm_up} for an example. Our robust objective, however, cannot be easily written as a reward depending only on $r_t=s$. Thus, we split the analysis into two stages. In the first stage, we deal with the space of policies 
and formulate an MDP for our model with a generic utility function. 
The feasible region of this MDP's LP formulation corresponds to $\textsc{Pol}$ and is independent of the utility function chosen; therefore, it characterizes all possible policies for the $\SPUA$. In the second stage, we use the structural result in Theorem~\ref{thm:main_polyhedron} to obtain a linear program that finds the largest robust ratio.
\begin{theorem}\label{thm:main_lp}
	The best robust ratio $\gamma_n^*$ for the $\SPUA$ equals the optimal value of the linear program
	
	{\hfil $(LP)_{n,p}$ \hfil \begin{tabular}{rp{0.7\linewidth}}
		$\displaystyle\max_{\x\geq 0}$ & \quad \quad $\gamma$ \\
		\text{{\textnormal{s.t.}}} &  {\vspace{-1cm}\begin{align*}
				x_{t,s} & \leq \frac{1}{t} \left( 1 - p \sum_{\tau=1}^{t-1} \sum_{\sigma=1}^\tau x_{\tau,\sigma}  \right) & \forall t\in [n], s\in [t] \\
				\gamma&\leq \frac{p}{1-(1-p)^k} \sum_{t=1}^n \sum_{s=1}^t x_{t,s}\Prob(R_t\leq k \mid r_t=s)& \forall k\in [n] ,
		\end{align*}\vspace{-0.5cm}}
	\end{tabular}}

where $\Prob(R_t\leq k \mid r_t=s)= \sum_{i=s}^{k\wedge (n-t+s)} \binom{i-1}{s-1}\binom{n-i}{t-s}/\binom{n}{t}$ is the probability the $t$-th candidate is ranked in the top $k$ given that her partial rank is $s$.

Moreover, given an optimal solution $(\x^*,\gamma_n^*)$ of $(LP)_{n,p}$, the (randomized) policy $\mathcal{P}^*$ that at state $(t,s)$ makes an offer with probability ${t x_{t,s}^*}/{\left( 1 - p \sum_{\tau=1}^{t-1} \sum_{\sigma=1}^\tau x_{\tau,\sigma}^*  \right)}$ is $\gamma_n^*$-robust.
\end{theorem}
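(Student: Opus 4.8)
The plan is to build directly on the characterization in Theorem~\ref{thm:main_polyhedron}, which already reduces the set of all policies to the polytope $\textsc{Pol}$ in the variables $(\x,\y)$. The argument then splits into two essentially independent parts: (i) project $\textsc{Pol}$ onto the $\x$-coordinates to recover the first family of constraints of $(LP)_{n,p}$, and (ii) rewrite the robust ratio $\gamma_{\mathcal P}$ of~\eqref{eq:robust_ratio_pol} as a function of $\x$ to recover the objective together with the second family of constraints. The optimal policy in the statement will then follow by substituting the optimal $\x^*$ back into the policy formula from Theorem~\ref{thm:main_polyhedron}.

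For part (i), I would first note that the defining equation of $\textsc{Pol}$ forces $x_{t,s}+y_{t,s}$ to be independent of $s$: writing $P_t \doteq \sum_{\sigma=1}^{t-1}\left(y_{t-1,\sigma}+(1-p)x_{t-1,\sigma}\right)$ for $t>1$ and $P_1 \doteq 1$, the constraint reads $x_{t,s}+y_{t,s}=P_t/t$ for every $s\in[t]$. Summing over $s$ gives $\sum_\sigma (x_{t,\sigma}+y_{t,\sigma})=P_t$; substituting this into the definition of $P_{t+1}$ yields the one-step recursion $P_{t+1}=P_t-p\sum_{\sigma=1}^t x_{t,\sigma}$, which telescopes from $P_1=1$ to the closed form $P_t=1-p\sum_{\tau=1}^{t-1}\sum_{\sigma=1}^\tau x_{\tau,\sigma}$. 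Since $y_{t,s}=P_t/t-x_{t,s}$, the requirement $\y\geq 0$ is equivalent to $x_{t,s}\leq P_t/t$, which is exactly the first constraint of $(LP)_{n,p}$; conversely, any $\x\geq 0$ obeying that constraint extends to a point of $\textsc{Pol}$ by \emph{defining} $y_{t,s}\doteq P_t/t-x_{t,s}\geq 0$. Thus the feasible $\x$-region of $(LP)_{n,p}$ is precisely the projection of $\textsc{Pol}$.

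For part (ii), I would compute the numerator and denominator in~\eqref{eq:robust_ratio_pol} separately. Interpreting each candidate $i$ as carrying an independent $\mathrm{Bernoulli}(p)$ willingness-to-accept variable, the denominator $\Prob(\text{at least one top }k\text{ accepts})$ is the probability that at least one of the $k$ i.i.d. coins attached to the candidates of true rank $1,\dots,k$ turns up, namely $1-(1-p)^k$. For the numerator, the event that $\mathcal P$ selects and is accepted by a top $k$ candidate decomposes over the state $(t,s)$ at which the winning offer is made: by Theorem~\ref{thm:main_polyhedron} the probability of reaching $(t,s)$ and extending an offer equals $x_{t,s}$, that offer is accepted with probability $p$ independently of everything else, and—using the classical mutual independence of the relative ranks—conditioned on $r_t=s$ the true rank satisfies $R_t\leq k$ with probability $\Prob(R_t\leq k\mid r_t=s)$, independently of the past trajectory and of the acceptance coin. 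Multiplying and summing gives numerator $=p\sum_{t,s}x_{t,s}\Prob(R_t\leq k\mid r_t=s)$, so $\gamma_{\mathcal P}=\min_k \frac{p}{1-(1-p)^k}\sum_{t,s}x_{t,s}\Prob(R_t\leq k\mid r_t=s)$. Maximizing this over the projected feasible region, after epigraph-linearizing the minimum with the scalar $\gamma$, is exactly $(LP)_{n,p}$.

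The stated policy is then read off from Theorem~\ref{thm:main_polyhedron}: for an optimal $\x^*$, setting $y^*_{t,s}=P_t/t-x^*_{t,s}$ and inserting the closed form for $P_t$ turns the offer probability $t x^*_{t,s}/\big(\sum_\sigma y^*_{t-1,\sigma}+(1-p)x^*_{t-1,\sigma}\big)$ into $t x^*_{t,s}/\big(1-p\sum_{\tau=1}^{t-1}\sum_{\sigma=1}^\tau x^*_{\tau,\sigma}\big)$, and the equivalence above certifies it is $\gamma_n^*$-robust. I expect the main obstacle to be part (ii): carefully justifying the factorization of the numerator—in particular that conditioning on reaching $(t,s)$, which encodes past rejections, does not perturb the conditional rank distribution $\Prob(R_t\leq k\mid r_t=s)$—and fixing the willingness interpretation that makes the denominator equal $1-(1-p)^k$. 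Part (i) is essentially bookkeeping once the recursion for $P_t$ is in hand.
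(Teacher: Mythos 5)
Your proposal is correct and follows essentially the same route as the paper: it builds on Theorem~\ref{thm:main_polyhedron}, rewrites the collection probability as $p\sum_{t,s}x_{t,s}\Prob(R_t\leq k\mid r_t=s)$ (the paper's Proposition~\ref{prop:objective_is_linear}), epigraph-linearizes the minimum with $\gamma$, and projects $\textsc{Pol}$ onto the $\x$-variables, recovering the policy from the correspondence in Theorem~\ref{thm:main_polyhedron}. The only difference is one of detail, in your favor: your telescoping recursion for $P_t$ makes explicit the projection step the paper dismisses as routine Fourier--Motzkin elimination, and the factorization you flag as the main obstacle is justified exactly as in the paper, via the mutual independence of the partial ranks (so that $\{R_t\leq k\}$, conditioned on $r_t=s$, is independent of the past trajectory and the acceptance coins).
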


We show that $\gamma_\mathcal{P}$ can be written as the minimum of $n$ linear functions on the $\x$ variables in $\textsc{Pol}$, where these variables correspond to a policy's probability of making an offer in a given state. Thus our problem can be written as the maximum of a concave piecewise linear function over $\textsc{Pol}$, which we linearize with the variable $\gamma$. 
By projecting the feasible region onto the $(\x,\gamma)$ variables 
we obtain $(LP)_{n,p}$.

As a byproduct of our analysis via MDP, we show that $\gamma_n^*$ is non-increasing in $n$ for fixed $p\in (0,1]$  (Lemma~\ref{lem:non_increasing_gamma}), and thus $\lim_{n \to \infty} \gamma_n^*=\gamma_\infty^*$ exists. 
We show that this limit corresponds to the optimal value of an infinite version of $(LP)_{n,p}$ from Theorem~\ref{thm:main_lp}, where $n$ tends to infinity and we replace sums at time $t$ with integrals (see Section~\ref{sec:infinite_LP}). 
This allows us to show upper and lower bounds for $\gamma_n^*$ by analyzing $\gamma_\infty^*$. Our first result in this vein gives upper bounds on $\gamma_\infty^*$.
\begin{theorem}\label{thm:main_upper_bound}
	For any $p\in (0,1]$, $\gamma_\infty^*(p) \leq \min \left\{p^{{p}/{(1-p)}}, 1/\beta \right\}$, where $1/\beta\approx 0.745$ and $\beta$ is the (unique) solution of the equation $\int_0^1 {(y(1-\log y) + \beta-1)^{-1} }\mathrm{d}y=1$.
\end{theorem}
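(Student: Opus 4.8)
The plan is to invoke weak LP duality on the infinite linear program of Section~\ref{sec:infinite_LP}, whose optimal value is $\gamma_\infty^*$ (the limit of the programs $(LP)_{n,p}$ of Theorem~\ref{thm:main_lp}). Both bounds arise by \emph{aggregating} the family of objective constraints indexed by $k$ against a nonnegative multiplier sequence $(\mu_k)$ with $\sum_k \mu_k = 1$: since every feasible $(\x,\gamma)$ meets each constraint, it meets their convex combination, so the single aggregated program is a relaxation and its value upper-bounds $\gamma_\infty^*$. After exchanging the order of summation the aggregated bound reads $\gamma \le \sum_{t,s} x_{t,s}\,c_{t,s}$ with $c_{t,s} = \sum_k \mu_k \frac{p}{1-(1-p)^k}\Prob(R_t\le k\mid r_t=s)$, and the residual problem of maximizing this linear functional against the depleting budget $x_{t,s}\le \frac1t(1-p\sum_{\tau<t}\sum_\sigma x_{\tau,\sigma})$ is solved greedily in the scaling limit. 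Each admissible $(\mu_k)$ gives a valid bound, and the two terms of the minimum correspond to two choices of $(\mu_k)$.

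For the bound $p^{p/(1-p)}$ I would concentrate $\mu$ on $k=1$. Then $c_{t,s}$ vanishes unless $s=1$, where it equals $\Prob(R_t=1\mid r_t=1)=t/n$, so the budget is best spent entirely on offers to the current best, i.e.\ on the variables $x_{t,1}$. Writing $W(\tau)$ for the continuous cumulative offer mass, the relaxed program becomes
\[
\max \int_0^1 \tau\, W'(\tau)\,\mathrm{d}\tau \quad\text{s.t.}\quad 0\le W'(\tau)\le \frac{1-pW(\tau)}{\tau}.
\]
Its solution conserves budget up to a threshold $a$ and then saturates the constraint; integrating $W'=(1-pW)/\tau$ gives $1-pW(\tau)=(a/\tau)^p$ on $[a,1]$ and objective value $\frac{a^p-a}{1-p}$. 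Optimizing over $a$ yields $a^*=p^{1/(1-p)}$ and value $p^{p/(1-p)}$, which both certifies the bound and recovers the announced threshold $p^{1/(1-p)}$.

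The bound $1/\beta$ is the substantive part and needs a genuinely $k$-dependent multiplier. Its defining feature is that the value is \emph{independent of $p$}: I would place $(\mu_k)$ on the scale $k\asymp 1/p$ (equivalently pass to the limit where $1-(1-p)^k\to 1-e^{-pk}$), so the explicit $p$ factors cancel against the $p$ driving the budget depletion, leaving a $p$-free continuous optimization. The greedy/complementary-slackness structure of this aggregated program then reduces it to a one-dimensional variational problem whose balance condition is precisely the ODE that integrates to $\int_0^1 \frac{\mathrm{d}y}{y(1-\log y)+\beta-1}=1$; here the primitive $y(1-\log y)=\int_0^y \log(1/u)\,\mathrm{d}u$ is the continuous analogue of the accumulation of left-to-right maxima in the model. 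Solving for the constant that makes the program consistent gives $1/\beta\approx 0.745$, the same constant governing the i.i.d.\ prophet inequality.

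I expect the second bound to be the main obstacle: identifying the correct multiplier scaling, checking that the $p$-dependence truly cancels, and matching the extremal continuous program to the integral equation for $\beta$ (in particular controlling $\Prob(R_t\le k\mid r_t=s)$ as $k\to\infty$ with $t=\tau n\to\infty$). As a consistency check I would verify the two regimes: $p^{p/(1-p)}\to 1/e$ as $p\to 1$, recovering the classical secretary bound, while $p^{p/(1-p)}\to 1$ as $p\to 0$, so the minimum is controlled by $p^{p/(1-p)}$ for $p$ near $1$ and by the $p$-free constant $1/\beta$ for small $p$, consistent with the crossover near $p^*$.
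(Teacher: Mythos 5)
Your derivation of the first bound is sound and essentially matches the paper's: concentrating the multiplier on $k=1$ is exactly the paper's relaxation, except that the paper then invokes \citet{smith1975secretary}'s asymptotic $p^{1/(1-p)}$ for the probability of hiring the top candidate (dividing by $p$ and using Lemma~\ref{lem:non_increasing_gamma} to pass to the limit), while you solve the resulting one-dimensional continuous program directly. Your calculation is correct, but note that you assert rather than prove the threshold structure; for an \emph{upper} bound you must show no feasible solution beats $p^{p/(1-p)}$. This is fixable: setting $V(t)=1-pW(t)$, the constraint gives $(\log V)'\ge -p/t$, hence $V(t)\le\min\left\{1,\,V(1)\,t^{-p}\right\}$, and this envelope is attained exactly by the threshold solution, after which optimizing over $V(1)$ gives the claim.

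The second bound is where the proposal breaks down, and it is a genuine gap rather than a fixable omission. You propose to certify $1/\beta$ by exhibiting multipliers $(\mu_k)$ at scale $k\asymp 1/p$ whose aggregated constraint yields a $p$-free variational problem whose balance condition "is precisely" the Kertz equation $\int_0^1 (y(1-\log y)+\beta-1)^{-1}\,\mathrm{d}y=1$. But you never specify the multipliers, never derive the aggregated program, and never connect it to that equation; you yourself list these as the remaining obstacles, and they constitute the entire content of the bound. There is no reason offered why an aggregation of the secretary constraints should reproduce Kertz's equation: that equation arises from the optimal-stopping recursion for i.i.d.\ random variables, an object that never appears in your argument. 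The paper's proof supplies exactly this missing bridge, and it is not a duality computation at all: it uses the utility characterization (Proposition~\ref{prop:utility_characterization}) to turn any $\gamma$-robust ordinal algorithm into a stopping rule for the i.i.d.\ prophet problem on $m\approx pn$ values (Lemma~\ref{lem:reduction_sec_to_proph}, which requires a simulation argument and Chernoff bounds to handle the random number of available candidates), and then invokes the Hill--Kertz hard instances---reformulated so that $\E[\max_{i\le m}X_i]\ge 1/m^3$, to keep the additive error negligible---to force $\gamma\le 1/\beta$. Whether a direct dual certificate of the kind you envision exists is an interesting question (even granting strong duality, the dual would certify $\gamma_\infty^*$, which need not equal $1/\beta$), but as written your proposal asserts the conclusion for the substantive half of the theorem rather than proving it.
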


To show $\gamma_\infty^*\leq p^{p/(1-p)}$, we relax all constraints in the robust ratio except $k=1$. This becomes the problem of maximizing the probability of hiring the top candidate, which has a known asymptotic solution of $p^{1/(1-p)}$ \citep{smith1975secretary}. For $\gamma_\infty^*(p)\leq 1/\beta$, we show that any $\gamma$-robust ordinal algorithm can be used to construct an algorithm for i.i.d.\ prophet inequality problems with a multiplicative loss of $(1+o(1))\gamma$ and an additional $o(1)$ additive error. Using a slight modification of the impossibility result by~\cite{hill1982comparisons} for the i.i.d.\ prophet inequality, we conclude that $\gamma_\infty^*$ cannot be larger than $1/\beta$.

By constructing solutions of the infinite LP, we can provide lower bounds for $\gamma_n^*$. For $1/k \geq p > 1/(k+1)$ with integer $k$, the policy that skips the first $1/e$ fraction of candidates and then makes an offer to any top $k$ candidate afterwards obtains a robust ratio of at least $1/e$. The following result gives improved bounds for $\gamma_\infty^*(p)$.
\begin{theorem}\label{thm:main_lower_bound} Let $p^*\approx 0.594$ be the solution of $p^{{(2-p)}/{(1-p)}} = (1-p)^2$. There is a solution of the infinite LP for $p\geq p^*$ that guarantees $\gamma_n^*\geq \gamma_\infty^*(p) = p^{{p}/{(1-p)}}$. For $p\leq p^*$ we have $\gamma_\infty^*(p) \geq (p^*)^{p^*/(1-p^*)}\approx 0.466$. Moreover, for $p\to 0$, we obtain $\gamma_\infty(p) \geq 0.51$. 
\end{theorem}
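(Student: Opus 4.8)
The plan is to exhibit an explicit feasible point of the infinite LP that realizes the threshold policy from the abstract, and to verify all of its robust-ratio constraints: the case $k=1$ will give exactly $p^{p/(1-p)}$, and the hypothesis $p\ge p^*$ will be precisely what keeps the remaining constraints slack. Write $a=p^{1/(1-p)}$ for the threshold, so $a^{1-p}=p$. The candidate solution is the policy that discards the first $a$-fraction of candidates and then makes an offer to every candidate of partial rank $1$; in the infinite LP this offers only at $s=1$. Its survival function $S(u)$ (the probability the process is still active at normalized time $u$) solves $S'(u)=-\tfrac pu S(u)$ on $(a,1]$ with $S\equiv 1$ on $[0,a]$, since a best-so-far candidate appears at rate $1/u$ and is accepted with probability $p$; hence $S(u)=(a/u)^p$. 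A best-so-far candidate at time $u$ has global rank at most $k$ with probability $1-(1-u)^k$, so the selection probability attached to scenario $k$ is $a^p\int_a^1 u^{-p-1}\bigl(1-(1-u)^k\bigr)\,\mathrm{d}u$, and the $k$-th robust-ratio constraint takes value $\rho_k=\tfrac{p}{1-(1-p)^k}\,a^p\int_a^1 u^{-p-1}\bigl(1-(1-u)^k\bigr)\,\mathrm{d}u$. A direct computation using $a^{1-p}=p$ gives $\rho_1=a^p=p^{p/(1-p)}$, so the robust ratio of this solution is $\min_k\rho_k$, and the first claim reduces to showing $\rho_k\ge\rho_1$ for every $k\ge 2$.

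The key step is this comparison across scenarios. Substituting $v=1-u$ and setting $m_k=\int_0^{1-a}(1-v)^{-p}v^k\,\mathrm{d}v$, one checks $m_0=1$ (again using $a^{1-p}=p$), and after telescoping the increments $\rho_{k}-\rho_{k-1}$ one finds that $\rho_k\ge\rho_1$ for all $k$ is implied by $m_j\ge(1-p)^j$ for every $j\ge 1$. I would prove the latter by induction. The $m_k$ are the moments of the positive measure $(1-v)^{-p}\,\mathrm{d}v$ on $[0,1-a]$, so Cauchy--Schwarz yields log-convexity $m_{k-1}m_{k+1}\ge m_k^2$, whence the consecutive ratios $m_k/m_{k-1}$ are non-decreasing in $k$; it therefore suffices that $m_1/m_0=m_1\ge 1-p$. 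A short computation shows $m_1\ge 1-p\iff p^{(2-p)/(1-p)}\ge(1-p)^2$, which by the definition of $p^*$ holds precisely for $p\ge p^*$. Then $m_k\ge(1-p)m_{k-1}\ge\cdots\ge(1-p)^k m_0=(1-p)^k$, giving $\rho_k\ge\rho_1$ for all $k$ and hence a feasible infinite-LP solution of value $p^{p/(1-p)}$. Combined with the upper bound of Theorem~\ref{thm:main_upper_bound} this gives $\gamma_\infty^*(p)=p^{p/(1-p)}$, and $\gamma_n^*\ge\gamma_\infty^*$ follows since $\gamma_n^*$ is non-increasing in $n$ (Lemma~\ref{lem:non_increasing_gamma}).

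For $p\le p^*$ the plan is to reduce to the value already computed at $p^*$ by showing that $\gamma_\infty^*$ is non-increasing in $p$, so that $\gamma_\infty^*(p)\ge\gamma_\infty^*(p^*)=(p^*)^{p^*/(1-p^*)}$. I would establish this monotonicity by a coupling: fix the random permutation and a single uniform acceptance seed $U_t$ per candidate, so that acceptance at probability $p_1$ implies acceptance at any $p_2\ge p_1$; a policy tuned for the larger $p_2$ can then be run in the $p_1$-world, where rejections are strictly more frequent and so only create additional continuation opportunities, while the smaller benchmark $1-(1-p_1)^k$ pushes in the same (favorable) direction. I expect this coupling to be the main obstacle, precisely because lowering $p$ simultaneously relaxes the feasibility constraint on offer probabilities and shrinks the multiplier $p/(1-(1-p)^k)$ in the robust-ratio constraints; the argument must show that the gain from extra continuation dominates, and carrying it out cleanly in the infinite-LP limit (rather than at fixed $n$) is the delicate point.
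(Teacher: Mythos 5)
Your proof splits into two halves of very different status.

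\textbf{The case $p\geq p^*$ is correct, and your key lemma is genuinely different from (and cleaner than) the paper's.} Your feasible solution is exactly the paper's: the threshold policy $\alpha(t,1)=p^{p/(1-p)}/t^{1+p}$ on $[p^{1/(1-p)},1]$ (an instance of Proposition~\ref{prop:feasible_sol} with $t_1=p^{1/(1-p)}$, $t_2=t_3=\cdots=1$), and your reduction of $\rho_k\geq\rho_1$ to the moment inequalities $m_j\geq(1-p)^j$ is the same reduction the paper performs with the mediant inequality. The difference is in how the moment inequalities are proved. The paper's Lemma~\ref{lem:ratio_bound} establishes $\int_{p^{1/(1-p)}}^1(1-t)^\ell t^{-p}\,\mathrm{d}t\geq(1-p)^\ell$ by a case analysis over $\ell=0,1,2,3,4$ and $\ell\geq 5$, using the approximation $p^{1/(1-p)}\leq(1+p)/(2e)$, truncated binomial series, and numerically solved polynomial inequalities. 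You instead observe that the $m_j$ are moments of the positive measure $(1-v)^{-p}\,\mathrm{d}v$, so Cauchy--Schwarz gives log-convexity $m_{k-1}m_{k+1}\geq m_k^2$, the ratios $m_k/m_{k-1}$ are non-decreasing, and everything reduces to $m_1\geq 1-p$, which is equivalent to $p^{(2-p)/(1-p)}\geq(1-p)^2$, i.e.\ to $p\geq p^*$ by definition. This is a strictly tidier argument: it isolates the $\ell=1$ constraint as the binding one, explains \emph{why} $p^*$ is defined by that equation, and eliminates all numerical casework. (It also recovers the paper's tightness remark: for $p<p^*$ the $k=2$ constraint fails.)

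\textbf{The case $p\leq p^*$ has a genuine gap.} You reduce it to the claim that $\gamma_\infty^*(p)$ is non-increasing in $p$, and propose a coupling proof that you do not carry out; you yourself flag it as ``the main obstacle.'' That flag is warranted: monotonicity (and convexity) of the robust ratio in $p$ is stated by the paper as an \emph{open problem} in its concluding section (``this remains to be proved''), so your plan rests on a result the authors could not establish. The difficulty is exactly the one you name but do not resolve: running a policy tuned for $p_2$ in a $p_1<p_2$ world changes the numerator $p\sum_{t,s}x_{t,s}\Prob(R_t\leq k\mid r_t=s)$ and the denominator $1-(1-p)^k$ of every scenario constraint simultaneously, and a pathwise coupling on acceptance seeds does not by itself show the ratio moves in the right direction for all $k$ at once. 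The paper sidesteps monotonicity entirely: writing $p=(1-\varepsilon)p^*$, it transfers the optimal $p^*$-solution $\alpha^*$ to a feasible solution of $(CLP)_p$ by spreading offers geometrically across partial ranks, $\alpha(t,s)=\varepsilon^{s-1}\alpha^*(t,1)$, verifies feasibility using $\sum_{\sigma\geq 1}\varepsilon^{\sigma-1}=1/(1-\varepsilon)$, and proves the value bound via the binomial identity $\sum_{s=1}^\ell\varepsilon^{s-1}\binom{\ell-1}{s-1}t^s(1-t)^{\ell-s}=t(1-(1-\varepsilon)t)^{\ell-1}$ together with Claim~\ref{claim:ratio_bound_small_p}, which reduces to Lemma~\ref{lem:ratio_bound} at $p^*$. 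This yields the comparison $\gamma_\infty^*(p)\geq(p^*)^{p^*/(1-p^*)}$ directly, without any monotonicity claim. To complete your proof you would need either to prove the open monotonicity conjecture or to replace your coupling by an explicit construction of this type; note that your moment trick composes cleanly with the paper's construction, since Claim~\ref{claim:ratio_bound_small_p} follows from the $m_j\geq(1-p^*)^j$ inequalities you already established at $p^*$ by the same binomial expansion.
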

To prove this result, we use the following general procedure to construct feasible solutions for the infinite LP.
For any numbers $0 < t_1 \leq t_2 \leq \cdots \leq t_k \leq \cdots \leq 1$, there is a policy that makes offers to any candidate with partial rank $ r_t \in [k] $ when a fraction $t_k$ of the total number of candidates has been observed
(Proposition~\ref{prop:feasible_sol}). For $p\geq p^*$, the policy corresponding to $t_1 = p^{1/(1-p)}$ and $t_2=t_3=\cdots =1$ has a robust ratio of at least $p^{p/(1-p)}$. For $p\leq p^*$, we show how to transform the solution for $p^*$ into a solution for $p$ with an objective value at least as good as the value $\gamma_{\infty}^*(p^*)= (p^*)^{p^*/(1-p^*)}$. For values of $p$ close to $0$, we construct a feasible solution of the infinite LP that guarantees $\gamma_\infty^*(p) \geq 0.51$.

Figure~\ref{fig:current_bounds} depicts the various theoretical bounds we obtain. For reference, we also include numerical results for $\gamma_n^*$ computed by solving $(LP)_{n,p}$ in Theorem~\ref{thm:main_lp} for $n=200$ and with $p$ ranging from $p=10^{-2}$ to $p=1$, with increments of $10^{-3}$. Since $\gamma_n^*$ is nonincreasing in $n$, the numerical values obtained by solving $(LP)_{n,p}$ also provide an upper bound over $\gamma_{\infty}^*$.
\begin{figure}[h!]
	\centering
	\includegraphics[width=0.7\linewidth]{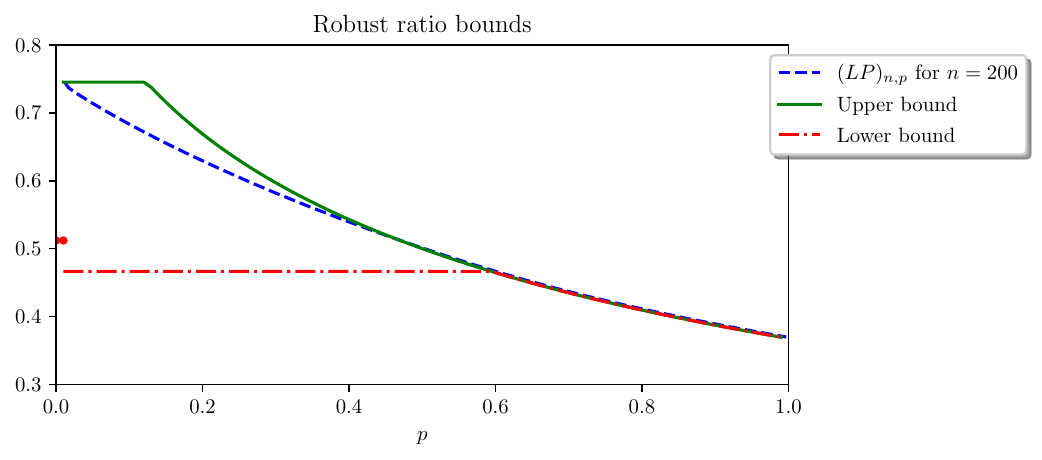}
	\caption{Bounds for $\gamma_\infty^*$ as a function of $p$. The solid line represents the theoretical upper bound given in Theorem~\ref{thm:main_upper_bound}. The dashed-dotted line corresponds to the theoretical lower bound given in Theorem~\ref{thm:main_lower_bound}; for $p$ close to $0$, the guarantee rises to $0.51$. In dashed line we present numerical results by solving $(LP)_{n,p}$ for $n=200$ candidates.}\label{fig:current_bounds}
\end{figure}

We follow this introduction with a brief literature review. In Section~\ref{sec:prelims} we present preliminaries, including MDP notation and an alternative characterization of the robust ratio in terms of utility functions. In Section~\ref{sec:lp_finite_formulation} we present the MDP framework and use it to prove Theorems \ref{thm:main_polyhedron} and \ref{thm:main_lp}. In Section~\ref{sec:infinite_LP} we introduce the infinite relaxation of (LP), then prove Theorem~\ref{thm:main_upper_bound} in Section~\ref{sec:upper_bound}. In Section~\ref{sec:exact_sol_large_p} we prove Theorem~\ref{thm:main_lower_bound}. In Section~\ref{sec:experiments} we present a numerical comparison between the policies obtained by solving $(LP)_{n,p}$ and other benchmarks policies. We conclude in Section \ref{sec:conc}, and an appendix includes proofs and analysis omitted from the main article.

\section{Related Work}\label{sec:lit_review}

\paragraph{Online advertising and online selection} Online advertising has been extensively studied from the viewpoint of two-sided markets: advertisers and platform. There is extensive work in auction mechanisms to select ads (e.g.\ second-price auctions, the VCG mechanism, etc.), and the payment systems between platforms and advertisers (pay-per-click, pay-for-impression, etc.) \citep{devanur2009price,edelman2007internet,fridgeirsdottir2018cost}; see also \citet{choi2020online} for a review. On the other hand, works relating the platform, advertisers, and web users have been studied mainly from a learning perspective, to improve ad targeting \citep{devanur2009price,farahat2012effective,hazan2019introduction}. In this work, we also aim to display an ad to a potentially interested user. Multiple online selection problems have been proposed to display ads in online platforms, e.g., packing models~\citep{babaioff2007knapsack, korula2009algorithms}, secretary problems and auctions~\citep{babaioff2008online}, prophet models \citep{alaei2012online} and online models with ''buyback``~\citep{babaioff2008selling}. In our setting, we add the possibility that a user ignores the ad; see e.g.\ \citet{cho2004people,dreze2003internet}. Failure to click on ads has been considered in full-information models \citep{goyal2019online}; however, our setting considers only partial information, where the rank of an incoming customer can only be assessed relative to previously observed customers---a typical occurrence in many online applications. Our model is also disaggregated and looks at each ad individually. Our goal is to understand the right time to display an ad/make offers via the $\SPUA$ and the robust ratio for each individual ad.

\paragraph{Online algorithms and arrival models} Online algorithms have been extensively studied for adversarial arrivals~\citep{borodin2005online}. This \emph{worst-case} viewpoint gives robust algorithms against any input sequence, which tend to be conservative. 
Conversely, some models assume distributional information about the inputs~\citep{kertz1986stop,kleywegt1998dynamic,lucier2017economic}. The random order model lies in between these two viewpoints, and perhaps the most studied example is the secretary problem \citep{dynkin1963optimum,gilbert1966recognizing,lindley1961dynamic}. Random order models have also been applied in Adword problems \citep{devanur2009adwords}, online LP's \citep{agrawal2014dynamic} and online knapsacks~\citep{babaioff2007knapsack,kesselheim2014primal}, among others.

\paragraph{Secretary problems} Martin Gadner popularized the secretary problem in his 1960 \emph{Mathematical Games} column; for a historical review, see \citet{ferguson1989solved} and also the classical survey by \citet{freeman1983secretary}. For the classical secretary problem, the optimal strategy that observes the first $n/e$ candidates and thereafter selects the best candidate was computed by \citet{lindley1961dynamic,gilbert1966recognizing}. The model has been extensively studied in ordinal/ranked-based settings \citep{lindley1961dynamic,gusein1966problem,vanderbei1980optimal,buchbinder2014secretary} as well as cardinal/value-based settings \citep{bateni2013submodular,kleinberg2005multiple}.

A large body of work has been dedicated to augment the secretary problem. Variations include cardinality constraints \citep{buchbinder2014secretary,vanderbei1980optimal,kleinberg2005multiple}, knapsack constraints \citep{babaioff2007knapsack}, and matroid constraints \citep{soto2013matroid,feldman2014simple,lachish2014log}. Model variants also incorporate different arrival processes, such as Markov chains \citep{hlynka1988secretary} and more general processes \citep{dutting2020secretaries}. Closer to our problem are the data-driven variations of the model \citep{correa2020sample,correa2021secretary,kaplan2020competitive}, where samples from the arriving candidates are provided to the decision maker. Our model can be interpreted as an online version of sampling, where a candidate rejecting the decision maker's offer is tantamount to a sample. This also bears similarity to the exploration-exploitation paradigm often found in online learning and bandit problems \citep{cesa2006prediction,hazan2019introduction,freund1999adaptive}.

\paragraph{Uncertain availability in secretary problems} The $\SPUA$ is studied by \citet{smith1975secretary} with the goal of selecting the top candidate --- $k=1$ in \eqref{eq:robust_ratio_pol} --- who gives an asymptotic probability of success of $p^{1/{(1-p)}}$. If the top candidate rejects the offer, this leads to zero value, which is perhaps excessively pessimistic in scenarios where other competent candidates could accept. \citet{tamaki1991secretary} considers maximizing the probability of selecting the top candidate \emph{among} the candidates that will accept the offer. Although more realistic, this objective still gives zero value when the top candidate that accepts is missed because she arrives early in the sequence. In our approach, we make offers to candidates even if we have already missed the top candidate that accepts the offer; this is also appealing in utility/value-based settings (see Proposition~\ref{prop:utility_characterization}). We also further the understanding of the model and our objective by presenting closed-form solutions and bounds. See also \citet{bruss1987optimal,presman1973best,cowan1979optimal}.

\paragraph{Linear programs in online selection} \modif{Linear programming has been used extensively in online selection \citep{agrawal2014dynamic,beyhaghi2021improved,epstein2024selection,kesselheim2014primal}. Typically, the LP is used as a structured bound over a benchmark that the algorithm designer can compare with. Our approach is different, as we provide an exact formulation of our robust objective. In secretary problems, early work used mostly MDP's \citep{lindley1961dynamic,smith1975secretary,tamaki1991secretary}, while LP formulations were recently introduced by \citet{buchbinder2014secretary}; subsequently, multiple formulations have been used to solve variants of the secretary problem \citep{chan2014revealing,correa2020sample,dutting2020secretaries}. We extend this line of work and use an MDP to derive the exact polyhedron that encodes policies for the $\SPUA$; this helps explain why some LP formulations in secretary problems are exact (see, Subsection~\ref{sec:subsec:warm_up}). \citet{jiang2021tight,jiang2023tightness,perez2022iid} are closer to our work, as these studies characterize the optimal policies for prophet inequalities. Further connections between MDP and LPs in related models have been studied mostly in approximate regimes \citep{adelman2007dynamic,torrico2018polyhedral,torrico2022dynamic} 
and particularly in constrained MDP's \citep{altman1999constrained,haskell2013stochastic,haskell2015convex}. To the best of the authors' knowledge, there was previously no explicit connection between the MDP formulation and the exact LP formulation in secretary problems.}

\section{Preliminaries}\label{sec:prelims}



To discuss our model, we use standard MDP notation for secretary problems \citep{dynkin1963optimum,freeman1983secretary,lindley1961dynamic}. An instance is characterized by the number of candidates $n$ and the probability $p \in (0, 1] $ that an offer is accepted. For $t\in [n]$ and $s\in [t]$, a \emph{state} of the system is a pair $(t,s)$ indicating that the candidate currently being evaluated is the $t$-th and the corresponding partial rank is $r_t=s$. To simplify notation, we add the states $(n+1,s)$, $s\in [n+1]$, and the state $\Theta$ as absorbing states where no decisions can be made. For $t<n$, transitions from a state $(t,s)$ to a state $(t+1,\sigma)$ are determined by the random permutation $\pi=(R_1,\ldots,R_n)$. We denote by $S_t \in \{ (t,s) \}_{s\in [t]}$ the random variable indicating the state in the $t$-th stage. A simple calculation shows
\[
\Prob(S_{t+1}=(t+1,\sigma) \mid S_{t}=(t,s) ) = \Prob(r_{t+1}=\sigma\mid r_t=s) = \Prob(S_{t+1}=(t+1,\sigma) ) =  1/(t + 1),
\]
for $t< n$, $s\in [t]$ and $\sigma\in [t+1]$. In other words, partial ranks at each stage are independent. For notational convenience, we assume the equality also holds for $t=n$. 
Let $\mathbf{A}=\{ \textbf{offer}, \textbf{pass}  \}$ be the set of \emph{actions}. For $t\in [n]$, given a state $(t,s)$ and an action $A_t=a\in \mathbf{A}$, the system transitions to a state $S_{t+1}$ with the following probabilities :
\[
P_{((t,s),a),(\tau,\sigma)}=\Prob(S_{t+1}= (\tau,\sigma)\mid S_t=(t,s), A_t =a ) = \begin{cases}
	\frac{1-p}{t+1} & a=\textbf{offer}, \tau=t+1,\sigma\in [\tau] \\
	p & a=\textbf{offer}, (\tau,\sigma)=\Theta \\
	\frac{1}{t+1} & a= \textbf{pass}, \tau=t+1, \sigma\in [\tau].
\end{cases}
\]
The randomness is over the permutation $\pi$ and the random outcome of the $t$-th candidate's decision. We utilize states $(n+1,\sigma)$ as end states and the state $\Theta$ as the state indicating that an offer is accepted from the state $S_t$. A \emph{policy} $\mathcal{P}: \{ (t,s) : t\in [n], s\in [t]   \} \to \mathbf{A}$ is a function that observes a state $(t,s)$ and decides to extend an offer ($\mathcal{P}(t,s)=\textbf{offer}$) or move to the next candidate ($\mathcal{P}(t,s)=\textbf{pass}$). The policy specifies the actions of a decision maker at any point in time. The \emph{initial state} is $S_1=(1,1)$ and the computation (of a policy) is a sequence of state and actions $(1,1),a_1,(2,s_2),a_2,(3,s_3),\ldots$ where the states transitions according to $P_{((t,s),a),(t+1,\sigma)}$ and $a_t=\mathcal{P}(t,s_t)$. Note that the computation always ends in a state $(n+1,\sigma)$ for some $\sigma$ or the state $\Theta$, either because the policy was able to go through all candidates or because some candidate $t$ accepted an offer.


We say that a policy reaches stage $t$ or reaches the $t$-th stage if the computation of a policy contains a state $s_t=(t,s)$ for some $s\in [t]$. We also refer to stages as \emph{times}.

A randomized policy is a function $\mathcal{P}: \{ (t,s) : t\in [n], s\in [t]   \} \to \Delta_{\mathbf{A}} $ where $\Delta_{\mathbf{A}} = \{ (q,1-q) : q\in [0,1] \}$ is the probability simplex over $\mathbf{A}=\{ \textbf{offer}, \textbf{pass} \}$ and $\mathcal{P}(s_t) = (q_t,1-q_t)$ means that $\mathcal{P}$ selects the \textbf{offer} action with probability $q_t$ and otherwise selects \textbf{pass}.

We could also define policies that remember previously visited states and at state $(t,s_t)$ make decisions based on the \emph{history}, $(1,s_1),\ldots,(t,s_t)$. However, MDP theory guarantees that it suffices to consider Markovian policies, which make decisions based only on $(t,s_t)$; see \citet{puterman2014markov}. 

We say that a policy $\mathcal{P}$ \emph{collects} a candidate with rank $k$ if the policy extends an offer to a candidate that has rank $k$ \emph{and} the candidate accepts the offer. Thus our objective is to find a policy that solves
\begin{align*}
	\gamma_n^* &= \max_{\mathcal{P}} \min_{k\in [n]} \frac{\Prob(\mathcal{P} \text{ collects a candidate with rank}\leq k)}{1-(1-p)^k} \\
	&= \max_{\mathcal{P}} \min_{k\in [n]} \Prob(\mathcal{P} \text{ collects a top $k$ candidate}\mid \text{a top $k$ candidate accepts}).
\end{align*}

%

The following result is an alternative characterization of $\gamma_n^*$ based on utility functions. We use this result to relate $\SPUA$ to the i.i.d.\ prophet inequality problem;
the proof appears in Appendix~\ref{sec:app:prelims_missing}. Consider a nonzero utility function $U:[n]\to \R_+$ with $U_1 \geq U_2 \geq \cdots \geq U_n\geq 0$, and any rank-based algorithm $\ALG$ for the $\SPUA$, i.e., $\ALG$ only makes decisions based on the relative ranking of the values observed. In the value setting, if $\ALG$ collects a candidate with overall rank $i$, it obtains value $U_i$. We denote by $U(\ALG)$ the value collected by such an algorithm.
\begin{proposition}\label{prop:utility_characterization}
	Let $\ALG$ be a $\gamma$-robust algorithm for $\SPUA$. For any $U:[n]\to \R_+$ we have $\E[U(\ALG)]\geq \gamma \E[U(\OPT)]$ where $\OPT$ is the maximum value obtained from candidates that accept. Moreover, \[\gamma_n^* = \max_{\ALG}\min \left\{\frac{\E\left[ U(\ALG)  \right]}{\E\left[ U(\OPT)  \right]} : U:[n]\to \R_+, U \neq 0, U_1 \geq U_2 \geq \cdots \geq U_n \geq 0 \right\} . \]
\end{proposition}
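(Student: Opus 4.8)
The plan is to reduce both assertions to a single ordinal object attached to $\ALG$, namely its vector of collection probabilities. First I would fix a rank-based $\ALG$ and set $q_i = \Prob(\ALG \text{ collects a candidate of overall rank } i)$ for $i\in[n]$. Because $\ALG$ is ordinal and acceptance is an independent $p$-coin, the event $\{\ALG\text{ collects rank }i\}$ depends only on the arrival permutation, $\ALG$'s relative-rank decisions, and the acceptance flips, never on the actual values; hence each $q_i$ is the \emph{same} for every utility $U$, which is what lets one algorithm guarantee a bound against all $U$ at once. Since the process halts the instant an offer is accepted, these events are pairwise disjoint and $\E[U(\ALG)] = \sum_{i=1}^n U_i q_i$. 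Writing $Q_k = \sum_{i=1}^k q_i = \Prob(\ALG\text{ collects a top }k\text{ candidate})$, the definition of the robust ratio becomes exactly the statement that $\ALG$ is $\gamma$-robust iff $Q_k \ge \gamma\,(1-(1-p)^k)$ for all $k$.

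Next I would compute the benchmark. Each candidate accepts independently with probability $p$, and since $U$ is nonincreasing the highest-value accepting candidate is the smallest-index acceptor; thus rank $i$ realizes $\OPT$ precisely when rank $i$ accepts and ranks $1,\dots,i-1$ all reject. This gives $\E[U(\OPT)] = \sum_{i=1}^n U_i w_i$ with $w_i = p(1-p)^{i-1}$, whose partial sums satisfy $W_k := \sum_{i=1}^k w_i = 1-(1-p)^k$ — exactly the denominator in the robust ratio. This alignment of the two partial-sum sequences $Q_k$ and $W_k$ is the crux of the whole argument.

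With this in hand the first statement is an Abel-summation (majorization) argument. Setting $\Delta_k = U_k - U_{k+1} \ge 0$ with $U_{n+1}=0$, I would rewrite $\E[U(\ALG)] = \sum_{k=1}^n \Delta_k Q_k$ and $\E[U(\OPT)] = \sum_{k=1}^n \Delta_k W_k$; the $\gamma$-robustness inequalities $Q_k \ge \gamma W_k$ then give $\E[U(\ALG)] \ge \gamma\,\E[U(\OPT)]$ term by term. Monotonicity of $U$ is essential precisely here, as it is what makes every coefficient $\Delta_k$ nonnegative.

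For the identity I would show $\min_U \E[U(\ALG)]/\E[U(\OPT)] = \gamma_{\ALG}$ for each fixed $\ALG$. The bound $\ge$ is immediate from the first statement applied with $\gamma=\gamma_{\ALG}$, noting $\E[U(\OPT)] \ge U_1 w_1 > 0$ whenever $U \neq 0$ since $U_1>0$ and $w_1 = p>0$. For $\le$, let $k^*$ attain $\gamma_{\ALG} = \min_k Q_k/W_k$ and take the threshold utility $U_i = 1$ for $i\le k^*$ and $U_i=0$ otherwise, which is nonincreasing and nonzero; it yields $\E[U(\ALG)]/\E[U(\OPT)] = Q_{k^*}/W_{k^*} = \gamma_{\ALG}$. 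Hence the inner minimum equals $\gamma_{\ALG}$, and maximizing over $\ALG$ gives $\max_{\ALG}\min_U(\cdots) = \max_{\ALG}\gamma_{\ALG} = \gamma_n^*$. The main obstacle is conceptual rather than computational: correctly identifying the law of $\OPT$ — that the top accepting candidate is the smallest-index acceptor, which relies on $U$ being nonincreasing — so that the weights $w_i$ line up with the robust-ratio denominators. Once that alignment is established, the remainder is the standard two-sided majorization-plus-threshold argument.
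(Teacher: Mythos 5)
Your proof is correct, and its second half (threshold utilities $U_i = \mathbf{1}\{i\le k^*\}$ to certify tightness, then optimizing over $\ALG$) is the same as the paper's. The first half, however, is executed differently. The paper perturbs $U$ to $\widehat{U}_i = U_i + \varepsilon^i$ to force strict monotonicity, writes each expectation via its level sets, applies the mediant inequality ${\sum_k a_k}/{\sum_k b_k} \ge \min_k a_k/b_k$ over the levels $\widehat{U}_1,\dots,\widehat{U}_n$ to get $\E[\widehat{U}(\ALG)]/\E[\widehat{U}(\OPT)] \ge \min_k \Prob(\ALG \text{ collects a top } k)/\Prob(\text{a top } k \text{ accepts}) \ge \gamma$, and then takes $\varepsilon \to 0$. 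You instead work with the unperturbed $U$ directly: you identify the law of $\OPT$ explicitly (the best acceptor has rank $i$ with probability $w_i = p(1-p)^{i-1}$, so $\E[U(\OPT)] = \sum_i U_i w_i$ with partial sums $W_k = 1-(1-p)^k$, matching the paper's formula in Section~\ref{sec:experiments}), decompose $\E[U(\ALG)] = \sum_i U_i q_i$, and compare term by term after Abel summation, using $\Delta_k = U_k - U_{k+1} \ge 0$ and $Q_k \ge \gamma W_k$. The two arguments are two faces of the same majorization idea (summation by parts is the discrete layer-cake formula), but yours is more self-contained and elementary: it avoids the tie-breaking perturbation and the limiting step entirely, and it makes explicit the alignment $W_k = 1-(1-p)^k$ that the paper only uses implicitly through the identity $\Prob(\widehat{U}(\OPT)\ge \widehat{U}_k) = \Prob(\text{a top } k \text{ candidate accepts})$. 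A further small structural difference: you prove the pointwise identity $\min_U \E[U(\ALG)]/\E[U(\OPT)] = \gamma_{\ALG}$ for every fixed $\ALG$ and then maximize, which is slightly sharper than the paper's two-sided comparison of $\overline{\gamma}_n$ with $\gamma_n^*$, though equivalent in effect.
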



\section{The LP Formulation}\label{sec:lp_finite_formulation}

In this section, we present the proofs of Theorems \ref{thm:main_polyhedron} and \ref{thm:main_lp}. Our framework is based on MDP and can be used to derive similar LPs in the literature, e.g.\ \citet{buchbinder2014secretary,chan2014revealing,correa2020sample}. As a byproduct, we also show that $\gamma_n^*$ is a nonincreasing sequence in $n$ (Lemma~\ref{lem:non_increasing_gamma}). For ease of explanation, we first present the framework for the classical secretary problem, then we sketch the approach for our model. Technical details are deferred to the appendix.

\subsection{Warm-up: MDP to LP in the classical secretary problem}\label{sec:subsec:warm_up}

We next show how to derive an LP for the classical secretary problem \citep{buchbinder2014secretary} using an MDP framework. In this model, the goal is to maximize the probability of choosing the top candidate, and there is no offer uncertainty.
\begin{theorem}[\cite{buchbinder2014secretary}]\label{thm:classical_secretary_problem}
The maximum probability of choosing the top-ranked candidate in the classical secretary problem is given by
\[
\max\left\{ \sum_{t=1}^n \frac{t}{n} x_{t} :  x_{t} \leq \frac{1}{t} \left( 1- \sum_{\tau=1}^{t-1} x_{\tau}  \right), \forall t\in [n], \x\geq 0  \right\}.
\]
\end{theorem}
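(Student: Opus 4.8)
The plan is to exhibit an exact correspondence between ordinal policies and feasible points of the displayed polyhedron (essentially the occupation measures of the underlying MDP, specialized to the only relevant partial rank $s=1$), matching each policy's success probability to the linear objective. The claimed equality then follows because both directions of the correspondence, policy $\to$ LP point and LP point $\to$ policy, preserve the objective value.

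First, for a fixed policy I would define $x_t$ to be the probability that the policy reaches the $t$-th stage without having selected anyone, observes partial rank $r_t=1$, and makes an offer. Since $p=1$, an offer ends the computation, so if $q_\tau$ denotes the total probability that the policy makes an offer at stage $\tau$ (at any partial rank), the probability of reaching stage $t$ is exactly $1-\sum_{\tau=1}^{t-1}q_\tau$. Using the independence of partial ranks across stages ($\Prob(r_t=1)=1/t$ given any past), the probability of reaching stage $t$ with $r_t=1$ is $\frac1t\bigl(1-\sum_{\tau<t}q_\tau\bigr)$; since $x_t$ is at most this quantity while $q_\tau\ge x_\tau$, I obtain $x_t\le \frac1t\bigl(1-\sum_{\tau<t}x_\tau\bigr)$, i.e.\ feasibility. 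This step only needs the observation that offers to non-best-so-far candidates, which can never select the overall best, merely consume the reachability budget.

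Next I would compute the objective. The policy selects the overall best exactly when it offers at some stage $t$ to a candidate that is both best-so-far and globally best. The crux is that, conditioned on $r_t=1$, the event $\{R_t=1\}$ coincides with the event that no later candidate is best-so-far, a function of $r_{t+1},\dots,r_n$ alone, and is therefore independent of the reaching event and of the stage-$t$ offer decision (both measurable with respect to $r_1,\dots,r_t$ and the policy's internal randomness); its conditional probability is the standard value $\Prob(R_t=1\mid r_t=1)=t/n$. Hence the success probability equals $\sum_{t=1}^n x_t\cdot\frac{t}{n}$, so every policy yields a feasible point of equal objective value, giving that the optimal success probability is at most the LP value. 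For the converse, given a feasible $(x_t)$ I would build the randomized threshold policy that, upon reaching stage $t$ with $r_t=1$, makes an offer with probability $tx_t/\bigl(1-\sum_{\tau<t}x_\tau\bigr)$ and never offers when $r_t>1$; feasibility guarantees this probability lies in $[0,1]$. A short induction on $t$ shows this policy reaches stage $t$ with probability $1-\sum_{\tau<t}x_\tau$ and realizes exactly $x_t$, so by the objective computation its success probability is $\sum_t \frac{t}{n}x_t$, yielding the reverse inequality and hence equality.

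The main obstacle I anticipate is justifying the independence used in the objective computation, namely that given $r_t=1$ the event that the $t$-th candidate is globally best is independent of everything the policy has seen and done through stage $t$. I would handle this by writing $\{R_t=1\}$ (given $r_t=1$) as the event $\{r_j\neq 1 \text{ for all } j>t\}$ on the later partial ranks and invoking the stated stagewise independence of the $r_j$, rather than by a direct permutation count; everything else, the reachability recursion and the induction in the converse, is routine bookkeeping.
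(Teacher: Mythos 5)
Your proof is correct, but it follows a genuinely different route from the paper's. The paper derives this theorem through its MDP pipeline: it writes the optimality equations for the value function $v^*_{(t,s)}$, shows that $v^*$ is an optimal solution of the linear program $(D)$, takes the LP dual to obtain $(P)$ in variables $(x_{t,s},y_{t,s})$ (asserting their interpretation as reach-and-offer / reach-and-pass probabilities), and finally projects the feasible region of $(P)$ onto the variables $(x_{t,1})_{t\in[n]}$ via Fourier--Motzkin elimination. You instead build the policy-to-LP correspondence directly and bidirectionally: you define $x_t$ as the probability of reaching stage $t$ with $r_t=1$ and offering, obtain feasibility from the reachability budget $1-\sum_{\tau<t}q_\tau$ together with $q_\tau\geq x_\tau$, compute the objective by identifying $\{R_t=1\}$ (given $r_t=1$) with the event $\{r_j\neq 1,\ \forall j>t\}$ and invoking stagewise independence of partial ranks, and then invert the map with a randomized threshold policy whose reach probabilities you verify by induction. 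All of these steps are sound, including the key observations that offers at $r_t>1$ can never capture the global best and only consume budget, and that the division by $1-\sum_{\tau<t}x_\tau$ is harmless since feasibility forces $x_t=0$ whenever that quantity vanishes. In effect, your argument is the direct ``occupation measure'' derivation in the spirit of the original \citet{buchbinder2014secretary} proof --- precisely the style of ad-hoc derivation that this paper's framework is designed to mechanize. What your approach buys is a short, elementary, self-contained proof in which every LP quantity has an explicit probabilistic meaning, with no appeal to duality or projection; what the paper's approach buys is generality: by keeping the objective out of the constraints, the same pipeline (value-function LP, dualize, project) uncovers the policy polyhedron $\textsc{Pol}$ independently of the reward structure, which is exactly what the paper needs later for the robust $\SPUA$ objective, a minimum over $k$ that cannot be expressed as a per-state reward and hence does not fit a single scalar-reward correspondence argument like yours without the extra linearization step.
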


We show this as follows:
\begin{enumerate}[leftmargin=*]
	\item\label{enum:framework_1} First, we formulate the secretary problem as a Markov decision process, where we aim to find the highest ranked candidate. Let $v^*_{(t,s)}$ be the maximum probability of selecting the highest ranked candidate in $t+1,\ldots,n$ given that the current state is $(t,s)$. We define $v^*_{(n+1,s)}=0$ for any $s$. The value $v^*$ is called the \emph{value function} and it can be computed via the optimality equations~\citep{puterman2014markov}
	\begin{align}
		v^*_{(t,s)} = \max\left\{  \Prob(R_t=1 \mid r_t=s) , \frac{1}{t+1} \sum_{\sigma=1}^t v^*_{(t+1,\sigma)}   \right\} . \label{eq:opt_eq_simple_sp}
	\end{align}
	The first term in the max operator corresponds to the expected value when the \textbf{offer} action is chosen in state $(t,s)$. The second corresponds to the expected value in stage $t+1$ when we decide to \textbf{pass} in $(t, s)$. Note that $\Prob(R_t=1\mid r_t=s)= t/n$ if $s=1$ and $\Prob(R_t=1\mid r_t=s)=0$ otherwise. The optimality equations~\eqref{eq:opt_eq_simple_sp} can be solved via backwards recursion, and $v^*_{(1,1)}\approx 1/e$ (for large $n$). An optimal policy can be obtained from the optimality equations by choosing at each state an action that attains the maximum, breaking ties arbitrarily.
	
	\item\label{enum:framework_2} 
Using a standard argument \citep{manne60}, it follows that $v^*=( v^*_{(t,s)} )_{t,s}$ is an optimal solution of the linear program $(D)$:

\modif{$(D)$ \begin{tabular}{rp{0.87\linewidth}}
	$\displaystyle$&  $\displaystyle \qquad \min_{v\geq 0}  \,\,\, \,\,\,\qquad v_{(1,1)} $ \\
	 &  {\vspace{-1cm}\begin{align}
			v_{(t,s)}  &\geq  \Prob(R_t=1\mid r_t=s) &\forall t\leq n ,\forall s\leq t  \label{eq:const_1} \\
			v_{(t,s)} &\geq   \frac{1}{1+t} \sum_{\sigma=1}^{t+1} v_{(t+1,\sigma)} & \forall t\leq n,s\leq t \label{eq:const_2}
	\end{align}}\vspace{-0.5cm}
\end{tabular}}


	\item\label{enum:framework_3} Taking the dual of $(D)$, we obtain $(P)$:
	
	\modif{$(P)$ \begin{tabular}{rp{0.87\linewidth}}
		$\displaystyle$&  $\qquad \displaystyle\max_{x,y\geq 0} \,\,\,\,\quad\qquad  \sum_{t=1}^n \frac{t}{n}x_{t,1}$ \\
		&  {\vspace{-1cm}\begin{align}
				x_{1,1} + y_{1,1}  &\leq  1 \label{eq:P_const_1} \\
				x_{t,s} + y_{t,s} & \leq \frac{1}{t} \sum_{\sigma=1}^{t-1} y_{t-1,\sigma} & \forall t \leq n,s\leq t \label{eq:P_const_2}
		\end{align}}\vspace{-0.5cm}
	\end{tabular}}
	
	Variables $x_{t,s}$ are associated with constraints~\eqref{eq:const_1}, $y_{t,s}$ with constraints~\eqref{eq:const_2}. Take any solution $(\x,\y)$ of the problem $(P)$ and note that the objective does not depend on $\y$. Incrementing $\y$ to tighten all constraints does not alter the feasibility of the solution and the objective does not change; thus we can assume that all constraints are tight in $(P)$. Here, $x_{t,s}$ is the probability that a policy (determined by $\x,\y$) reaches the state $(t,s)$ and makes an offer, while $y_{t,s}$ is the probability that the same policy reaches state $(t,s)$ but decides to not to make an offer.
	
	\item\label{enum:framework_4} Finally, projecting the feasible region of $(P)$ onto the variables $(x_{t,1})_{t\in [n]}$, e.g.\ via Fourier-Motzkin elimination (see \citet{schrijver1998theory} for a definition), gives us Theorem~\ref{thm:classical_secretary_problem}. We skip this for brevity.
\end{enumerate}
The same framework can be applied to obtain the linear program for the secretary problem with \emph{rehire}~\citep{buchbinder2014secretary} and the formulation for the \emph{$(J,K)$-secretary problem} \citep{buchbinder2014secretary,chan2014revealing}. It can also be used to derive an alternative proof of the result by~\cite{smith1975secretary}. Besides secretary problems, this approach using MDP has been applied for prophet problems \citep{jiang2021tight} and in online bipartite matching~\citep{torrico2018polyhedral,torrico2022dynamic}. 

\subsection{Framework for the $\SPUA$}

Next, we sketch the proof of Theorem~\ref{thm:main_polyhedron} and use it to derive Theorem~\ref{thm:main_lp}. Technical details are deferred to the appendix.

In the classical secretary problem, the objective is to maximize the probability of choosing the top candidate, which we can write in the recursion of the value function $v^*$. For our model, the objective $\gamma_{\mathcal{P}}$ corresponds to a multi-objective criteria, and it is not clear a priori how to write the objective as a reward. We present a two-step approach: (1) First, we follow the previous subsection's argument to uncover the polyhedron of policies; (2) second, we show that our objective function can be written in terms of variables in this polyhedron, and we maximize this objective in the polyhedron.

\subsubsection{The polyhedron of policies via a generic utility function}

When we obtained the dual LP $(P)$ (Step~\ref{enum:framework_3} in the above framework), anything related to the objective of the MDP is moved to the objective value of the LP, while anything related to the actions of the MDP remained in constraints \eqref{eq:P_const_1}-\eqref{eq:P_const_2}. This suggests using a generic utility function to uncover the space of policies. Consider any vector $U:[n]\to \R_+$, and suppose that our objective is to maximize the utility collected, where choosing a candidate of rank $i$ means obtaining $U_i\geq 0$ value. Let $v^*_{(t,s)}$ be the maximum value collected in times $t,t+1,\ldots,n$ given that the current state is $(t,s)$, where $v^*_{(n+1,s)}=0$. Then, the optimality equations yield
\begin{align}
	v^*_{(t,s)} = \max \left\{  p U_t(s) + (1-p) \frac{1}{t+1} \sum_{\sigma=1}^{t+1} v^*_{(t+1,\sigma)}, \frac{1}{t+1} \sum_{\sigma=1}^{t+1} v^*_{(t+1,\sigma)}   \right\} , \label{eq:mdp_opt_0}
\end{align}
where $U_t(s)= \sum_{i=1}^n U_i \Prob(R_t=i \mid r_t=s)$. The term in the left side of the max operator is the expected value obtained by an \textbf{offer} action, while the term in the right corresponds to the expected value of the \textbf{pass} action. Using an approach similar to the one used in steps~\ref{enum:framework_2} and \ref{enum:framework_3} from the previous subsection, we can deduce that
\[
\textsc{Pol} = \left\{ (\x,\y)\geq 0 : x_{1,1}+ y_{1,1} =1, x_{t,s}+ y_{t,s} = \frac{1}{t}\sum_{\sigma=1}^{t-1}\left(  y_{t-1,\sigma} + (1-p)x_{t-1,\sigma}  \right) , \forall t > 1, s\in [t] \right\}
\]
contains all policies (Theorem~\ref{thm:main_polyhedron}). A formal proof is presented in the appendix.

\subsubsection{The linear program}

Next, we consider Theorem~\ref{thm:main_lp}. Given a policy $\mathcal{P}$, we define $x_{t,s}$ to be the probability of reaching state $(t,s)$ and making an offer to the candidate, and $y_{t,s}$ to be the probability of reaching $(t,s)$ and passing. Then $(\x,\y)$ belongs to $\textsc{Pol}$. Moreover, 
\begin{align}
	\Prob(\mathcal{P} \text{ collects a top } k \text{ candidate}) = p \sum_{t=1}^n \sum_{s=1}^t x_{t,s}\Prob(R_t\leq k \mid r_t=s). \label{eq:pol_collect_equals_lineear_obj}
\end{align}
Conversely, any point $(\x,\y)\in \textsc{Pol}$ defines a policy $\mathcal{P}$: At state $(t,s)$, it extends an \textbf{offer} to the $t$-th candidate with probability $x_{1,1}$ if $t=1$, or probability ${t x_{t,s}}/{\left(\sum_{\sigma=1}^{t-1} y_{t-1,\sigma} + (1-p)x_{t-1,\sigma} \right)}$ if $t>1$. Also, $\mathcal{P}$ satisfies \eqref{eq:pol_collect_equals_lineear_obj}. Thus,
\begin{align}
	\gamma_n^* & = \max_{\mathcal{P}} \min_{k\in [n]} \frac{\Prob(\mathcal{P} \text{ collects a top } k \text{ candidate})}{1-(1-p)^k} \nonumber\\
	& = \max_{(x,y)\in \textsc{Pol}} \min_{k\in [n]} \frac{ p \sum_{t=1}^n \sum_{s=1}^t x_{t,s}\Prob(R_t\leq k \mid r_t=s)}{1-(1-p)^k} \nonumber\\
	& = \max \left\{\gamma:  (\x,\y)\in \textsc{Pol}, \gamma \leq \frac{ p \sum_{t=1}^n \sum_{s=1}^t x_{t,s}\Prob(R_t\leq k \mid r_t=s)}{1-(1-p)^k}, \forall k\in [n]  \right\}. \label{eq:robust_ratio_pol_form}
\end{align}
Projecting the feasible region of \eqref{eq:robust_ratio_pol_form} as in step~\ref{enum:framework_4} onto the $(\x,\gamma)$-variables gives us Theorem~\ref{thm:main_lp}. The details appear in Appendix~\ref{sec:app:lp_via_mdp}.

Our MDP framework also allows us to show the following monotonicity result.
\begin{lemma}\label{lem:non_increasing_gamma}
	For a fixed $p\in (0,1]$, we have $\gamma_n^*\geq \gamma_{n+1}^*$ for any $n\geq 1$.
\end{lemma}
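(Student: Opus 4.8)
The plan is to prove the equivalent inequality $\gamma_{n+1}^*\le\gamma_n^*$ by a coupling that builds, from any $\gamma$-robust policy for $n+1$ candidates, a $\gamma$-robust policy for $n$ candidates; specializing to $\gamma=\gamma_{n+1}^*$ (which is attained, since by Theorem~\ref{thm:main_lp} it is the optimum of the finite LP $(LP)_{n+1,p}$) then gives the claim. The guiding observation is that appending a globally \emph{worst} candidate to an $n$-candidate instance changes neither which candidates are ``top $k$'' for any $k\le n$ nor the acceptance thresholds $1-(1-p)^k$, so the harder $(n+1)$-candidate guarantee can be transported down to $n$ candidates essentially for free.

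Concretely, I would fix a $\gamma$-robust policy $\mathcal{P}$ for $n+1$ candidates and, given an $n$-candidate instance in uniform random order, construct a randomized policy $\mathcal{Q}$ as follows. Draw $J$ uniformly from $\{1,\dots,n+1\}$ and imagine a phantom candidate of global rank $n+1$ inserted at arrival position $J$, keeping the $n$ real candidates in their original relative order. Since the phantom is ranked below every real candidate, its presence never alters the partial rank of any real candidate, so from the observed real partial ranks $\mathcal{Q}$ can reconstruct online the augmented state $(i,s)$ that $\mathcal{P}$ would see. Because $J$ is uniform and independent of the reals' order, the augmented arrival sequence is distributed exactly as a uniform random permutation of a genuine $(n+1)$-candidate instance, so $\mathcal{P}$'s robustness guarantee applies verbatim to it.

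The policy $\mathcal{Q}$ then simulates $\mathcal{P}$ on the augmented sequence using the same acceptance coins: at a real arrival where $\mathcal{P}$ offers, $\mathcal{Q}$ offers to that real candidate; at the phantom slot, if $\mathcal{P}$ offers then $\mathcal{Q}$ makes no real offer but, to mirror $\mathcal{P}$ faithfully, flips an independent $p$-coin and with probability $p$ terminates (makes no further offers) and with probability $1-p$ continues. For every $k\le n$ the top-$k$ candidates of the augmented instance are exactly the top-$k$ real candidates, since the phantom has rank $n+1>k$; hence the event ``$\mathcal{P}$ collects a top-$k$ candidate'' and the event ``$\mathcal{Q}$ collects a top-$k$ real candidate'' coincide under this coupling, giving $\Prob(\mathcal{Q}\text{ collects a top }k)=\Prob(\mathcal{P}\text{ collects a top }k)\ge \gamma\,(1-(1-p)^k)$ for each $k\in[n]$. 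Therefore $\gamma_{\mathcal{Q}}\ge\gamma$, and taking $\gamma=\gamma_{n+1}^*$ yields $\gamma_n^*\ge\gamma_{n+1}^*$.

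The main obstacle is the bookkeeping at the phantom slot: I must verify (i) that simulating the phantom's offer through an independent $p$-coin keeps $\mathcal{Q}$ a legitimate policy over the real candidates only, never ``collecting'' anything useful there (so its stopping when $\mathcal{P}$ collects the phantom is consistent, as rank $n+1$ is a top-$k$ candidate for no $k\le n$), and (ii) that the partial-rank reconstruction and the exact uniformity of the augmented permutation both hold, so that $\mathcal{P}$'s guarantee transfers without loss. Everything else is routine. As an alternative I would note that the same reduction can be carried out purely inside the LP of Theorem~\ref{thm:main_lp}, producing from an optimal solution of $(LP)_{n+1,p}$ a feasible solution of $(LP)_{n,p}$ with the same value of $\gamma$; the coupling, however, makes the ``worst-candidate padding'' idea most transparent.
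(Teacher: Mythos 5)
Your proof is correct, and its central device --- inserting a phantom worst-ranked candidate at a uniformly random position $J$, simulating its availability with an independent $p$-coin, and coupling the run of the $(n+1)$-candidate policy with an $n$-candidate policy --- is precisely the coupling the paper uses. Where you genuinely differ is the level at which the coupling is applied. The paper first passes to the dual of its LP: it writes $\gamma_n^*$ as a minimum over weight vectors $u$ with $\sum_i u_i \geq 1$ of the optimal MDP value for the induced nonincreasing utility $U_i = p\sum_{k\geq i} u_k/(1-(1-p)^k)$, extends $u$ to $\widehat{u}$ with $\widehat{u}_{n+1}=0$, and runs the phantom coupling for utility-collecting policies, concluding that the optimal $n$-candidate utility dominates the optimal $(n+1)$-candidate utility before minimizing over $u$. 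You instead apply the coupling directly to the primal robust-ratio objective, observing that for $k\leq n$ the top-$k$ sets are unchanged by the phantom and the denominators $1-(1-p)^k$ do not depend on $n$, so each of the $n$ numerator constraints transfers verbatim and the $k=n+1$ constraint is simply dropped. Your route is more elementary --- it needs no duality, only attainment of $\gamma_{n+1}^*$ via Theorem~\ref{thm:main_lp} (or an $\varepsilon$-argument) --- whereas the paper's detour fits its MDP/LP narrative and proves the intermediate monotonicity statement for every utility function arising in the dual. One point you flag as needing verification and should discharge explicitly: your $\mathcal{Q}$ uses external randomness ($J$ and the phantom coin) and is history-dependent as described, so it is not literally a Markovian randomized policy; this is harmless because its occupation measures still lie in the convex polyhedron $\textsc{Pol}$ of Theorem~\ref{thm:main_polyhedron}, which yields a Markovian randomized policy with the same numerators and hence the same robust ratio --- the paper's own construction implicitly relies on the same fact.
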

We sketch the proof of this result here and defer the details to Appendix~\ref{subsec:app:gamma_decreasing_in_n}. The dual of the LP~\eqref{eq:robust_ratio_pol_form} can be reformulated as

\resizebox{\columnwidth}{!}{ \hfil \begin{tabular}{rrp{0.7\linewidth}}
		&$\displaystyle\min_{\substack{u : [n]\to \R_+ \\ \sum_i u_i \geq 1}}$ & \quad \quad $v_{(1,1)}$ \\
		(DLP)&s.t. &  {\vspace{-1.5cm}\begin{align*}
				v_{(t,s)} & \geq \max\left\{  U_t(s) + \frac{1-p}{t+1} \sum_{\sigma=1}^{t=1} v_{(t+1,\sigma)} , \frac{1}{t+1} \sum_{\sigma=1}^{t=1} v_{(t+1,\sigma)} \right\} & \forall t\in [n], s\in [t] \\
				v_{(n+1,s)}& =0 & \forall s\in [n+1],
			\end{align*}\vspace{-0.8cm}}
\end{tabular}}
where $U_t(s)=  p\sum_{j=1}^n \left( \sum_{k\geq j} u_k /{\left(1-(1-p)^k\right)} \right)\Prob(R_t=j\mid r_t=s)$. The variables $u_1,\ldots,u_n$ correspond to the constraints involving $\gamma$ in the LP~\eqref{eq:robust_ratio_pol_form}. Note that (DLP) is the minimum value that an MDP can attain when the utility functions are given by $U_i = p \sum_{k\geq i} u_k /{\left(1-(1-p)^k\right)}$. Taking any weighting $u:[n]\to \R_+$ with $\sum_{i} u_i \geq 1$, we extend it to $\widehat{u}:[n+1]\to \R_+$ by setting $\widehat{u}_{n+1}=0$. We define accordingly $\widehat{U}_i = p \sum_{k\geq i} \widehat{u}_k /{\left(1-(1-p)^k\right)}$, and note that $U_i=\widehat{U}_i$ for $i\leq n$ and $\widehat{U}_{n+1}=0$. Using a coupling argument, from any policy for utilities $\widehat{U}$ with $n+1$ candidates, we can construct a policy for utilities $U$, with $n$ candidates, where both policies collect the same utility. Thus, the utility collected by the optimal policy for $U$ upper bounds the utility collected by an optimal policy for $\widehat{U}$. The conclusion follows since $\gamma_{n+1}^*$ is a lower bound for the latter value.
%

Since $\gamma_n^*\in [0,1]$ and $(\gamma_n^*)_n$ is a monotone sequence in $n$, $ \lim_{n \to \infty} \gamma_n^*$ must exist. In the next section we show that the limit corresponds to the value of a continuous LP.

\section{The Continuous LP}\label{sec:infinite_LP}

In this section we introduce the continuous linear program $(CLP)$, and we show that its value $\gamma_\infty^*$ corresponds to the limit of $\gamma_n^*$ when $n$ tends to infinity. We also state Proposition~\ref{prop:feasible_sol}, which allows us to construct feasible solutions of $(CLP)$ using any set of times $0 < t_1 \leq t_2 \leq \cdots \leq 1$. In the finite model, the solution constructed in this section has the natural interpretation of segmenting time: for a candidate arriving between times $t_i n$ and $t_{i+1} n$, we make an offer if the candidate has partial rank $i$ or better. 
In the remainder of the section, \emph{finite model} refers to the $\SPUA$ with $n <\infty$ candidates, while the \emph{infinite model} refers to $\SPUA$ when $n \to \infty$.

We assume $p\in(0,1]$ fixed. The continuous LP $(CLP)$ is an infinite linear program with variables given by a function $\alpha:[0,1]\times \N \to [0,1]$ and a scalar $\gamma\geq 0$. Intuitively, if in the finite model we interpret $x_{t,s}$ as weights and the sums of $x_{t,s}$ over $t$ as Riemann sums, then the limit of the finite model, should have a robust ratio computed by the continuous LP $(CLP)$

{ $(CLP)_{p}$ \begin{tabular}{cp{0.75\linewidth}}
		$\displaystyle\sup_{\substack{\alpha:[0,1]\times \N \to [0,1] \\ \gamma \geq 0}}$ &  $\quad \quad \gamma$ \\
		s.t. &  {\vspace{-1.5cm} \begin{align}
				t\alpha (t, s) & \leq 1 - p \int_0^t \sum_{\sigma \geq 1} \alpha(\tau, \sigma) \,  \mathrm{d}\tau  & \forall t\in [0,1], s\geq 1 \label{const:infinity_dynamic} \\
				\gamma&\leq  \frac{p \int_0^1  \sum_{s\geq 1} \alpha(t,s) \sum_{\ell=s}^k \binom{\ell-1}{s-1} t^s (1-t)^{\ell-s} \, \mathrm{d}t  }{(1-(1-p)^k)} & \forall k\geq 1 . \label{const:infinity_multiobj}
			\end{align}\vspace{-0.5cm}}
\end{tabular}}

We denote by $\gamma_\infty^*=\gamma_\infty^*(p)$ the objective value of $(CLP)_p$. The following result formalizes the fact that the value of the continuous LP $(CLP)_p$ is in fact the robust ratio of the infinite model. The proof is similar to other continuous approximations~\citep{chan2014revealing}; a small caveat in the proof is the restriction of the finite LP to the top $(\log n )/ p$ candidates, as they carry most of the weight in the objective function. The proof is deferred to Appendix~\ref{sec:app:infinite_lp_approximation}.

\begin{lemma}\label{lem:approximation_inf_LP_finite_LP}
	Let $\gamma_n^*$ be the optimal robust ratio for $n$ candidates and let $\gamma_{\infty}^*$ be the value of the continuous LP $(CLP)_p$. Then $|\gamma_n^* - \gamma_\infty^*| \leq \mathcal{O}\left( {(\log n)^2}/{\left(p\sqrt{n}\right)}  \right)$.
\end{lemma}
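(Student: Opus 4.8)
The plan is to establish the two-sided bound $|\gamma_n^* - \gamma_\infty^*| \leq \mathcal{O}\left( (\log n)^2 / (p\sqrt{n}) \right)$ by exhibiting approximate feasibility maps in both directions: from finite solutions of $(LP)_{n,p}$ to continuous solutions of $(CLP)_p$, and vice versa. The key structural observation, foreshadowed in the statement, is that in the finite LP almost all of the objective weight is carried by candidates arriving before time $(\log n)/p$; beyond that point, the survival factor $1 - p\sum_{\tau,\sigma} x_{\tau,\sigma}$ has decayed like $(1-p)^{\log n / p} = \mathcal{O}(1/n)$, so the tail contributes negligibly. I would first make this precise as a lemma: truncating the finite LP to stages $t \leq T := (\log n)/p$ changes the optimal value by at most $\mathcal{O}((\log n)/n)$, and symmetrically the continuous integrals over $t \in [T/n, 1]$ contribute only $o(1)$.

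For the direction $\gamma_\infty^* \geq \gamma_n^* - \mathcal{O}(\cdot)$, I would take an optimal $(\x^*, \gamma_n^*)$ of $(LP)_{n,p}$ and construct a piecewise-constant $\alpha$ by setting $\alpha(t,s) = n\, x^*_{\lceil tn\rceil, s}$ on the natural grid, so that the Riemann sums $\frac{1}{n}\sum_t$ approximate the integrals $\int_0^1 \mathrm{d}t$. I would then verify that $\alpha$ nearly satisfies both families of constraints in $(CLP)_p$. The dynamic constraint~\eqref{const:infinity_dynamic} follows from the first constraint of $(LP)_{n,p}$ up to the error in replacing the discrete partial sum by the integral. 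For the objective constraint~\eqref{const:infinity_multiobj}, the crucial step is matching the hypergeometric probability $\Prob(R_t \leq k \mid r_t = s)$ to its binomial limit $\sum_{\ell=s}^k \binom{\ell-1}{s-1} t^s (1-t)^{\ell-s}$; this is the standard convergence of sampling-without-replacement to sampling-with-replacement, and the error is controlled uniformly for $t$ away from $1$, which is exactly where the truncation to $t \leq T$ earns its keep. Accumulating these approximation errors over the $\mathcal{O}(T) = \mathcal{O}((\log n)/p)$ relevant stages, each contributing an $\mathcal{O}((\log n)/\sqrt{n})$-type discrepancy from the binomial approximation, yields the claimed $\mathcal{O}((\log n)^2/(p\sqrt{n}))$ bound.

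For the reverse direction $\gamma_n^* \geq \gamma_\infty^* - \mathcal{O}(\cdot)$, I would discretize an (almost) optimal $\alpha$ for $(CLP)_p$ by averaging it over grid cells, $x_{t,s} := \frac{1}{n}\int_{(t-1)/n}^{t/n} \alpha(\tau, s)\,\mathrm{d}\tau$, and check feasibility for $(LP)_{n,p}$ with the same controlled losses, again using that only the first $T$ stages matter and that the binomial-to-hypergeometric approximation is tight there. Since $(CLP)_p$ takes a supremum, I would work with an $\varepsilon$-optimal continuous solution and let $\varepsilon \to 0$ at the end. Combining both inequalities gives the two-sided estimate.

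The main obstacle is bookkeeping the error from the binomial–hypergeometric comparison uniformly in $t, s, k$ and showing it aggregates to only $\mathcal{O}((\log n)^2/(p\sqrt{n}))$ rather than something larger. The factor $(1-t)^{\ell-s}$ degenerates as $t \to 1$, so a naive bound on $|\Prob(R_t\leq k\mid r_t=s) - \text{binomial}|$ blows up near the final stages; the resolution is precisely the truncation argument, which confines the analysis to $t \leq T/n = (\log n)/(pn)$ where $1-t$ is bounded below and the approximation of the falling factorials $\binom{i-1}{s-1}\binom{n-i}{t-s}/\binom{n}{t}$ by powers of $t$ incurs only an $\mathcal{O}(1/n)$ per-term error times the $\mathcal{O}(\log n)$ range of $\ell$. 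A secondary subtlety is handling the countable index $s \geq 1$ (resp.\ $\sigma \geq 1$) in the continuous program: one must argue that the relevant mass concentrates on $s = \mathcal{O}(\log n)$, since a policy offering to a candidate of very poor partial rank contributes vanishingly to every objective constraint, so the infinite sums are effectively finite with negligible truncation error.
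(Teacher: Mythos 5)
Your high-level architecture (two discretization maps between $(LP)_{n,p}$ and $(CLP)_p$, plus a hypergeometric-to-binomial comparison) matches the paper's, but your first, load-bearing lemma is false, and the error comes from conflating two different indices. The paper's ``restriction to the top $(\log n)/p$ candidates'' is a truncation of the \emph{rank} index $k$ in the robust-ratio constraints: it keeps only the constraints $\gamma \leq \frac{p}{1-(1-p)^k}\sum_{t,s}x_{t,s}\Prob(R_t\leq k\mid r_t=s)$ for $k\leq q:=(\log n)/p$, justified by the monotone-ratio estimate $f_{j+1}\geq \frac{1-(1-p)^{j}}{1-(1-p)^{j+1}}f_j$, which shows the dropped constraints are implied up to a factor $1-(1-p)^q\geq 1-1/n$ (Propositions~\ref{prop:relaxed_LP_not_bad} and~\ref{prop:relaxed_CLP_not_bad}). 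You instead truncate the \emph{time} index to stages $t\leq(\log n)/p$, claiming the objective weight concentrates there because the survival factor has decayed like $(1-p)^{(\log n)/p}=\mathcal{O}(1/n)$. That is backwards: $1-p\sum_{\tau<t}\sum_{\sigma}x_{\tau,\sigma}$ is the probability of \emph{reaching} stage $t$, and it decays only insofar as the policy actually makes offers --- a good policy makes none early on. Already for $p=1$ (the classical secretary problem) the optimal solution has $x_{t,1}>0$ only for $t\gtrsim n/e$, and any policy confined to the first $\log n$ stages selects the top candidate with probability at most $(\log n)/n$, so your truncated LP has value $o(1)$ rather than $\gamma_n^*-\mathcal{O}((\log n)/n)\approx 1/e$. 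The objective mass lives at stages $t=\Theta(n)$, the opposite of what your lemma asserts.

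This false lemma is not incidental: you invoke it to tame the degeneration of the binomial approximation as $t\to 1$ (confining the analysis to continuous times $t\leq(\log n)/(pn)$ ``where $1-t$ is bounded below'') and, implicitly, to keep the range of $\ell\leq k$ small in the hypergeometric comparison; with it removed, both problems reappear. The paper resolves them differently: (i) the rank truncation $k\leq q$ makes Proposition~\ref{prop:key_bound_binomial} uniform, since its error terms involve $s^2/t$, $\ell^2/n$ and $\ell^2/(n-t)$ and hence require $\ell\leq(\log n)/p$ together with $t$ at distance at least $\sqrt{n}\log n$ from \emph{both} endpoints; and (ii) the stages within $\sqrt{n}\log n$ of $0$ or $n$ are not deleted from the LP but simply shown to contribute $\mathcal{O}\left((\log n)^2/(p\sqrt{n})\right)$ to every retained objective constraint, using $x_{t,s}\leq 1/t$ and $\frac{1}{t}\sum_{s}\Prob(R_t\leq k\mid r_t=s)=k/n\leq (\log n)/(pn)$ (Claims~\ref{claim:small_t_values} and~\ref{claim:small_t_values_CLP}). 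Note also that your confinement region $t\leq(\log n)/p$ lies inside the neighborhood of $0$ where the hypergeometric--binomial approximation is at its \emph{worst} (it needs $t\gg s^2$), so the proposed fix points at the wrong end of the horizon. To repair the proof, replace your time-truncation lemma by the rank-constraint truncation plus the boundary-stage estimates; your discretization maps (piecewise-constant extension one way, cell-averaging the other) then go through essentially as in the paper.
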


The following proposition gives a recipe to find feasible solutions for $(CLP)_p$. We use it to construct lower bounds in the following sections.

\begin{proposition}\label{prop:feasible_sol}
	Consider $0\leq t_1 \leq t_2 \leq \cdots \leq 1$ and consider the function $\alpha: [0,1]\times \N \to [0,1]$ defined such that for $t\in [t_i, t_{i+1})$
	\[
	\alpha(t,s) = \begin{cases}
		{T_i}/{t^{i\cdot p + 1}} & s\leq i \\
		0 & s > i,
	\end{cases}
	\]
	where $T_i= (t_1\cdots t_i)^p$. Then $\alpha$ satisfies Constraint~\eqref{const:infinity_dynamic}.
\end{proposition}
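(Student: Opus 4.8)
The plan is to verify Constraint~\eqref{const:infinity_dynamic} directly by evaluating its right-hand side in closed form. First I would observe that for a fixed $t\in[t_i,t_{i+1})$ the map $s\mapsto\alpha(t,s)$ takes only two values: it equals $T_i/t^{ip+1}$ for $s\leq i$ and $0$ for $s>i$. Hence the constraint splits into two cases. For $s>i$ the left-hand side $t\alpha(t,s)$ vanishes, so the constraint holds as soon as the right-hand side is nonnegative. For every $s\leq i$ the left-hand side equals the single quantity $t\cdot T_i/t^{ip+1}=T_i/t^{ip}$, so the whole family of constraints collapses to the one inequality $T_i/t^{ip}\leq 1-p\int_0^t\sum_{\sigma\geq1}\alpha(\tau,\sigma)\,\mathrm{d}\tau$. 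Thus everything reduces to computing the cumulative mass $F(t)\doteq\int_0^t\sum_{\sigma\geq1}\alpha(\tau,\sigma)\,\mathrm{d}\tau$.

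Next I would compute the aggregate mass at a single time. For $\tau\in[t_j,t_{j+1})$ exactly $j$ indices are active, so $\sum_{\sigma\geq1}\alpha(\tau,\sigma)=jT_j/\tau^{jp+1}$ (and this sum is $0$ for $\tau<t_1$, the empty case $i=0$ with the convention $T_0=1$). Splitting $F(t)$ into the full slabs $[t_j,t_{j+1})$ for $j<i$ together with the partial slab $[t_i,t)$, and using $\int\tau^{-(jp+1)}\,\mathrm{d}\tau=\tau^{-jp}/(-jp)$ (valid since $j\geq1$, $p>0$), each full slab contributes $\tfrac{T_j}{p}\bigl(t_j^{-jp}-t_{j+1}^{-jp}\bigr)$.

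The key step is to spot the telescoping. Setting $G_j\doteq T_j\,t_j^{-jp}$, the multiplicative relation $T_{j+1}=T_j\,t_{j+1}^{p}$ gives $T_j\,t_{j+1}^{-jp}=T_{j+1}\,t_{j+1}^{-(j+1)p}=G_{j+1}$, so the end term of slab $j$ is exactly the start term of slab $j+1$. The sum over the full slabs therefore collapses to $\tfrac1p(G_1-G_i)$, and the partial slab contributes $\tfrac1p(G_i-T_i t^{-ip})$, whence $F(t)=\tfrac1p\bigl(G_1-T_i t^{-ip}\bigr)$. Since $G_1=t_1^{p}t_1^{-p}=1$, this yields $1-pF(t)=T_i/t^{ip}$, which coincides exactly with the left-hand side $t\alpha(t,s)=T_i/t^{ip}$ for $s\leq i$. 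Consequently Constraint~\eqref{const:infinity_dynamic} holds with equality for $s\leq i$ and, since its right-hand side $T_i/t^{ip}$ is strictly positive, trivially for $s>i$.

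The computation is routine once it is set up, so the only points requiring care — and the closest thing to an obstacle — are the bookkeeping across the breakpoints (the convention $T_0=1$ so that $\alpha\equiv0$ on $[0,t_1)$, and the degenerate case $t_j=t_{j+1}$, which simply contributes an empty slab and leaves the telescoping intact) and the recognition of the identity $T_j\,t_{j+1}^{-jp}=G_{j+1}$, which is what makes the partial sums collapse and produces the exact equality rather than a mere inequality.
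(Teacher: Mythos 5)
Your proof is correct and follows essentially the same route as the paper's: split $\int_0^t\sum_{\sigma}\alpha(\tau,\sigma)\,\mathrm{d}\tau$ into the slabs $[t_j,t_{j+1})$, integrate each one, and telescope via the identity $T_j t_{j+1}^{-jp}=T_{j+1}t_{j+1}^{-(j+1)p}$ to get $1-p\int_0^t\sum_\sigma\alpha = T_i t^{-ip} = t\alpha(t,s)$ for $s\leq i$. If anything, your write-up is tidier than the paper's (which contains index typos such as $\sum_{j=0}^{j-1}$ and an inconsistent convention $T_0=0$), and you correctly note that the constraint holds with equality for $s\leq i$.
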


\begin{proof}
	We verify that inequality~\eqref{const:infinity_dynamic} holds. We only need to verify it for $t\in[t_i,t_{i+1})$ with $i\geq 1$ since $\alpha(t,s)=0$ for $t\in [0,t_1)$. We define $t_0=0$ and $T_0=0$. For $t\in [t_i,t_{i+1})$ we have
	\begin{align*}
		1- p \int_0^t \sum_{\sigma\geq 1} \alpha(\tau, \sigma) \,\mathrm{d} \tau & = 1 - p \left(\sum_{j= 1}^{j-1} \int_{t_j}^{t_{j+1}} j \frac{T_j}{\tau^{jp+1}} \, \mathrm{d}\tau\right) - p \int_{t_i}^t i \frac{T_i}{\tau^{ip + 1}} \, \mathrm{d}\tau \\
		& = 1- p \sum_{j=1}^{i-1} j\cdot T_j \cdot \frac{1}{-p j}\left( t_{j+1}^{-jp} - t_j^{-jp}  \right) - p i \cdot T_i\cdot \frac{1}{-ip}\left( t^{-ip} - t_{i}^{ip} \right)\\
		& = 1 + \sum_{j=1}^{i-1} T_j \left( t_{j+1}^{-jp} - t_{j}^{-jp} \right) + T_i \left( t^{-ip} - t_i^{ip} \right) \\
		& = 1 + \left(\sum_{j=1}^{i-1} T_{j+1} t_{j+1}^{-(j+1)p} - T_j t_j^{-jp}\right) + T_i \left(  t^{-ip}  - t_i^{-ip}\right)\tag{Since $T_{j} t_{j+1}^{-jp} = T_{j+1} t_{j+1}^{-(j+1)p}$} \\
		& = 1 + T_{i} t_i^{-ip} - T_1 t_1^{-p} + T_i\left( t^{-ip} - t_i^{-ip} \right) \\
		& = T_i t^{-ip} \geq t \alpha(t,s)
	\end{align*}
	for any $s\geq 1$. This concludes the proof.
\end{proof}

We use this result to show lower bounds for $\gamma_\infty^*$. For instance, if $1/k\geq p > 1/(k+1)$ for some integer $k$, and we set $t_1= 1/e$ and $t_2=t_3=\cdots = 1$, we can show that $\gamma_\infty^*(p)$ is at least $1/e$. Thus, in combination with Lemma~\ref{lem:non_increasing_gamma}, we have that $\gamma_n^*(p) \geq 1/e$ for any $n$ and $p>0$; we skip this analysis for brevity. In Section~\ref{sec:exact_sol_large_p}, we use Proposition~\ref{prop:feasible_sol} to show exact solutions of $\gamma_\infty^*$ for large $p$.

%

\section{Upper Bounds for the Continuous LP}\label{sec:upper_bound}

We now consider upper bounds for $(CLP)$ and prove Theorem~\ref{thm:main_upper_bound}, which states that $\gamma_\infty^*(p)\leq \min \left\{ p^{p/(1-p)} , 1/\beta \right\}$, for any $p\in (0,1]$, where $1/\beta\approx 0.745$ and $\beta$ is the unique solution of $\int_0^1 {(y(1-\log y)+ \beta-1)^{-1}} {\mathrm{d}y} = 1$ \citep{kertz1986stop}.

We show that $\gamma_\infty^*$ is bounded by each term in the minimum operator. For the first bound, we have
\begin{align*}
	\gamma_n^* & = \max_{\mathcal{P}} \min_{k\in [n]} \frac{\Prob(\mathcal{P} \text{ collects a top $k$ candidate})}{1-(1-p)^k}\\
	& \leq \max_{\mathcal{P}} \frac{\Prob(\mathcal{P} \text{ collects the top candidate})}{p} .
\end{align*}
The probability of collecting the highest candidate in $\SPUA$ is shown by \citet{smith1975secretary} to be $p^{1/(1-p)} + o(1)$, where $o(1)\to 0$ as $n\to \infty$. Thus, by Lemma~\ref{lem:non_increasing_gamma}, we have
\[
\gamma_\infty^* (p)\leq \gamma_n^* (p) \leq p^{{p}/{(1-p)}} + o(1)/p.
\]
Taking the limit $n \to \infty$, we conclude $\gamma_\infty^* (p)\leq p^{p/(1-p)}$.

For the second bound, we use the following technical result; its proof is deferred to Appendix~\ref{sec:app:upper_bounds}, but we give a short explanation here. A $\gamma$-robust algorithm $\mathcal{A}$ for the $\SPUA$, in expectation, has $pn$ candidates to choose from and $(1-p)n$ candidates from which the algorithm can learn about candidate quality. We give an algorithm $\mathcal{A}'$ that solves the i.i.d.\ prophet inequality for any $m\approx pn$ i.i.d.\ random variables $X_1,\ldots,X_m$, for $m$ large. The algorithm $\mathcal{A}'$ runs a utility version of $\mathcal{A}$ in $n$ values sampled from the distribution $X_1$ (see the discussion before Proposition~\ref{prop:utility_characterization}), guaranteeing at least a factor $\gamma$ of the maximum of $m\approx pn$ of these samples, which is the value of the prophet. $\mathcal{A}'$ is the capped utility version of $\mathcal{A}$, where no more than $m\approx pn$ offers can be made. Using concentration bounds, we show that the loss of these restrictions is minimal. \citet{kaplan2020competitive} uses a similar argument, with the difference that their sampling is a fixed fraction of the input and is done in advance, while in our case the sampling is online and might deviate from the expectation, implying the need for concentration bounds. The following result summarizes the reduction and the upper bound.
\begin{theorem}\label{lem:reduction_sec_to_proph}
	Let $p\in (0,1)$ and $\mathcal{A}$ be any algorithm that is $\gamma$-robust for the $\SPUA$ for any $n$. Then $\gamma \leq 1/\beta$, where $\beta\approx 1.34$ is the unique solution of the integral equation $\int_0^1 ( y(1-\log y)+ (\beta-1) )^{-1} \, \mathrm{d}y=1$.
%
%
\end{theorem}

The proof of Theorem~\ref{lem:reduction_sec_to_proph} uses the reduction mentioned in the previous paragraph. The use of concentration bounds guarantees a $(1-o(1))\gamma$ multiplicative approximation with $e^{-\Theta(n^2)}$ additive error for the i.i.d.\ prophet inequality problem. Specifically, for any distribution $\mathcal{D}$ over $[0,1]$, we can guarantee
\begin{align}
	\E\left[\mathrm{Val}(\mathcal{A}')\right] + e^{-\Theta(n^2)} \geq \gamma (1- o(1))\E\left[\max_{i\leq m} X_i\right], \label{ineq:prophet_red}
\end{align}
where $X_1,\ldots,X_m$ distribute according to $\mathcal{D}$ and $\mathrm{Val}(\mathcal{A}')$ is the value collected by $\mathcal{A}'$, the algorithm described in the previous paragraph where no more than $m\approx pn$ offers are made. Here $o(1)$ is a term that can be chosen arbitrarily close to $0$ for $n$ large enough. Unfortunately, we cannot conclude that $\gamma\leq 1/\beta$ immediately from this bound, since this bound only holds for multiplicative error in the i.i.d.\ prophet problem. We bypass this technical challenge as follows. A combination of results by~\cite{hill1982comparisons} and~\cite{kertz1986stop} shows that, for any $m$ and for $\varepsilon'>0$ small enough, there is an i.i.d.\ instance $X_1,\ldots,X_m$ with support in $[0,1]$ such that
\[
\E\left[ \max_{i\leq m} X_i \right] \geq (a_m- \varepsilon') \sup\left\{ \E[X_\tau] : \tau \in T_m   \right\},
\]
where $T_m$ is the class of stopping times for $X_1,\ldots,X_m$, and $a_m\to \beta$. Thus, using Inequality~\ref{ineq:prophet_red},  we must have
\[
\gamma (1-o(1))  \leq \frac{1}{a_m - \varepsilon'}  + \frac{e^{-\Theta(n^2)}}{\E\left[ \max_{i\leq m} X_i  \right]}
\]
for $m  \approx pn$. A slight reformulation of \citeauthor{hill1982comparisons}'s result allows us to set $\varepsilon'= 1/m^3$ and $\E[\max_{i\leq m} X_i] \geq 1/m^3$ (see the discussion at the end of Appendix~\ref{sec:app:upper_bounds}). Thus, as $n\to \infty$ we have $m\to \infty$ and so $e^{-\Theta(n^2)}/\E[\max_{i\leq m}X_i] \to 0$. In the limit we obtain $\gamma(1-o(1)) \leq 1/\beta$, which implies our stated result.

An algorithm that solves $(LP)_{n,p}$ and implements the policy given by the solution is $\gamma_\infty^*$-robust (Theorem~\ref{thm:main_lp} and the fact that $\gamma_n^* \geq \gamma_\infty^*$) for any $n$. Thus, by the previous analysis we obtain $\gamma_\infty^*\leq 1/\beta\approx 0.745$.

%
%
%
%
%
%
%

\section{Lower Bounds for the Continuous LP}\label{sec:exact_sol_large_p}

In this section we consider lower bounds for $(CLP)_p$ and prove Theorem~\ref{thm:main_lower_bound}. We first give optimal solutions of $(CLP)_p$ for large values of $p$. For $p\geq p^* \approx 0.594$, the optimal value of $(CLP)_p$ is $\gamma_\infty^*(p)= p^{p/(1-p)}$ and the optimal strategy is to observe $p^{1/(1-p)}$ fraction of the candidates, and then make offers to the best observed candidate so far. We then show that for $p\leq p^*$, $\gamma_{\infty}^*(p)\geq (p^*)^{p^*/(1-p^*)}\approx 0.466$. At the end of the section, we show that $\gamma_\infty^*(p) \geq 0.51$ when $p\to 0$.

\subsection{Exact solution for large $p$}

We now show that for $p\geq p^*$, $\gamma_\infty^*(p)= p^{p/(1-p)}$, where $p^*\approx 0.594$ is the solution of $(1-p)^2 = p^{(2-p)/(1-p)}$. Thanks to the upper bound $\gamma_\infty^*(p)\leq p^{p/(1-p)}$ for any $p\in (0,1]$, it is enough to exhibit feasible solutions $(\alpha,\gamma)$ of the continuous LP $(CLP)_p$ with $\gamma \geq p^{p/(1-p)}$.

Let $t_1=p^{1/(1-p)}$, $t_2=t_3=\cdots=1$, and consider the function $\alpha$ defined by $t_1,t_2,\ldots$ in Proposition~\ref{prop:feasible_sol}. That is, for $t\in [0,p^{1/(1-p)})$, $\alpha(t,s)=0$ for any $s\geq 1$ and for $t\in [p^{1/(1-p)},1]$ we have
\[
\alpha(t,s) = \begin{cases}
	{p^{{p}/{(1-p)}}}/{t^{1+p}} & s=1 \\
	0 & s > 1 .
\end{cases}
\]
Let $\gamma \doteq \inf_{k\geq 1} {p}{\left(1-(1-p)^k\right)^{-1}} \int_{p^{{1}/{(1-p)}}}^1 \frac{p^{{p}/{(1-p)}}}{t^{1+p}} \sum_{\ell=1}^k t(1-t)^{\ell-1}  \mathrm{d}t$. Then $(\alpha,\gamma)$ is feasible for the continuous LP $(CLP)_p$, and we aim to show that $\gamma\geq p^{p/(1-p)}$ when $p\geq p^*$. The result follows by the following lemma.
\begin{lemma}\label{lem:ratio_bound}
	For any $p\geq p^*$ and any $\ell\geq 0$, $\int_{p^{{1}/{(1-p)}}}^1 {(1-t)^\ell} t^{-p} \, \mathrm{d}t \geq (1-p)^\ell$.
\end{lemma}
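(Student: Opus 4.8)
The plan is to recognize that the weight $t^{-p}$ appearing in the integrand is, after the right normalization, a probability density, which reduces the whole family of inequalities (one for each $\ell$) to a single first-moment estimate plus Jensen's inequality. Write $a \doteq p^{1/(1-p)}$ for the lower limit, and note the identity $a^{1-p} = p$, which gives
\[
\int_a^1 t^{-p}\,\mathrm{d}t = \frac{1 - a^{1-p}}{1-p} = \frac{1-p}{1-p} = 1 .
\]
Thus $\mathrm{d}\mu(t) \doteq t^{-p}\,\mathrm{d}t$ is a probability measure on $[a,1]$, and if $T \sim \mu$ and $g(t) \doteq 1-t \geq 0$, then the quantity to be bounded is exactly the $\ell$-th moment $f(\ell) \doteq \int_a^1 (1-t)^\ell t^{-p}\,\mathrm{d}t = \E_\mu\!\left[g(T)^\ell\right]$. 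The target $(1-p)^\ell$ is the $\ell$-th power of $1-p$, so it suffices to show $\E_\mu[g(T)] \geq 1-p$ and then lift this to all $\ell$.

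For the first moment, a direct computation (using $a^{2-p} = pa$, which follows from $a^{1-p}=p$) gives $f(1) = 1 - \frac{1-a^{2-p}}{2-p}$, and an elementary rearrangement shows that $f(1) \geq 1-p$ is equivalent to $a^{2-p} \geq (1-p)^2$, i.e.\ to $p^{(2-p)/(1-p)} \geq (1-p)^2$. This is precisely where the hypothesis $p \geq p^*$ enters: $p^*$ is by definition the solution of $p^{(2-p)/(1-p)} = (1-p)^2$. To conclude that the inequality holds for all $p \geq p^*$ I would show the single-variable function $\phi(p) \doteq \frac{2-p}{1-p}\ln p - 2\ln(1-p)$ is strictly increasing on $(0,1)$, so that $\phi(p) \geq 0 \iff p \geq p^*$. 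Differentiating (after writing $\frac{2-p}{1-p} = 1 + \frac{1}{1-p}$) yields
\[
\phi'(p) = \frac{1}{p} + \frac{1/p - 1 + \ln p}{(1-p)^2} + \frac{2}{1-p},
\]
and positivity of the middle numerator $\psi(p) \doteq 1/p - 1 + \ln p$ follows since $\psi(1)=0$ and $\psi'(p) = (p-1)/p^2 < 0$ on $(0,1)$; hence all three terms are positive and $\phi' > 0$. This establishes $f(1) = \E_\mu[g(T)] \geq 1-p$ for $p \geq p^*$ (with equality at $p = p^*$).

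Finally, I would lift the first-moment bound to arbitrary $\ell$ by convexity. For a nonnegative integer $\ell \geq 1$ the map $x \mapsto x^\ell$ is convex on $[0,\infty)$, so Jensen's inequality applied to $g(T) \geq 0$ gives $f(\ell) = \E_\mu[g(T)^\ell] \geq \bigl(\E_\mu[g(T)]\bigr)^\ell$; since $\E_\mu[g(T)] \geq 1-p \geq 0$ and $x \mapsto x^\ell$ is increasing on $[0,\infty)$, this is at least $(1-p)^\ell$, which is the claim of Lemma~\ref{lem:ratio_bound}. The case $\ell = 0$ is exactly the normalization $\int_a^1 t^{-p}\,\mathrm{d}t = 1 = (1-p)^0$. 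The main obstacle is the first-moment step: a naive induction on $\ell$ using the recursion for $f(\ell)$ is too lossy (it fails already around $\ell = 4$ at $p = p^*$), so the key insight is to avoid induction entirely by passing to the probability measure $\mu$ and invoking convexity; once $f(1) \geq 1-p$ is secured through the monotonicity analysis of $\phi$, Jensen closes every remaining case uniformly.
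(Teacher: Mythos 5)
Your proof is correct, and it takes a genuinely different and cleaner route than the paper's. The paper proves the lemma by brute-force case analysis: it computes the cases $\ell=0,1$ exactly (the $\ell=1$ case being precisely where $p\geq p^*$ enters), then for $\ell=2,3,4$ it substitutes the approximation $p^{1/(1-p)}\leq (1+p)/(2e)$, truncates the binomial series for $(1-u)^{-p}$, and verifies the resulting polynomial inequalities numerically (with thresholds like $p\geq 0.585395$ and $p\geq 0.559826$), and finally for $\ell\geq 5$ it uses the cruder bound $\int_a^1 (1-t)^\ell t^{-p}\,\mathrm{d}t \geq (1-(1+p)/(2e))^{\ell+1}/(\ell+1)$ together with a monotonicity-plus-induction argument. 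Your key observation --- that $a^{1-p}=p$ for $a=p^{1/(1-p)}$ makes $t^{-p}\,\mathrm{d}t$ a probability measure on $[a,1]$, so that Jensen's inequality reduces every case $\ell\geq 1$ to the first-moment bound $f(1)\geq 1-p$ --- collapses all of this into the single computation the paper already does for $\ell=1$, eliminating the numerical verifications entirely. Your proof is also more rigorous on one point the paper glosses over: the paper asserts ``the last inequality holds iff $p\geq p^*$'' by numerically solving $(1-p)^2=p^{(2-p)/(1-p)}$, whereas your monotonicity analysis of $\phi(p)=\frac{2-p}{1-p}\ln p - 2\ln(1-p)$ (with the correct derivative decomposition and the sign of $\psi(p)=1/p-1+\ln p$) actually establishes that $p^*$ is the unique threshold. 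As a bonus, your argument explains structurally why $p^*$, which is defined by the $\ell=1$ (equivalently $k=2$) constraint, governs all $\ell$ simultaneously --- consistent with the paper's remark that its analysis is tight exactly at $k=2$. The only caveat, shared with the paper's own proof, is that the argument covers integer $\ell$ (or real $\ell\geq 1$, where $x\mapsto x^\ell$ is still convex), which is all the lemma is ever used for.
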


We defer the proof of this lemma to Appendix~\ref{subsec:app:exact_sol_large_p}. Now, we have
\begin{align*}
	\gamma & = \inf_{k\geq 1}\frac{p}{1-(1-p)^k} \int_{p^{{1}/{1-p}}}^1 \frac{p^{{p}/{(1-p)}}}{t^{1+p}} \sum_{\ell=1}^k t(1-t)^{\ell-1} \mathrm{d}t \\
	& = p^{{p}/{(1-p)}} \inf_{k\geq 1} \frac{\sum_{\ell=1}^{k} \int_{p^{1/(1-p)}}^1 {t^{-p}} (1-t)^{\ell-1} \, \mathrm{d}t }{\sum_{\ell=1}^k (1-p)^{\ell-1}} \\
	& \geq p^{{p}/{(1-p)}} \inf_{k\geq 1} \inf_{\ell\in[k]} \int_0^1 \frac{1}{t^p} \frac{(1-t)^{\ell-1}}{(1-p)^{\ell-1}} \, \mathrm{d}t \geq p^{{p}/{(1-p)}} ,
\end{align*}
where we use the known inequality ${\sum_{\ell=1}^m a_\ell}/{\sum_{\ell=1}^m b_\ell}\geq \min_{\ell\in [m]} {a_\ell}/{b_\ell}$ for $a_\ell,b_\ell >0$, for any $\ell$, and the lemma. This shows that $\gamma_\infty^* \geq p^{p/(1-p)}$ for $p\geq p^*$.

\begin{remark}
	Our analysis is tight. For $k=2$, constraint
	$$\frac{p}{1-(1-p)^2} \int_{p^{{1}/{(1-p)}}}^1 \frac{p^{{p}/{(1-p)}}}{t^{1+p}} \sum_{\ell=1}^k t(1-t)^{\ell-1}\, \mathrm{d}t \geq p^{{p}/{(1-p)}}$$
	holds if and only if $p\geq p^*$.
\end{remark}

\subsection{Lower bounds for small $p$}

In this subsection we present two lower bounds for $\gamma_\infty^*(p)$ when $p\leq p^*$, with $p^*\approx 0.594$ as obtained in the previous subsection. The first bound guarantees $\gamma_\infty^*(p) \geq (p^*)^{p^*/(1-p^*)}\approx 0.466$; the second guarantees $\gamma_\infty^*(p) \geq 0.51$ when $p$ approaches $0$. We present details for the first bound, as it includes a mechanism to transform the solution of $(CLP)_{p^*}$ into a solution of $(CLP)_p$ for $p\leq p^*$. We defer some details of the latter bound, as it uses a construction similar to~\citep{correa2021secretary} with a different limit argument. 

Let $\varepsilon \in [0,1)$ satisfy $p=(1-\varepsilon)p^*$. For the argument, we take the solution $\alpha^*$ for $(CLP)_{p^*}$ that we obtained in the last subsection and we construct a feasible solution for $(CLP)_{p}$ with objective value at least $(p^*)^{p^*/(1-p^*)}$. For simplicity, we denote $\tau^*= (p^*)^{1/(1-p^*)}$.

From the previous subsection, we know that the optimal solution $\alpha^*$ of $(CLP)_{p^*}$ has the following form. For $t\in [0,\tau^*)$, $\alpha^*(t,s)=0$ for any $s$, while for $t\in [\tau^*,1]$ we have
\[
\alpha^*(t,s) = \begin{cases}
	{(p^*)^{p^*/(1-p^*)}}/{t^{p^*+1}} & s=1 \\
	0 & s> 1 .
\end{cases}
\]
For $(CLP)_{p}$, we construct a solution $\alpha$ as follows. Let $\alpha(t,s)= \varepsilon^{s-1}\alpha^*(t,1)$ for any $t\in [0,1]$ and $s\geq 1$; for example, $\alpha(t,1)= \alpha^*(t,1)$. If we interpret $\alpha^*$ as a policy, it only makes offers to the highest candidate observed. By contrast, in $(CLP)_p$ the policy implied by $\alpha$ makes offers to more candidates (after time $\tau^*$), with a probability geometrically decreasing according to the relative ranking of the candidate. 
\begin{lemma}
The solution $\alpha$ satisfies constraints~\eqref{const:infinity_dynamic}, \[t \alpha(t,s) \leq  1- p \int_0^t \sum_{\sigma\geq 1} \alpha(\tau,\sigma)\mathrm{d}\tau , \] for any $t\in[0,1]$, $s\geq 1$.
\end{lemma}

\begin{proof}
	Indeed,
	\begin{align*}
		1- p \int_0^t \sum_{\sigma\geq 1} \alpha(\tau,\sigma) \mathrm{d}\tau & = 1- p^*(1-\varepsilon)\int_0^t \sum_{\sigma \geq 1} \varepsilon^{\sigma-1} \alpha^*(\tau,1) \mathrm{d}\tau \\
		& = 1 - p^*\int_0^t \alpha^*(\tau, 1) \mathrm{d}\tau \tag{Since $\sum_{\sigma\geq 1} \varepsilon^{\sigma-1} = 1/(1-\varepsilon)$} \\
		& = 1 - p^* \int_0^t \sum_{\sigma\geq 1} \alpha^*(\tau,\sigma) \, \mathrm{d}\tau \tag{Since $\alpha^*(\tau,\sigma)=0$ for $\sigma>1$}\\
		& \geq t\alpha^*(t,1) . \tag{By feasibility of $\alpha^*$}
	\end{align*}
	Since $\alpha(t,s) = \varepsilon^{s-1}\alpha^*(t,1) \leq \alpha^*(t,1)$, we conclude that $\alpha$ satisfies \eqref{const:infinity_dynamic} for any $t$ and $s$.
\end{proof}

We now define $\gamma=\inf_{k\geq 1} {p}{\left(1-(1-p)^k\right)^{-1}} \int_0^1 \sum_{s\geq 1} \alpha(t,s) \sum_{\ell=s}^k \binom{\ell-1}{s-1} t^s (1-t)^{\ell-s}\mathrm{d}t$. Using the claim, we know that $(\alpha,\gamma)$ is feasible for $(CLP)_p$, and need to verify that $\gamma \geq (p^*)^{p^*/(1-p^*)}$. Similar to the analysis in the previous section, the result follows by the following result.
\begin{lemma}\label{claim:ratio_bound_small_p}
	For any $\ell \geq 0$, $\int_{\tau^*}^1 {(1 - (1-\varepsilon)t)^\ell }{t^{-p^*}} \, \mathrm{d}t \geq (1-p)^\ell$.
\end{lemma}

Before proving the claim, we establish the bound:
\begin{align*}
	\gamma &  = \inf_{k\geq 1} \frac{1}{\sum_{\ell=1}^k (1-p)^{\ell-1}} \int_0^1 \sum_{s=1}^k \varepsilon^{s-1} \alpha^*(s, 1) \sum_{\ell=s}^k \binom{\ell-1}{s-1} t^s(1-t)^{\ell-s} \, \mathrm{d}t \tag{Using definition of $\alpha$} \\
	& = (p^*)^{{p^*}/{(1-p^*)}} \inf_{k\geq 1} \frac{1}{\sum_{\ell=1}^k (1-p)^{\ell-1}} \int_{\tau^*}^1 \frac{1}{t^{p^*+1}}  \sum_{\ell=1}^k \sum_{s=1}^\ell \varepsilon^{s-1} \binom{\ell-1}{s-1} t^s(1-t)^{\ell-s} \, \mathrm{d}t \tag{Using the definition of $\alpha^*$ and changing order of summmation} \\
	& = (p^*)^{{p^*}/{(1-p^*)}} \inf_{k\geq 1} \frac{1}{\sum_{\ell=1}^k (1-p)^{\ell-1}} \int_{\tau^*}^1 \frac{1}{t^{p^*}}  \sum_{\ell=1}^k (1- (1-\varepsilon) t)^{\ell-1} \, \mathrm{d}t \tag{Using the binomial expansion} \\
	& = (p^*)^{{p^*}/{(1-p^*)}} \inf_{k\geq 1} \frac{\sum_{\ell=1}^k \int_{\tau^*}^1 {t^{-p^*}}  (1- (1-\varepsilon) t)^{\ell - 1} \, \mathrm{d}t}{\sum_{\ell=1}^k (1-p)^{\ell-1}}   \geq (p^*)^{{p^*}/{(1-p^*)}}
\end{align*}
We again used the inequality ${\sum_{\ell=1}^m a_\ell}/{\sum_{\ell=1}^m b_\ell}\geq \min_{\ell\in [m]} {a_\ell}/{b_\ell}$ for $a_\ell,b_\ell >0$, for any $\ell$, and the claim.

\begin{proof}[Proof of Lemma~\ref{claim:ratio_bound_small_p}]
	We have $1-(1-\varepsilon)t = (1-\varepsilon)(1-t) + \varepsilon$. Therefore,
	\begin{align*}
		\int_{\tau^*}^1 \frac{1}{t^{p^*}}(1 - (1-\varepsilon)t)^\ell \mathrm{d}t & = \int_{\tau^*}^1 \frac{1}{t^{p^*}} \sum_{j=0}^{\ell} \binom{\ell}{j} (1-\varepsilon)^{\ell-j}(1-t)^{\ell-j} \varepsilon^{j} \mathrm{d}t \tag{Binomial expansion} \\
		& = \sum_{j=0}^\ell \binom{\ell}{j} (1-\varepsilon)^{\ell-j} \varepsilon^j \int_{\tau^*}^1 \frac{1}{t^{p^*}} (1-t)^{\ell-j} \, \mathrm{d}t \\
		& \geq \sum_{j=0}^\ell \binom{\ell}{j} (1-\varepsilon)^{\ell-j} \varepsilon^j (1-p^*)^{\ell-j} \, \mathrm{d}t \tag{Using Lemma~\ref{lem:ratio_bound} for $p^*$} \\
		& = (\varepsilon + (1-\varepsilon)(1-p^*))^\ell \tag{Using binomial expansion} \\
		& = (1- (1-\varepsilon)p^*)^\ell = (1-p)^\ell  ,
	\end{align*}
	where we used  $p=(1-\varepsilon)p^*$. From this inequality the claim follows.
\end{proof}

\subsubsection{Improved bound for $p$ close to $0$}

Now we present a better bound for $\gamma_\infty^*(p)$, for $p$ close to $0$, using an explicit construction of a solution $\alpha$ for $(CLP)_p$. 

The following proposition gives a sufficient condition to ensure lower bounds over $\gamma_\infty^*(p)$.

\begin{proposition}\label{prop:necessary_lower_bound}
	Let $0\leq t_1\leq t_2 \leq t_3\leq \cdots \leq 1$. If for all $k\geq 1$ we have
	\[
	T_k \left( \frac{t_{k+1}^{1-kp} - t_k^{1-kp}}{1-kp}  \right) \geq \gamma p(1-p)^{k-1} ,
	\]
	where $T_k=(t_1\cdots t_k)^p$, then, $\gamma_\infty^*(p) \geq \gamma$.
\end{proposition}

\begin{proof}
	Let $\alpha(t,s)$ be defined as follows. For $i\geq 1$, if $t\in [t_i,t_{i+1})$, then
	\[
	\alpha(t,s) = \begin{cases}
		T_i / t^{i\cdot p + 1} & s\leq i\\
		0 & s> i.
	\end{cases}
	\]
	Then, by Proposition~\ref{prop:feasible_sol}, we know that $\alpha$ defines a feasible solution to $(CLP)_p$. Now, using $\alpha(t,s)\geq \alpha(t,\ell)$ for any $s\leq \ell$, we obtain
	\begin{align*}
		p\int_0^1 \sum_{s\geq 1} \alpha(t,s) \sum_{\ell=s}^k \binom{\ell-1}{s-1} t^s (1-t)^{\ell-s} \, \mathrm{d}t & \geq p \int_0^1 \sum_{\ell=1}^k \alpha(t,\ell) t \, \mathrm{d}t \\
		& = p \sum_{i=1}^\infty \min\{ k,i \} T_i \left( \frac{t_{i+1}^{1-ip} - t_i^{1-ip} }{1-ip}  \right) \\
		& = p \sum_{j=1}^k \sum_{i=i}^\infty  T_i \left( \frac{t_{i+1}^{1-ip} - t_i^{1-ip} }{1-ip}  \right) \\
		& \geq p \sum_{j=1}^k \sum_{i=j}^\infty \gamma p (1-p)^{j-1} \\
		& =  \gamma p \sum_{j=1}^k (1-p)^{j-1} = \gamma (1-(1-p)^k) .
	\end{align*}
	This holds for any $k$; thus, $\gamma_\infty^*(p) \geq \gamma$.
\end{proof}

We now present an iterative method to generate a sequence $t_1,t_2,\ldots$ as in Proposition~\ref{prop:necessary_lower_bound}. Fix $t_1\in [0,1]$ and $\gamma >0$, and define $A_k=t_1 (1-p)^{-k} + \gamma p k$ for $k\geq 1$. Note that $A_k$ is increasing in $k$. Define $t_2 = t_1 \left( 1+ \gamma p(1-p)/t_1 \right)^{1/(1-p)}$, and for $k\geq 2$, define $t_{k+1}$ as follows:
\begin{align}
	\left( \frac{t_{k+1}}{t_k}  \right)^{1-kp} = \frac{A_k(1-p)}{A_{k-1}}. \label{eq:definition_of_tk}
\end{align}

%

\begin{lemma}
	The sequence $t_1,\ldots$ defined above satisfies $t_k\leq t_{k+1}$ for each $k\geq 1$. Moreover, for any $k\geq 1$,
	\[
	T_k \left( \frac{t_{k+1}^{1-kp} - t_k^{1-kp}}{1-kp}  \right) = \gamma p(1-p)^{k-1} ,
	\]
	where $T_k=(t_1\cdots t_k)^p$.
\end{lemma}

\begin{proof}
	The first part follows from the fact that $k < 1/p$ if and only if $A_k (1-p) \geq A_{k-1}$. For the second part, let
	\[
	B_k = T_k \left( \frac{t_{k+1}^{1-kp} - t_k^{1-kp}}{1-kp}  \right).
	\]
	It is easy to verify that $B_1=\gamma p$ using the definition of $t_2$. Now, for $k\geq 2$,
	\begin{align*}
		\frac{B_{k+1}}{B_{k}} & = t_{k+1}^p \left(  \frac{t_{k+2}^{1-(k+1)p}-t_{k+1}^{1-(k+1)p}}{t_{k+1}^{1-kp} - t_{k}^{1-kp}}  \right) \left( \frac{1-kp}{1-(k+1)p} \right).
	\end{align*}
	Using Identity~\eqref{eq:definition_of_tk} and $A_j(1-p) - A_{j-1} = \gamma p (1-jp)$, we obtain
	\begin{align*}
		\frac{B_{k+1}}{B_{k}} & = t_{k+1}^p \frac{A_{k-1}}{A_k} \frac{t_{k+1}^{1-(k+1)p}}{t_k^{1-kp}} = (1-p).
	\end{align*}
	Then, inductively, we can show that $B_k = \gamma p (1-p)^{k-1}$.
\end{proof}

Our goal is to give $t_1\in [0,1]$ and $\gamma\in [0,1]$  as large as possible such that $\lim_k t_k \leq 1$, with $t_k$ defined via~\eqref{eq:definition_of_tk}. 


\begin{lemma}\label{lem:limit_log_tk}
	We have
	\[
	\lim_{k\to \infty} \log t_{k} = \log t_1 + \int_0^\infty \frac{\gamma}{t_1 e^x + \gamma x} \, \mathrm{d}x.
	\]
	for $p\to 0$.
\end{lemma}

The proof of the lemma is technical and borrows a strategy used by~\cite{correa2021secretary}. The main insight is to apply logarithms to both sides of Identity~\eqref{eq:definition_of_tk}, and find an expression for $\log t_{k+1}$ as a sum of terms only involving $A_j$, $\gamma$ and $p$. In the limit in $k$, and then in $p$, we can reinterpret the sums as Riemann sums that later translate into the integral term given in the lemma. We defer the details of this proof to the Appendix.

We find numerically that the combination of $t_1\approx 0.228$ and  $\gamma\approx0.511$ ensures $\lim_{k\to \infty} \log t_k = 0$. Thus, $\gamma_\infty^*(p)\geq 0.51$ for $p$ close to $0$; note that $t_1$ remains bounded away from $0$. This means that the policy given by the values $t_1,t_2,\ldots$ spends the first $t_1$ fraction of time ``exploring'' before ``exploiting'', and this exploring time remains a constant. 

\section{Computational Experiments}\label{sec:experiments}

In this section we aim to empirically test our policy; to do so, we focus on utility models. Recall from  Proposition~\ref{prop:utility_characterization} that a $\gamma$-robust policy ensures at least $\gamma$ fraction of the optimal offline utility, for any utility function that is consistent with the ranking, i.e., $U_j < U_i$ if and only if $i \prec j$. This is advantageous for practical scenarios, where a candidate's ``value'' may be unknown to the decision maker.

We evaluate the performance of two groups of solutions. The first group includes policies that are computed without the knowledge of any utility function:
\begin{itemize}[leftmargin=*]
	\item Robust policy $(\robpol(n,p))$, corresponds to the optimal policy obtained by solving $(LP)_{n,p}$.
	
	\item \citeauthor{tamaki1991secretary}'s policy $(\tamapol(n,p))$, which maximizes the probability of selecting the best candidate willing to accept an offer. To be precise, \citet{tamaki1991secretary} studies two models of availability: MODEL 1, where the availability of the candidate is known after an offer has been made; and MODEL 2, where the availability of the candidate is known upon the candidate's arrival. MODEL 2 has higher values and is computationally less expensive to compute; we use this policy. Note that in $\SPUA$, the expected value obtained by learning the availability of the candidate after making an offer is the same value obtained in the model that learns the availability upon arrival. Therefore, MODEL 2 is a better model to compare our solutions to than MODEL 1.
\end{itemize}
In the other group, we have policies that are computed with knowledge of the utility function.
\begin{itemize}[leftmargin=*]
	\item The expected optimal offline value $(\E[U(\OPT(U,n,p))])$, which knows the outcome of the offers and the utility function. It can be computed via $\sum_{i=1}^n U_i p (1-p)^{i-1}$. For simplicity, we write $\OPT$ when the parameters are clear from the context.
	
	\item The optimal rank-based policy if the utility function is known in advance, $(\utilpol(U,n,p))$, computed by solving the optimality equation
	\[
	v_{(t,s)} = \max\left\{ U_t(s) + \frac{1-p}{t+1} \sum_{\sigma=1}^{t+1} v_{(t+1,\sigma)}, \frac{1}{t+1} \sum_{\sigma=1}^{t+1} v_{(t+1,\sigma)}, \right\} ,
	\]
	with boundary condition $v_{(n+1,\sigma)}=0$ for any $\sigma$. We write $\utilpol(n,p)$ when $U$ is clear from the context. We use a rank-based policy as opposed to a value-based policy for computational efficiency. 
\end{itemize}
Note that $\E [U(\robpol)], \E [U(\tamapol)]\leq \E [U(\utilpol)]\leq \E[U(\OPT)]$ and by Proposition~\ref{prop:utility_characterization}, $E[U(\robpol)] \geq \gamma_{n}^* \E[U(\mathcal{A})]$ for any $\mathcal{A}$ of the aforementioned policies.

We consider the following decreasing utility functions:
\begin{itemize}[leftmargin=*]
	\item \textbf{Top $k$ candidates are valuable (top-$k$)}. For $k\in [n]$, we consider utility functions of the form $U_i= 1 + \varepsilon^{i}$ for $i\in [k]$ and $U_i=\varepsilon^{i}$ for $i>k$ with $\varepsilon = 1/n$. Intuitively, we aim to capture the notion of an elite set of candidates, where candidates outside the top $k$ are not nearly as appealing to the decision maker. For instance, renowned brands like to target certain members of a population for their ads. 
We test $k= 1,2,3,4$.
	
	\item \textbf{Power law population}. $U_i = i^{-1/(1+\delta)}$ for $i\geq 1$ and small $\delta>0$. Empirical studies have shown that the distribution of individual performances in many areas follows a power law or Pareto distribution \citep{clauset2009power}. If we select a random person from $[n]$, the probability that this individual has a performance score of at least $ t$ is proportional to $ t^{-(1+\delta)}$. We test $\delta \in \{ 10^{-2}, 10^{-1}, 2\cdot 10^{-1} \}$.
\end{itemize}
We run experiments for $n=200$ candidates and range the probability of acceptance $p$ from $p=10^{-2}$ to $p=9\cdot 10^{-1}$.

\subsection{Results for top-$k$ utility function}

In this subsection, we present the results for utility function that has largest values in the top $k$ candidates, where $k=1,2,3,4$. In  Figure~\ref{fig:graphu10200}, we plot the ratio between the value collected by $\mathcal{A}$ and $\E[U(\OPT)]$, for $\mathcal{A}$ being $\utilpol, \robpol$ and $\tamapol$.

\begin{figure}[h!]
	\centering
	\includegraphics[width=0.49\linewidth]{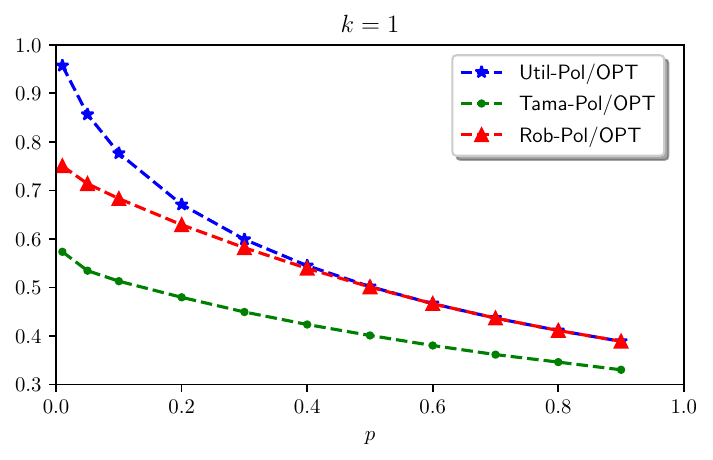}
	\includegraphics[width=0.49\linewidth]{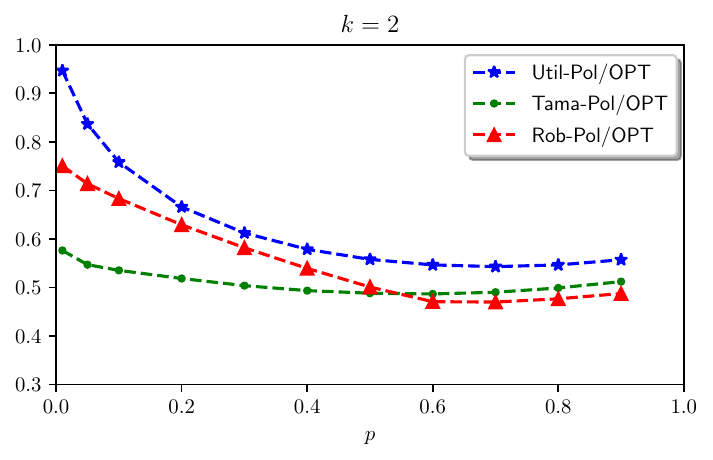}
	\includegraphics[width=0.49\linewidth]{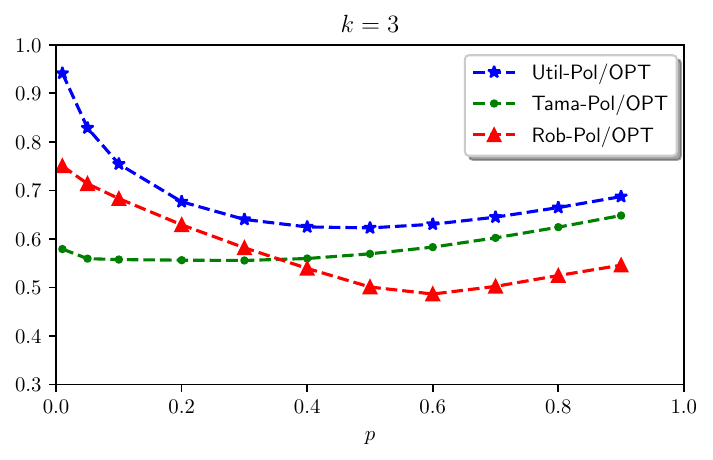}
	\includegraphics[width=0.49\linewidth]{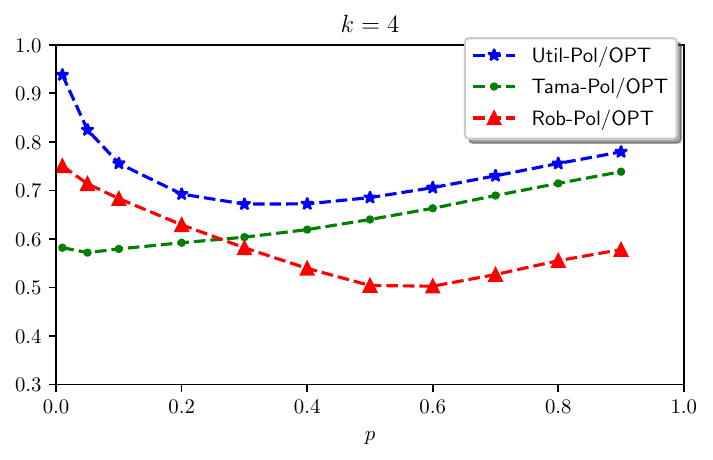}
	\caption{Approximation factors for the top $k$ utility function, for $k=1,2,3,4$.}
	\label{fig:graphu10200}
\end{figure}

Naturally, of all sequential policies, $\utilpol$ attains the largest approximation factor of $\E[U(\OPT)]$. We observe empirically that $\robpol$ collects larger values than $\tamapol$ for smaller values of $k$. Interestingly, we observe in the four experiments that the approximation factor for $\robpol$ is always better than $\tamapol$ for small values of $p$. In other words, robustness helps online selection problems when the probability of acceptance is relatively low. In general, for this utility function, we observe in the experiments that $\robpol$ collects at least $50\%$ of the optimal offline value, except for the case $k=1$. As $n$ increases (not shown in the figures), we observe that the approximation factors of all three policies decrease; this is consistent with the fact that $\gamma_n^*$, the optimal robust ratio, is decreasing in $n$.

\subsection{Results for power law utility function}

In this subsection, we present the result of our experiments for the power law utility function $U_i = i^{-(1+\delta)}$ for $\delta=10^{-2}, 10^{-1}$ and $2\cdot 10^{-1}$. In Figure~\ref{fig:graphu22200}, we display the approximation factors of the three sequential policies.

\begin{figure}[h!]
	\centering
	\includegraphics[width=0.49\linewidth]{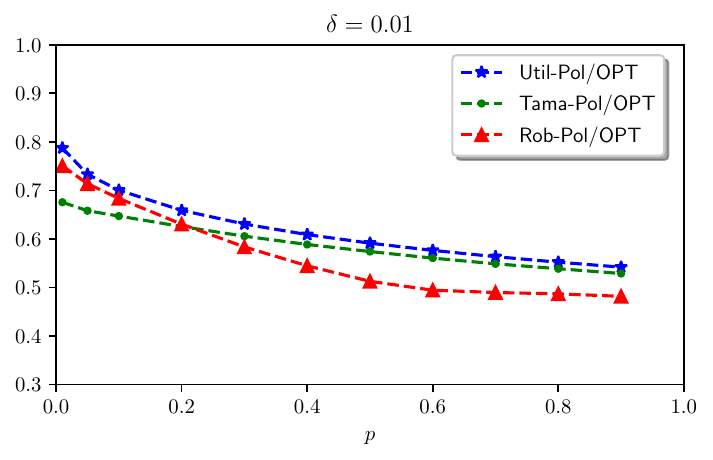}
	\includegraphics[width=0.49\linewidth]{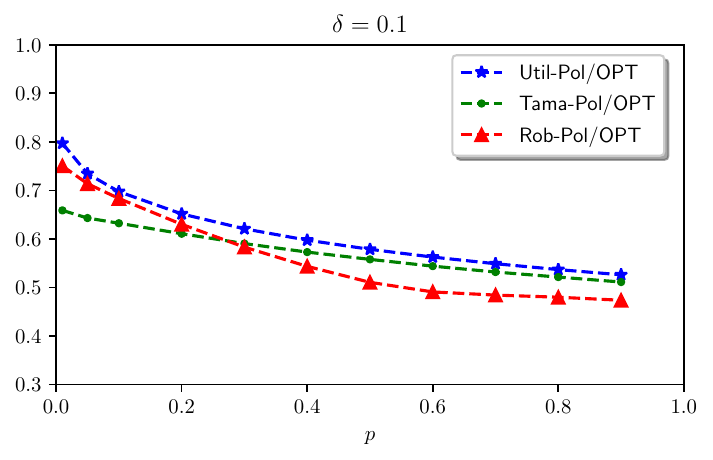}
	\includegraphics[width=0.49\linewidth]{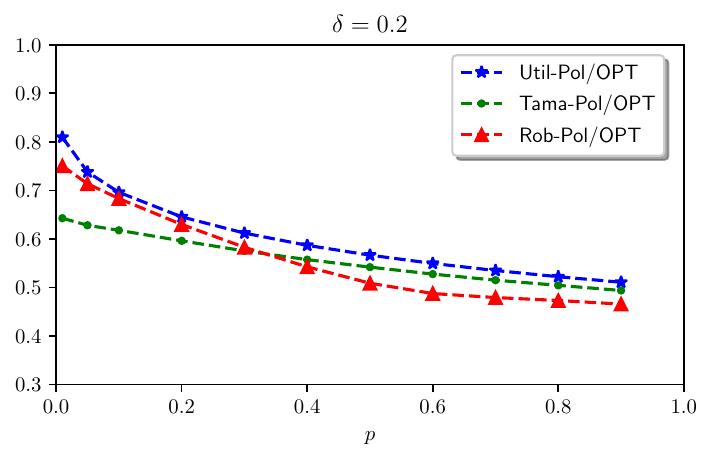}
	\caption{Approximation factor for the power law utility function. The function has the form $U_i=i^{-(1+\delta)}$. Experiments are run for $\delta \in \{ 10^{-2}, 10^{-1}, 2\cdot 10^{-1} \}$.}
	\label{fig:graphu22200}
\end{figure}

Again, we note that $\utilpol$ collects the largest fraction of all sequential policies. We also observe a similar behavior as in the case of the top-$k$ utility function. For small values of $p$, $\robpol$ empirically collects more value than $\tamapol$. As $p$ increases, the largest valued candidate is more willing to accept an offer; hence, \citeauthor{tamaki1991secretary}'s \citeyear{tamaki1991secretary} policy is able to capture that candidate.

In general, our experiments suggests that $\robpol$ is better than $\tamapol$ for smaller values of $p$. This may be of interest in applications where the probability of acceptance is small, say $20\%$ or less.
For instance, some sources state that click-through rates (the fraction of time that an ad is clicked on) are typically less than $1\%$ \citep{farahat2012effective}. Therefore, ad display policies based on $\robpol$ may be more appropriate than other alternatives.

\section{Concluding Remarks}
\label{sec:conc}

We have studied the $\SPUA$, which models an online selection problem where candidates can reject an offer. We introduced the robust ratio as a metric that tries to simultaneously maximize the probability of successfully selecting one of the best $k$ candidates given that at least one of these will accept an offer, for all values of $k$. This objective captures the worst-case scenario for an online policy against an offline adversary that knows in advance which candidates will accept an offer.
We also demonstrated a connection between this robust ratio and online selection with utility functions. We presented a framework based on MDP theory to derive a linear program that computes the optimal robust ratio and its optimal policy. This framework can be generalized and used in other secretary problems (Section~\ref{sec:subsec:warm_up}), for instance, by augmenting the state space. Furthermore, using the MDP framework, we were able to show that the robust ratio $\gamma_n^*$ is a decreasing function in $n$. This enabled us to make connections between early works in secretary problems and recent advances. To study our LP, we allow the number of candidates to go to infinity and obtain a continuous LP. We provide bounds for this continuous LP, and optimal solutions for large $p$.

We empirically observe that the robust ratio $\gamma_n^*(p)$ is convex and decreasing as a function of $p$, and thus we expect the same behavior from $\gamma_{\infty}(p)$, though this remains to be proved (see Figure~\ref{fig:current_bounds}). Based on numerical values obtained by solving $(LP)_{n,p}$, we conjecture that $\lim_{p\to 0} \gamma_\infty^*(p) = 1/\beta \approx 0.745$. This limit is also observed in a similar model \citep{correa2020sample}, where a fraction of the input is given in advance to the decision maker as a sample. In our model, if we interpret the rejection from a candidate as a sample, then in the limit both models might behave similarly. Numerical comparisons between our policies and benchmarks suggest that our proposed policies perform especially well in situations where the probability of acceptance is small, say less than $20\%$, as in the case of online advertisement.

A natural extension is the value-based model, where candidates reveal numerical values instead of partial rankings. Our algorithms are rank-based and guarantee an expected value at least a fraction $\gamma_n^*(p)$ of the optimal offline expected value (Proposition~\ref{prop:utility_characterization}). Nonetheless, algorithms based on numerical values may attain higher expected values than the ones guaranteed by our algorithm. In fact, a threshold algorithm based on sampling may perhaps be enough to guarantee better values, although this requires an instance-dependent approach. The policies we consider are instance-agnostic, can be computed once and used for any input sequence of values. In this value-based model, we would like to consider other arrivals processes. A popular arrival model is the adversarial arrival, where an adversary constructs values and the arrival order in response to the DM's algorithm. Unfortunately, a construction similar to the one in \citet{marchetti1995stochastic} for the online knapsack problem shows that it is impossible to attain a finite competitive ratio in an adversarial regime.

Customers belonging to different demographic groups may have different willingness to click on ads~\citep{cheng2010personalized}. In this work, we considered a uniform probability of acceptance, and our techniques do not apply directly in the case of different probabilities. In ad display, one way to cope with different probabilities depending on customers' demographic group is the following. Upon observing a customer, a random variable (independent of the ranking of the candidate) signals the group of the customer. The probability of acceptance of a candidate depends on the candidate's group. Assuming independence between the rankings and the demographic group allows us to learn nothing about the global quality of the candidates beyond what we can learn from the partial rank. Using the framework presented in this work, with an augmented state space  (time, partial rank, group type), we can write an LP that solves this problem exactly. Nevertheless, understanding the robust ratio in this new setting and providing a closed-form policy are still open questions.

Another interesting extension is the case of multiple selections. In practice, platforms can display the same ad to more than one user, and some job posts require more than one person for a position. In this setting, the robust ratio is less informative. If $k$ is the number of possible selections, one possible objective is to maximize the number of top $k$ candidates selected. We can apply the framework from this work to obtain an optimal LP. Although there is an optimal solution, no simple closed-form strategies have been found even for $p=1$; see e.g.\ \citet{buchbinder2014secretary}).




{\bibliographystyle{plainnat}\fontsize{10}{0}\selectfont\bibliography{biblio}
}

\small
\appendix

\section{Appendix}\label{sec:Appendix}

\subsection{Missing proofs from Section~\ref{sec:prelims}}\label{sec:app:prelims_missing}

\begin{proof}[Proof of Proposition~\ref{prop:utility_characterization}]
	Let $\ALG$ be a $\gamma$-robust algorithm. Fix any algorithm $\ALG$ and any $U_1 \geq \cdots \geq U_n\geq 0$. Let $\varepsilon>0$ and let $\widehat{U}_i= U_i + \varepsilon^{i}$. Thus $\widehat{U}_i > \widehat{U}_{i+1}$ and so rank and utility are in one-to-one correspondence. Then
	\begin{align*}
		\frac{\E[\widehat{U}(\ALG)]}{\E[\widehat{U}(\OPT)]} & \geq \min_{x \in \{ \widehat{U}_1,\ldots, \widehat{U}_n  \}} \frac{\Prob(\widehat{U}(\ALG) \geq x)}{\Prob(\widehat{U}(\OPT) \geq x)}  = \min_{k\leq n} \frac{\Prob(\ALG \text{ collects a top $k$ candidate})}{\Prob(\text{A top $k$ candidate accepts})}  \geq \gamma
	\end{align*}
	where we used the fact that $\ALG$ is $\gamma$-robust. Notice that $\E[U(OPT)]\leq \E[\widehat{U}(OPT)]$, and also $\E[\widehat{U}(\ALG)]\leq \E[U(\ALG)]+ \varepsilon$. Thus doing $\varepsilon \to 0$ we obtain
	\begin{align}
		\frac{\E[U(\ALG)]}{\E[U(\OPT)]} \geq \gamma \label{ineq:utility_gamma_robust}
	\end{align}
	for any nonzero vector $U$ with $U_1\geq U_2 \geq \cdots \geq U_n\geq 0$. This finishes the first part. For the second part, let
	\[
	\overline{\gamma}_n = \min_{\ALG}\max \left\{ \frac{\E[U(\ALG)]}{\E[U(\OPT)]} : U:[n]\to \R_+, U_1\geq \cdots\geq U_n \right\}.
	\]
	Note that the RHS of Inequality~\eqref{ineq:utility_gamma_robust} is independent of $U$, thus minimizing in $U$ in the LHS and then maximizing in $\ALG$ on both sides we obtain $\overline{\gamma}_n \geq \gamma_n^*$. 
	
	To show the reverse inequality, fix $k\in [n]$ and let $\widehat{U}:[n]\to \R_+$ given by $\widehat{U}_i=1$ for $i\leq k$ and $\widehat{U}_i=0$ for $i > k$. Then,
	\[
	\frac{\Prob(\ALG \text{ collects a top $k$ candidate})}{\Prob(\text{A top $k$ candidate accepts})} = \frac{\E[\widehat{U}(\ALG)]}{\E[\widehat{U}(\OPT)]} \geq \min_{\substack{U:[n]\to \R_+\\ U_1 \geq \cdots \geq U_n}} \frac{\E[U(\ALG)]}{\E[U(\OPT)]}.
	\]
	This bound holds for any $k$, thus minimizing over $k$ and then maximizing over $\ALG$ on both sides, we obtain $\gamma_n^*\geq \overline{\gamma}_n$, which finishes the second part.
\end{proof}

\subsection{Missing proofs from Section~\ref{sec:lp_finite_formulation}}\label{sec:app:lp_via_mdp}

Here we present a detailed derivation of Theorem~\ref{thm:main_polyhedron} and Theorem~\ref{thm:main_lp} by revisiting Section~\ref{sec:lp_finite_formulation}.


As stated in Section~\ref{sec:lp_finite_formulation}, we are going to proceed in two stages: (1) First, using a generic utility function, we uncover the space of policies $\textsc{Pol}$. (2) Second, we show that our objective is a concave linear function of the variables of the space of policies that allows us to optimize it over $\textsc{Pol}$.

\subsubsection{Stage 1: The space of policies}

Let $U:[n]\to \R_+$ be an arbitrary utility function and suppose that a DM makes decisions based on partial rankings and her goal is to maximize the utility obtained, where she gets $U_i$ if she is able to collect a candidate of ranking $i$. Let $v^*_{(t,s)}$ be the optimal value obtained by a DM in $\{ t,t+1,\ldots,n \}$ if she is currently observing the $t$-th candidate and this candidate has partial ranking $r_t=s$, i.e., the current state of the system is $s_t=(t,s)$. The function $v^*_{(t,s)}$ is called the \emph{value function}. We define $v^*_{(n+1,\sigma)}$ for any $\sigma\in [n+1]$. Then, by optimality equations~\citep{puterman2014markov}, we must have
\begin{align}
	v^*_{(t,s)} = \max \left\{  p U_t(s) + (1-p) \frac{1}{t+1} \sum_{\sigma=1}^{t+1} v^*_{(t+1,\sigma)}, \frac{1}{t+1} \sum_{\sigma=1}^{t+1} v^*_{(t+1,\sigma)}   \right\} \label{eq:mdp_opt_1}
\end{align}
with $v_{(n+1,s)}^*=0$ for any $s\in [n+1]$. The first part in the max operator corresponds to the expected value obtained by selecting the current candidate, while the second part in the operator corresponds to the expected value collected by passing to the next candidate. Here $U_t(s)=\sum_{i=1}^n U_i \Prob(R_t=i\mid r_t=s)$ is the expected value collected by the DM given that the current candidate has partial ranking $r_t=s$ and accepts the offer. Although an optimal policy for this arbitrary utility function can be computed via the optimality equations, we are more interested in all the possible policies that can be obtained via these formulations. For this, we are going to use linear programming. This has been used in MDP theory~\citep{puterman2014markov,altman1999constrained} to study the space of policies.

\begin{figure}[h!]
	\centering
	\resizebox{\columnwidth}{!}{ \begin{tabular}{lp{3in}|lp{3in}}
			$(D)$ & $  \displaystyle \min_{v\geq 0} \quad v_{(1,1)} $ & $(P)$ & $\displaystyle\max_{\x,\y\geq 0}\quad  \sum_{t=1}^n \sum_{s=1}^t U_t(s) x_{t,s}$ \\
			s.t. & \vspace{-1cm}{\begin{align}
					v_{(t,s)} & \geq p U_t(s) + \frac{1-p}{t+1} \sum_{\sigma=1}^{t+1}  v_{(t+1,\sigma)} \nonumber\\ & \forall t\in [n],s\in [t] \label{const:mdp_lp_1}\\
					v_{(t,s)} & \geq \frac{1}{t+1} \sum_{\sigma=1}^{t+1} v_{(t+1,\sigma)} \nonumber \\
					& \forall t\in [n],s\in [t] \label{const:mpd_lp_2}
				\end{align}\vspace{-0.5cm}} & s.t. & \vspace{-0.8cm} {\begin{align}
					x_{1,1} + y_{1,1} & \leq 1 \label{const:mdp_dlp_1}\\
					x_{t,s} + y_{t,s} & \leq \frac{1}{t}\left( \sum_{\sigma=1}^{t-1} y_{t-1,\sigma} + (1-p) x_{t-1,\sigma} \right)  \nonumber \\
					& \forall t\in [n],s\in [t] \label{const:mpd_dlp_2}
				\end{align}\vspace{-0.5cm}}
	\end{tabular}}
	\caption{Linear program that finds value function $v^*$ for $\SPUA$ and its dual.}\label{fig:LP_SPUA}
\end{figure}

The following proposition shows that the solution of the optimality equations~\eqref{eq:mdp_opt_1} solves the LP $(D)$ in Figure~\ref{fig:LP_SPUA}. We denote by $v_{(D)}$ the value of the LP $(D)$.

\begin{proposition}
	Let $v^*=(v^*_{(t,s)})_{t,s}$ be a solution of~\eqref{eq:mdp_opt_1}, then $v^*$ is an optimal solution of the problem of $(D)$ in Figure~\ref{fig:LP_SPUA}.
\end{proposition}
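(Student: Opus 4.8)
The plan is to prove the two halves of the optimality statement in turn: that $v^*$ is feasible for $(D)$, and that every feasible point of $(D)$ has objective value at least $v^*_{(1,1)}$. This is the standard value-function/LP correspondence in the spirit of \citet{manne60}, specialized to our transition structure.

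First I would verify feasibility. By definition, $v^*_{(t,s)}$ equals the maximum of the two quantities $pU_t(s) + \frac{1-p}{t+1}\sum_{\sigma=1}^{t+1} v^*_{(t+1,\sigma)}$ and $\frac{1}{t+1}\sum_{\sigma=1}^{t+1} v^*_{(t+1,\sigma)}$ appearing in the optimality equation~\eqref{eq:mdp_opt_1}. Since a maximum dominates each of its arguments, $v^*$ satisfies both~\eqref{const:mdp_lp_1} and~\eqref{const:mpd_lp_2} automatically. Nonnegativity of $v^*$ follows by backward induction on $t$ starting from the boundary $v^*_{(n+1,\sigma)}=0$: if $v^*_{(t+1,\cdot)}\geq 0$, then both arguments of the max are nonnegative because $U\geq 0$ and $p\in(0,1]$, hence $v^*_{(t,s)}\geq 0$.

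The heart of the argument is a domination claim: any feasible $v$ of $(D)$ satisfies $v_{(t,s)}\geq v^*_{(t,s)}$ for every state $(t,s)$. I would prove this by backward induction on $t$, with the base case $t=n+1$, where both sides equal the boundary value $0$ (under the convention that the sums $\sum_{\sigma=1}^{n+1} v_{(n+1,\sigma)}$ at $t=n$ are read as zero). For the inductive step, fix $(t,s)$ and assume $v_{(t+1,\sigma)}\geq v^*_{(t+1,\sigma)}$ for all $\sigma$. Feasibility supplies the two inequalities~\eqref{const:mdp_lp_1} and~\eqref{const:mpd_lp_2} for $v$; substituting the induction hypothesis into their right-hand sides (the coefficients $\frac{1-p}{t+1}$ and $\frac{1}{t+1}$ are nonnegative) shows that $v_{(t,s)}$ is at least each of the two arguments of the max in~\eqref{eq:mdp_opt_1}, and therefore at least their maximum $v^*_{(t,s)}$. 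Specializing the claim to the initial state yields $v_{(1,1)}\geq v^*_{(1,1)}$ for every feasible $v$, so $v^*$ attains the minimum; combined with the feasibility established above, this proves optimality.

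The argument is almost entirely routine monotone substitution, so I do not expect a serious obstacle. The only point deserving attention is bookkeeping at the terminal stage: one must keep the boundary convention $v^*_{(n+1,\sigma)}=0$ consistent between the optimality equations and the LP constraints at $t=n$, so that the backward recursion and the induction share the same base case. The single conceptual ingredient is the observation that the pair of one-sided LP inequalities~\eqref{const:mdp_lp_1}--\eqref{const:mpd_lp_2} encodes exactly the ``$\geq\max$'' content of the Bellman equation~\eqref{eq:mdp_opt_1}, which is what simultaneously makes $v^*$ feasible and forces it to be minimal.
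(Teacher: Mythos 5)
Your proposal is correct and follows essentially the same route as the paper's proof: feasibility of $v^*$ follows because the Bellman maximum dominates each of its arguments, and optimality follows from the domination claim $v \geq v^*$ for every feasible $v$, proved by backward induction from the terminal stage. Your extra check that $v^* \geq 0$ (needed since $(D)$ constrains $v \geq 0$) is a small point the paper leaves implicit, but it does not change the argument.
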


\begin{proof}
	Since $v^*$ satisfies the optimality equation~\eqref{eq:mdp_opt_1} then it clearly satisfies constraints~\eqref{const:mdp_lp_1} and~\eqref{const:mpd_lp_2}. Thus, $v^*$ is feasible and so $v^*_{(1,1)}\geq v_{(D)}$.
	
	To show the optimality of $v^*$, we show that any solution $\overline{v}$ of the LP is an upper bound for the value function: $v^*\leq \overline{v}$. To show this, we proceed by backward induction in $t=n+1,n,\ldots,1$ and we prove that $v^*_{(t,s)}\leq \overline{v}_{(t,s)}$ for any $s\in [t]$.
	
	We start with the case $t=n+1$. In this case $v^*_{(n+1,s)}=0$ for any $s$ and since $\overline{v}{(n+1,s)}\geq 0$ for any $s$, then the result follows. 
	
	Suppose the result is true for $t=\tau+1,\ldots,n+1$ and let us show it for $t=\tau$. Using Constraints~\eqref{const:mdp_lp_1}-\eqref{const:mpd_lp_2} we must have
	\begin{align*}
		\overline{v}_{(\tau,s)} & \geq \max \left\{  p U_\tau(s) + (1-p) \frac{1}{\tau +1} \sum_{\sigma=1}^{\tau+1} \overline{v}_{(\tau+1,\sigma)}, \frac{1}{\tau+1} \sum_{\sigma=1}^{\tau+1} \overline{v}_{(\tau+1,\sigma)}   \right\} \\
		& \geq \max \left\{  p U_\tau(s) + (1-p) \frac{1}{\tau+1} \sum_{\sigma=1}^{\tau+1} v^*_{(\tau+1,\sigma)}, \frac{1}{\tau+1} \sum_{\sigma=1}^{\tau+1} v^*_{(\tau+1,\sigma)}   \right\} \tag{backward induction}\\
		& = v^*_{(\tau,s)}
	\end{align*}
	where the last line follows by the optimality equations~\eqref{eq:mdp_opt_1}. Thus, $v_{(D)}= \overline{v}_{(1,1)} \geq v^*_{(1,1)}$.
\end{proof}

The dual of the LP (D) is depicted in Figure~\ref{fig:LP_SPUA} and named (P). The crucial fact to notice here is that the feasible region of (P) is oblivious of the utility function (or rewards) given initially to the MDP. This suggest that the region
\[
\textsc{Pol} = \left\{ (\x,\y)\geq 0 : x_{1,1}+ y_{1,1} =1, x_{t,s}+ y_{t,s} = \frac{1}{t}\sum_{\sigma=1}^{t-1}\left(  y_{t-1,\sigma} + (1-p)x_{t-1,\sigma}  \right) , \forall t\in [n], s\in [t] \right\}
\]
codifies all possible policies. The following two propositions formalize this. 

%

\begin{proposition}\label{prop:policies_are_points}
	For any policy $\mathcal{P}$ for the $\SPUA$, consider $x_{t,s}=\Prob(\mathcal{P}\text{ reaches state }(t,s) \text{, selects candidate})$ and $y_{t,s}= \Prob(\mathcal{P}\text{ reaches state }(t,s) \text{, does not select candidate})$. Then $(\x,\y)$ belongs to $\textsc{Pol}$.
\end{proposition}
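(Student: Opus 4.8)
The plan is to interpret both defining conditions of $\textsc{Pol}$ probabilistically and verify them by conditioning on the state reached at the previous stage. Non-negativity of $(\x,\y)$ is immediate, since each coordinate is by definition a probability. For the normalization at $t=1$, I would observe that the computation of any policy starts deterministically at $S_1=(1,1)$, so $\mathcal{P}$ reaches $(1,1)$ with probability one; at that state $\mathcal{P}$ either extends an offer (contributing to $x_{1,1}$) or passes (contributing to $y_{1,1}$), and these two outcomes are mutually exclusive and exhaustive, whence $x_{1,1}+y_{1,1}=1$.

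The core of the argument is the recursion for $t>1$. The key observation is that $x_{t,s}+y_{t,s}$ equals the probability that $\mathcal{P}$ reaches state $(t,s)$ at all, because conditioned on reaching $(t,s)$ the policy performs exactly one of the two actions offer/pass (possibly at random, in which case the two probabilities still sum to the reach probability). So it suffices to compute $\Prob(\mathcal{P}\text{ reaches }(t,s))$ via the law of total probability, conditioning on the state occupied at stage $t-1$. The only way the computation advances from stage $t-1$ to stage $t$ is if the candidate at stage $t-1$ was passed over, or received an offer that was rejected; an accepted offer instead sends the process to the absorbing state $\Theta$. From state $(t-1,\sigma)$, the probability of reaching it and passing is $y_{t-1,\sigma}$, and the probability of reaching it, making an offer, and being rejected is $(1-p)\,x_{t-1,\sigma}$.

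Next I would invoke the transition structure recorded in the preliminaries: conditional on advancing to stage $t$, the new candidate's partial rank is uniform on $[t]$ and independent of the past, so each target partial rank $s\in[t]$ is realized with probability $1/t$. Multiplying each of the two ``survival'' probabilities from stage $t-1$ by this $1/t$ factor and summing over $\sigma\in[t-1]$ gives
\[
\Prob(\mathcal{P}\text{ reaches }(t,s)) = \frac{1}{t}\sum_{\sigma=1}^{t-1}\bigl(y_{t-1,\sigma}+(1-p)\,x_{t-1,\sigma}\bigr),
\]
which is exactly the defining identity of $\textsc{Pol}$ once the left-hand side is identified with $x_{t,s}+y_{t,s}$.

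I expect no serious obstacle; the proof is a direct bookkeeping exercise in conditioning. The one point requiring care is the independence of partial ranks across stages, namely that the uniform factor $1/t$ is the same regardless of which $\sigma$ was realized at stage $t-1$ and regardless of the policy's action history. This is precisely the fact established in the preliminaries that $\Prob(r_{t}=s\mid r_{t-1}=\sigma)=1/t$, and it is what decouples the transition probability from the prior trajectory and lets the sum over $\sigma$ factor cleanly.
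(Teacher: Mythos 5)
Your proposal is correct and takes essentially the same approach as the paper's proof: both identify $x_{t,s}+y_{t,s}$ with the probability of reaching state $(t,s)$, decompose that event over the stage-$(t-1)$ states according to ``passed'' or ``offer made but rejected,'' and use the independence and uniformity of partial ranks (together with the independence of the acceptance coin and the fact that the action at stage $t-1$ depends only on the past) to extract the factors $1/t$ and $1-p$. The paper merely carries this out more formally, via an indicator-function decomposition whose expectation yields the same conditional-probability computations you describe.
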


\begin{proof}
	Consider the event $D_t=\{ t\text{-th candidate turns down offer}  \}$. Then $p=\Prob(D_t)$. Consider also the events
	$$O_t=\{ \mathcal{P} \text{ reaches }t\text{-th candidate and extends an offer} \}$$
	and
	$$\overline{O}_t = \{ \mathcal{P} \text{ reaches }t\text{-th candidate and does not extend offer} \}.$$
	Then $O_t$ and $\overline{O}_t$ are disjoint events and $O_t\cup \overline{O}_t$ equals the event of $\mathcal{P}$ reaching stage $t$. Thus
	\begin{align}
		\mathbf{1}_{O_t \cap \{ S_t=(t,s)  \}} + \mathbf{1}_{\overline{O}_t\cap \{ S_t= (t,s) \}} & = \mathbf{1}_{\{\mathcal{P} \text{ reaches state }S_t=(t,s)\}}. \label{eq:equality_event_pol_1}
	\end{align}
	Note that $x_{t,s}= \Prob(O_t\cap \{ S_t=(t,s) \})$ and $y_{t,s}= \Prob(\overline{O}_t\cap \{ S_t=(t,s) \})$. For $t=1$, then $S_1=(1,1)$ and $\{ \mathcal{P} \text{ reaches state }S_1=(1,1) \}$ occurs with probability $1$. Thus
	\[
	x_{1,1}+ y_{1,1}= 1.
	\]
	For $t> 1$, by the dynamics of the system, the only way that $\mathcal{P}$ reaches state $t$ is by reaching stage $t-1$ and not extending an offer to the $t-1$ candidate or extending an offer but this was turned down. Thus,
	\begin{align}
		\mathbf{1}_{\{\mathcal{P} \text{ reaches state }S_t=(t,s)\}} = \sum_{\sigma=1}^{t-1} \mathbf{1}_{\overline{O}_{t-1} \cap \{ S_{t-1}=(t-1,\sigma)\}\cap \{S_{t}=(t,s)\}} + \mathbf{1}_{O_{t-1}\cap \{S_{t-1}=(t-1,\sigma)\}\cap \overline{D}_{t-1} \cap \{ S_t=(t,s)\}} \label{eq:equality_event_pol_2}
	\end{align}
	Note that
	\begin{align*}
		&\Prob(\overline{O}_{t-1} \cap \{ S_{t-1}=(t-1,\sigma)\}\cap \{S_{t}=(t,s)\} ) \\
		& = \Prob(\overline{O}_{t-1} \cap \{ S_{t-1}=(t-1,\sigma)  \} \mid S_{t}=(t,s)) \Prob(S_t=(t,s)) \\
		& = \Prob(\overline{O}_{t-1} \cap \{ S_{t-1}= (t-1,\sigma) \}) \frac{1}{t} \\
		& = \frac{1}{t}y_{t-1,\sigma}.
	\end{align*}
	Note that we use that $\mathcal{P}$'s action at stage $t-1$ only depends on $S_{t-1}$ and not what is observed in the future. Likewise we obtain
	\begin{align*}
		&\Prob(O_{t-1}\cap \{S_{t-1}=(t-1,\sigma)\}\cap \overline{D}_{t-1} \cap \{ S_t=(t,s)\}) \\
		 & = \Prob(\overline{D}_{t-1})\Prob(O_{t-1}\cap \{S_{t-1}=(t-1,\sigma)\} \cap \{ S_t=(t,s)\}) \\
		& = (1-p) \Prob(O_{t-1}\cap \{S_{t-1}=(t-1,\sigma)\} \cap \{ S_t=(t,s)\}) \\
		& =\frac{1-p}{t} x_{t-1,\sigma}.
	\end{align*} 
	Using the equality between~\eqref{eq:equality_event_pol_1} and~\eqref{eq:equality_event_pol_2} and taking expectation, we obtain
	\begin{align*}
		x_{t,s} + y_{t,s} & = \sum_{\sigma=1}^{t-1} \frac{1}{t} y_{t-1,\sigma} + \frac{1-p}{t} x_{t-1,\sigma}
	\end{align*}
	which shows that $(\x,\y)\in \textsc{Pol}$.
\end{proof}

Conversely

\begin{proposition}\label{prop:tight_points_are_policies}
	Let $(\x,\y)$ be a point in $\textsc{Pol}$. Consider the (randomized) policy $\mathcal{P}$ that in state $(t,s)$ makes an offer to the candidate $t$ with probability $x_{1,1}$ if $t=1$ and
	\[
	\frac{tx_{t,s}}{\sum_{\sigma=1}^{t-1}y_{t-1,\sigma} + (1-p) x_{t-1,\sigma}},
	\]
	if $t>1$. Then $\mathcal{P}$ is a policy for $\SPUA$ such that $x_{t,s}=\Prob(\mathcal{P}\text{ reaches state }(t,s) \text{, selects candidate})$ and $y_{t,s}= \Prob(\mathcal{P}\text{ reaches state }(t,s) \text{, does not select candidate})$ for any $t\in [n]$ and $s\in [t]$.
\end{proposition}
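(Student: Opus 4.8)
The plan is to verify two things: that the offer probabilities defining $\mathcal{P}$ are genuine probabilities in $[0,1]$ (so $\mathcal{P}$ is a well-defined randomized policy), and that the given coordinates $(\x,\y)$ are recovered exactly as the reach-and-act probabilities of $\mathcal{P}$. The key simplification comes from the defining equality of $\textsc{Pol}$: for $t>1$ the denominator $\sum_{\sigma=1}^{t-1}\left(y_{t-1,\sigma}+(1-p)x_{t-1,\sigma}\right)$ equals $t(x_{t,s}+y_{t,s})$, so the offer probability at state $(t,s)$, which I denote $q_{t,s}$, collapses to $q_{t,s}=x_{t,s}/(x_{t,s}+y_{t,s})\in[0,1]$ by nonnegativity of $(\x,\y)$. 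For $t=1$ the offer probability $x_{1,1}=x_{1,1}/(x_{1,1}+y_{1,1})$ also lies in $[0,1]$ since $x_{1,1}+y_{1,1}=1$. Whenever $x_{t,s}+y_{t,s}=0$ the ratio is formally $0/0$, in which case I would set the offer probability to $0$; I return to this degenerate case below.

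With $\mathcal{P}$ well-defined, write $\hat{x}_{t,s}=\Prob(\mathcal{P}\text{ reaches }(t,s),\text{ selects})$ and $\hat{y}_{t,s}=\Prob(\mathcal{P}\text{ reaches }(t,s),\text{ passes})$, and prove $\hat{x}_{t,s}=x_{t,s}$ and $\hat{y}_{t,s}=y_{t,s}$ by induction on $t$. In the base case $t=1$, the policy reaches $(1,1)$ with probability $1$ and offers with probability $x_{1,1}$, so $\hat{x}_{1,1}=x_{1,1}$ and $\hat{y}_{1,1}=1-x_{1,1}=y_{1,1}$. For the inductive step I would reuse the state-transition decomposition from the proof of Proposition~\ref{prop:policies_are_points}: $\mathcal{P}$ reaches $(t,s)$ precisely when it reaches some $(t-1,\sigma)$ and either passes, or offers and is rejected, after which the partial rank equals $s$ with the independent probability $1/t$. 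This gives
\[
\Prob(\mathcal{P}\text{ reaches }(t,s)) = \tfrac{1}{t}\sum_{\sigma=1}^{t-1}\!\left(\hat{y}_{t-1,\sigma}+(1-p)\hat{x}_{t-1,\sigma}\right) = \tfrac{1}{t}\sum_{\sigma=1}^{t-1}\!\left(y_{t-1,\sigma}+(1-p)x_{t-1,\sigma}\right) = x_{t,s}+y_{t,s},
\]
where the middle equality is the inductive hypothesis and the last is the $\textsc{Pol}$ constraint. Multiplying this reach probability by $q_{t,s}$ and by $1-q_{t,s}$ then yields $\hat{x}_{t,s}=(x_{t,s}+y_{t,s})\,q_{t,s}=x_{t,s}$ and $\hat{y}_{t,s}=(x_{t,s}+y_{t,s})(1-q_{t,s})=y_{t,s}$, closing the induction.

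The only delicate point, and the one I expect to require the most care, is the degenerate case $x_{t,s}+y_{t,s}=0$, where $q_{t,s}$ is undefined. Here nonnegativity forces $x_{t,s}=y_{t,s}=0$, and the reach-probability computation above shows $\Prob(\mathcal{P}\text{ reaches }(t,s))=x_{t,s}+y_{t,s}=0$; hence the action arbitrarily assigned at $(t,s)$ contributes nothing, and $\hat{x}_{t,s}=\hat{y}_{t,s}=0=x_{t,s}=y_{t,s}$ still holds. Everything else is routine bookkeeping of conditioning on the independent partial-rank transition, exactly as in Proposition~\ref{prop:policies_are_points}, so the identity and the validity of $\mathcal{P}$ follow.
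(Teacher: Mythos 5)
Your proof is correct and follows essentially the same route as the paper's: an induction on $t$ showing that the reach-and-act probabilities of the constructed policy coincide with $(\x,\y)$, using the decomposition ``reach stage $t$ iff at stage $t-1$ the policy passed or its offer was rejected'' together with the independence of partial ranks across stages (you work with unconditional joint probabilities where the paper conditions on $S_t=(t,s)$ and applies Bayes' rule, but this is cosmetic). In fact your write-up is slightly more careful than the paper's on two points it leaves implicit: the observation that the $\textsc{Pol}$ constraint collapses the offer probability to $x_{t,s}/(x_{t,s}+y_{t,s})$, which makes well-definedness (membership in $[0,1]$) evident, and the explicit handling of the degenerate case $x_{t,s}+y_{t,s}=0$, where the prescribed formula is $0/0$ but the state is reached with probability zero.
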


\begin{proof}
	We use the same events $O_t,\overline{O}_t$ and $D_t$ as defined in the previous proof. Thus, we need to show that $x_{t,s}=\Prob(O_t\cap \{ S_t = (t,s)\})$ and $y_{t,s}=\Prob(\overline{O}_t\cap \{ S_t=(t,s) \})$ are the right marginal probabilities. For this, its is enough to show that
	\[
	\Prob(O_t \mid S_t=(t,s) ) = t x_{t,s} \quad \text{ and } \quad \Prob(\overline{O}_t \mid S_t=(t,s) ) = t y_{t,s}
	\]
	for any $t\in [n]$ and for any $s\in [t]$. We prove this by induction in $t\in [n]$. For $t=1$, the result is true by definition of the acceptance probability and the fact that $x_{1,1}+ y_{1,1}=1$. Let us assume the result is true for $t-1$ and let us show it for $t$. First we have
	\begin{align*}
		\Prob(O_t\mid S_t=(t,s) ) & = \Prob(\text{Reach }t, \mathcal{P}(S_t)= \textbf{offer} \mid S_t=(t,s))\\
		& = \Prob(\text{Reach }t \mid S_t=(t,s)) \Prob(\mathcal{P}(S_t) = \textbf{offer}\mid S_t=(t,s)) \\
		& = \Prob(\text{Reach }t \mid S_t=(t,s)) \cdot \frac{t x_{t,s}}{\left( \sum_{\sigma=1}^{t-1} y_{t-1,\sigma} + (1-p) x_{t-1,s}  \right)}
	\end{align*}
	Now, we have
	\begin{align*}
		\Prob(\text{Reach }t\mid S_t=(t,s)) & = \Prob((O_{t-1}\cap \overline{D}_{t-1})\cup \overline{O}_{t-1}\mid S_t=(t,s)) \\
		& = (1-p)\sum_{\sigma=1}^{t-1} \Prob(O_{t-1}\mid S_{t-1}=(t-1,\sigma),S_{t}=(t,s))\Prob(S_{t-1}=(t-1,\sigma)\mid S_{t}=(t,s)) \\
		& \quad \quad + \sum_{\sigma=1}^{t-1} \Prob(\overline{O}_{t-1}\mid S_{t-1}=(t-1,\sigma),S_{t}=(t,s))\Prob(S_{t-1}=(t-1,\sigma)\mid S_{t}=(t,s)) \\
		& = (1-p)\sum_{\sigma=1}^{t-1} \Prob(O_{t-1}\mid S_{t-1}=(t-1,\sigma))\frac{1}{t-1} \\
		& \quad \quad + \sum_{\sigma=1}^{t-1} \Prob(\overline{O}_{t-1}\mid S_{t-1}=(t-1,\sigma))\frac{1}{t-1} \tag{$\mathcal{P}$ only makes decisions at stage $t-1$ based on $S_{t-1}$}\\
		& = (1-p)\sum_{\sigma=1}^{t-1} x_{t-1,\sigma} + y_{t-1,\sigma} \tag{induction}
	\end{align*}
	Note that we used
	\[
	\Prob(S_{t-1}=(t-1,\sigma)\mid S_t=(t,s)) = \frac{\Prob(S_{t}=(t,s)\mid S_{t-1}=(t-1,\sigma)) \Prob(S_{t-1}=(t-1,\sigma))}{\Prob(S_t=(t,s))} = \frac{1}{t-1}.
	\]
	Thus, the induction holds for $\Prob(O_t\mid S_t=(t,s))= t x_{t,s}$. Similarly, for
	\begin{align*}
		\Prob(\overline{O}_t \mid S_t=(t,s)) & = \Prob(\text{Reach }t, \mathcal{P}(S_t)= \textbf{pass}\mid S_t=(t,s)) \\
		& = \Prob(\text{Reach }t\mid S_t=(t,s)) \Prob(\mathcal{P}(S_t)=\textbf{pass}\mid S_{t}=(t,s)) \\
		& = \left( \sum_{\sigma=1}^{t-1} y_{t-1,\sigma} + (1-p)x_{t-1,\sigma}  \right)\left( 1- \frac{t x_{t,s}}{\left( \sum_{\sigma=1}^{t-1} y_{t-1,\sigma} + (1-p) x_{t-1,s}  \right)} \right) \\
		& = t y_{t,s}
	\end{align*}
	where we used the fact that $(\x,\y)\in \textsc{Pol}$.
\end{proof}

\subsubsection{Stage 2: The robust objective}

\begin{proposition}\label{prop:objective_is_linear}
	Let $\mathcal{P}$ be any policy for $\SPUA$ and let $(\x,\y)\in \textsc{Pol}$ be its corresponding vector as in Proposition~\ref{prop:policies_are_points}. Then, for any $k\in [n]$,
	\[
	\Prob(\mathcal{P} \text{ collects a top $k$ candidate}) = \sum_{t=1}^n \sum_{s=1}^t p x_{t,s} \Prob(R_t \leq k \mid r_t=s ).
	\]
\end{proposition}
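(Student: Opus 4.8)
The plan is to decompose the event that $\mathcal{P}$ collects a top $k$ candidate according to the stage at which an offer is accepted, and then to peel off the acceptance coin flip and the relative-rank information one layer at a time. I reuse the events from the proof of Proposition~\ref{prop:policies_are_points}: let $O_t$ be the event that $\mathcal{P}$ reaches the $t$-th candidate and extends an offer, so that $\Prob(O_t \cap \{S_t=(t,s)\}) = x_{t,s}$, and let $D_t$ be the (independent) event that the $t$-th candidate accepts, with $\Prob(D_t)=p$. Since the process halts the instant an offer is accepted, the event ``$\mathcal{P}$ collects a top $k$ candidate'' is the disjoint union over $t\in[n]$ of $O_t \cap D_t \cap \{R_t\le k\}$, so
\[
\Prob(\mathcal{P}\text{ collects a top }k\text{ candidate}) = \sum_{t=1}^n \Prob\bigl(O_t \cap D_t \cap \{R_t\le k\}\bigr).
\]

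Next I would factor out the acceptance probability and refine by partial rank. The coin flip governing $D_t$ is independent of the permutation and of all coin flips at stages $1,\dots,t-1$; since $O_t$ and $R_t$ are measurable with respect to those, $D_t$ is independent of $O_t\cap\{R_t\le k\}$, giving $\Prob(O_t\cap D_t\cap\{R_t\le k\}) = p\,\Prob(O_t\cap\{R_t\le k\})$. Partitioning on the partial rank $r_t=s$, I would then write
\[
\Prob\bigl(O_t\cap\{R_t\le k\}\bigr) = \sum_{s=1}^t \Prob\bigl(O_t\cap\{S_t=(t,s)\}\cap\{R_t\le k\}\bigr),
\]
and the goal is to show each summand equals $x_{t,s}\,\Prob(R_t\le k\mid r_t=s)$. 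Combining the three displays then yields exactly $\sum_{t,s} p\,x_{t,s}\,\Prob(R_t\le k\mid r_t=s)$.

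The crux---and the main obstacle---is the conditional independence $O_t \perp \{R_t\le k\} \mid \{S_t=(t,s)\}$, which collapses the last summand to $\Prob(O_t\cap\{S_t=(t,s)\})\cdot\Prob(R_t\le k\mid r_t=s)=x_{t,s}\,\Prob(R_t\le k\mid r_t=s)$. To justify it I would argue that $O_t$ is a function only of the relative ranks $r_1,\dots,r_t$ and the acceptance coins at stages $1,\dots,t-1$, whereas the overall rank $R_t$ is a function only of $s=r_t$ together with how the $t$-th candidate interleaves with the later arrivals---equivalently, $R_t$ depends on $(r_t,r_{t+1},\dots,r_n)$ but not on $(r_1,\dots,r_{t-1})$. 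Concretely, given $r_t=s$ the number of candidates among the first $t-1$ that beat candidate $t$ is fixed at $s-1$, so the relative order within the first $t-1$ arrivals carries no further information about $R_t$; this is exactly the standard secretary fact that $\Prob(R_t=i\mid r_t=s)=\binom{i-1}{s-1}\binom{n-i}{t-s}/\binom{n}{t}$ does not depend on $r_1,\dots,r_{t-1}$. Together with the independence of the coin flips from the permutation, this gives the required conditional independence and completes the proof. I expect verifying this rank-interleaving independence carefully---using the mutual independence of the partial ranks recorded in the preliminaries---to be the only nontrivial point.
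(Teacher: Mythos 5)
Your proof is correct and follows essentially the same route as the paper's: decompose the collection event over the stage (and partial rank) at which an accepted offer occurs, factor out the acceptance probability $p$ by independence of the coin flip, and reduce each term to $x_{t,s}\Prob(R_t\le k\mid r_t=s)$ via the conditional independence of $O_t$ and $R_t$ given $r_t=s$. The only difference is that you spell out the conditional-independence step (via the mutual independence of partial ranks and the fact that $R_t$ is determined by $r_t,r_{t+1},\dots,r_n$), which the paper compresses into the remark that ``$R_t$ only depends on $S_t$.''
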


\begin{proof}
	We use the same events as in the proof of Proposition~\ref{prop:policies_are_points}. Then,
	\begin{align*}
		\Prob(\mathcal{P} \text{ collects a top $k$ candidate}) & =  \sum_{t=1}^n \sum_{s=1}^t \Prob(O_t\cap D_t \cap  \{S_t=(t,s)\} \cap \{ R_t\leq k\}) \\
		& = \sum_{t=1}^n \sum_{s=1}^t p x_{t,s} \Prob(R_t\leq k\mid O_t\cap D_t \cap\{ S_t=(t,s)\} )\\
		& = \sum_{t=1}^n \sum_{s=1}^t p x_{t,s} \Prob(R_t \leq k \mid r_t=s ) 
	\end{align*}
	Note that $R_t$ only depends on $S_t$ and $S_t=(t,s)$ is equivalent to $r_t=s$. 
\end{proof}

We are ready to prove of Theorem~\ref{thm:main_lp}.

\begin{theorem}[Theorem~\ref{thm:main_lp} restated]
	The largest robust ratio $\gamma_n^*$ corresponds to the optimal value of the LP
	
	{\hfil $(LP)_{n,p}$ \hfil \begin{tabular}{rp{0.7\linewidth}}
			$\displaystyle\max_{\x\geq 0}$ & \quad \quad $\gamma$ \\
			s.t. &  {\vspace{-1cm}\begin{align*}
					x_{t,s} & \leq \frac{1}{t} \left( 1 - p \sum_{\tau=1}^{t-1} \sum_{\sigma=1}^\tau x_{\tau,\sigma}  \right) & \forall t\in [n], s\in [t] \\
					\gamma&\leq \frac{p}{1-(1-p)^k} \sum_{t=1}^n \sum_{s=1}^t x_{t,s}\Prob(R_t\leq k \mid r_t=s)& \forall k\in [n] ,
				\end{align*}\vspace{-0.5cm}}
	\end{tabular}}
	
	Moreover, given an optimal solution $(\x^*,\gamma_n^*)$ of $(LP)_{n,p}$, the (randomized) policy $\mathcal{P}^*$ that at state $(t,s)$ makes an offer with probability ${t x_{t,s}^*} / {\left( 1 - p \sum_{\tau=1}^{t-1} \sum_{\sigma=1}^\tau x_{\tau,\sigma}^*  \right)}$ is $\gamma_n^*$-robust.
\end{theorem}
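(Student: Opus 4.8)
The plan is to start from the reformulation \eqref{eq:robust_ratio_pol_form} and eliminate the $\mathbf{y}$ variables, so that the feasible region of $\textsc{Pol}$ projects exactly onto the dynamic constraints of $(LP)_{n,p}$. Combining Propositions~\ref{prop:policies_are_points} and~\ref{prop:tight_points_are_policies} (policies correspond bijectively to points of $\textsc{Pol}$) with Proposition~\ref{prop:objective_is_linear} (each scenario probability is the linear functional $p\sum_{t,s}x_{t,s}\Prob(R_t\le k\mid r_t=s)$), we obtain
\[
\gamma_n^* = \max\Bigl\{\gamma : (\mathbf{x},\mathbf{y})\in\textsc{Pol},\ \gamma\le \tfrac{p}{1-(1-p)^k}\sum_{t=1}^n\sum_{s=1}^t x_{t,s}\Prob(R_t\le k\mid r_t=s)\ \forall k\in[n]\Bigr\}.
\]
Since neither the objective nor the scenario constraints involve $\mathbf{y}$, it suffices to project $\textsc{Pol}$ onto the $\mathbf{x}$ coordinates and show this projection equals $\{\mathbf{x}\ge 0 : x_{t,s}\le \tfrac1t(1-p\sum_{\tau<t}\sum_\sigma x_{\tau,\sigma})\}$.

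The key computation is to track the reaching probability. I would set $\rho_t := \sum_{s=1}^t(x_{t,s}+y_{t,s})$, with $\rho_1 = x_{1,1}+y_{1,1}=1$. The defining equality of $\textsc{Pol}$ says $x_{t,s}+y_{t,s}=\tfrac1t\sum_{\sigma=1}^{t-1}(y_{t-1,\sigma}+(1-p)x_{t-1,\sigma})$, which is \emph{independent of} $s$; summing over $s\in[t]$ gives $\rho_t=\sum_{\sigma=1}^{t-1}(y_{t-1,\sigma}+(1-p)x_{t-1,\sigma})$. Writing $\sum_\sigma y_{t-1,\sigma}=\rho_{t-1}-\sum_\sigma x_{t-1,\sigma}$ yields the first-order recursion $\rho_t=\rho_{t-1}-p\sum_{\sigma=1}^{t-1}x_{t-1,\sigma}$, and a telescoping induction from $\rho_1=1$ gives the closed form $\rho_t=1-p\sum_{\tau=1}^{t-1}\sum_\sigma x_{\tau,\sigma}$. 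Because $y_{t,s}\ge 0$ and $x_{t,s}+y_{t,s}=\rho_t/t$, this produces exactly the dynamic constraint $x_{t,s}\le \tfrac1t(1-p\sum_{\tau<t}\sum_\sigma x_{\tau,\sigma})$.

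For the converse inclusion, given any $\mathbf{x}\ge 0$ satisfying the dynamic constraints I would define $\rho_t:=1-p\sum_{\tau<t}\sum_\sigma x_{\tau,\sigma}$ and reconstruct $y_{t,s}:=\rho_t/t-x_{t,s}$, which is nonnegative by the constraint; the same recursion $\rho_t=\rho_{t-1}-p\sum_\sigma x_{t-1,\sigma}$ then verifies $\tfrac1t\sum_\sigma(y_{t-1,\sigma}+(1-p)x_{t-1,\sigma})=\rho_t/t=x_{t,s}+y_{t,s}$, so $(\mathbf{x},\mathbf{y})\in\textsc{Pol}$. This establishes that the projection is precisely the dynamic constraints, hence the maximum over $\textsc{Pol}$ equals the value of $(LP)_{n,p}$ after linearizing $\min_k$ with the scalar $\gamma$. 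Finally, substituting $\sum_{\sigma}(y_{t-1,\sigma}+(1-p)x_{t-1,\sigma})=\rho_t=1-p\sum_{\tau<t}\sum_\sigma x_{\tau,\sigma}$ into the offer probability of Proposition~\ref{prop:tight_points_are_policies} turns it into $t x_{t,s}^*/(1-p\sum_{\tau<t}\sum_\sigma x_{\tau,\sigma}^*)$, and by Proposition~\ref{prop:objective_is_linear} the induced policy achieves robust ratio $\gamma_n^*$.

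The hard part is the bookkeeping that collapses the $\mathbf{y}$-elimination: recognizing that $\rho_t$ obeys a clean $s$-independent recursion is what makes Fourier–Motzkin reduce to a single inequality per $(t,s)$ rather than an unwieldy tangle. The remaining steps are substitution; the only care needed is confirming nonnegativity of the reconstructed $\mathbf{y}$ in the converse direction and checking that the denominator appearing in the policy formula is the positive quantity $\rho_t$.
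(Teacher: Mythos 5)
Your proof is correct and follows essentially the same route as the paper: starting from the reformulation over $\textsc{Pol}$ given by Propositions~\ref{prop:policies_are_points}, \ref{prop:tight_points_are_policies} and~\ref{prop:objective_is_linear}, and then projecting out the $\y$ variables. In fact, your $\rho_t$ recursion makes explicit the Fourier--Motzkin elimination and the ``routine calculation'' $1-p\sum_{\tau<t}\sum_{\sigma}x_{\tau,\sigma}=\sum_{\sigma}\bigl(y_{t-1,\sigma}+(1-p)x_{t-1,\sigma}\bigr)$ that the paper's proof states but skips for brevity.
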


\begin{proof}
	We have
	\begin{align*}
		\gamma_n^* & = \max_{\mathcal{P}} \min_{k\in [n]} \frac{\Prob(\mathcal{P} \text{ collects a candidate with rank}\leq k)}{1-(1-p)^k} \\
		& = \max_{(x,y)\in \textsc{Pol}} \min_{k\in [n]} \frac{ p \sum_{t=1}^n \sum_{s=1}^t x_{t,s}\Prob(R_t\leq k \mid r_t=s)}{1-(1-p)^k} \tag{Propositions~\ref{prop:policies_are_points},~\ref{prop:tight_points_are_policies} and~\ref{prop:objective_is_linear}}
	\end{align*}
	Now note that the function $\gamma:(\x,\y)\mapsto \min_{k\in [n]} \frac{ p \sum_{t=1}^n \sum_{s=1}^t x_{t,s}\Prob(R_t\leq k \mid r_t=s)}{1-(1-p)^k}$ is constant in $\y$. Thus any point $(\x,\y)$ satisfying Constraints~\eqref{const:mdp_dlp_1}-\eqref{const:mpd_dlp_2} has an equivalent point in $(\x',\y')\in\textsc{Pol}$ with $\x'=\x$, $\y'\geq \y$ so all constraints tighten and the objective of $\gamma$ is the same for both points. Thus, $\gamma_n^*$ equals the optimal value of the LP $(P')$:
	
	{\hfil $(P')$ \hfil \begin{tabular}{rp{0.7\linewidth}}
			$\displaystyle\max_{\x\geq 0}$ & \quad \quad $\gamma$ \\
			s.t. &  {\vspace{-1cm}\begin{align*}
					x_{1,1} + y_{1,1} & \leq 1 \\
					x_{t,s} + y_{t,s} & \leq \frac{1}{t}\left( \sum_{\sigma=1}^{t-1} y_{t-1,\sigma} + (1-p) x_{t-1,\sigma} \right) & \forall t\in [n],s\in [t] \\
					\gamma & \leq \frac{ p \sum_{t=1}^n \sum_{s=1}^t x_{t,s}\Prob(R_t\leq k \mid r_t=s)}{1-(1-p)^k}  &\forall k\in [n]
				\end{align*}\vspace{-0.5cm}}
	\end{tabular}}

	where we linearized the objective with the variable $\gamma$. By projecting the feasible region of $(P')$ onto the variables $(\x,\gamma)$ we obtain $(LP)_{n,p}$. This is a routine procedure that can be carried out using Fourier-Motzkin~\citep{schrijver1998theory} but we skip it here for brevity. 
	
	For the second part, we can take an optimal solution $(\x^*,\gamma_n^*)$ and its corresponding point $(\x^*,\y^*)\in \textsc{Pol}$. A routine calculation shows that $1-p\sum_{\tau < t}\sum_{\sigma=1}^\tau x_{\tau,\sigma}^*= \sum_{\sigma=1}^{t-1} y_{t-1,\sigma}^*+ (1-p)x_{t-1,\sigma}^*$. Thus by Proposition~\ref{prop:tight_points_are_policies} we obtain the optimal policy.
\end{proof}

\subsection{Missing proofs from Section~\ref{sec:lp_finite_formulation}: $\gamma_n^*$ is decreasing in $n$}\label{subsec:app:gamma_decreasing_in_n}

\begin{proof}[Proof of Lemma~\ref{lem:non_increasing_gamma}]
	We know that $\gamma_n^*$ equals the value
	
	\resizebox{\columnwidth}{!}{ \hfil \begin{tabular}{rrp{0.7\linewidth}}
			&$\displaystyle\min_{\substack{u : [n]\to \R_+ \\ \sum_i u_i \geq 1}}$ & \quad \quad $v_{(1,1)}$ \\
			(DLP)&s.t. &  {\vspace{-1.3cm}\begin{align*}
					v_{(t,s)} & = \max\left\{ p U_t(s) + \frac{1-p}{t+1} \sum_{\sigma=1}^{t=1} v_{(t+1,\sigma)} , \frac{1}{t+1} \sum_{\sigma=1}^{t=1} v_{(t+1,\sigma)} \right\} & \forall t\in [n], s\in [t] \\
					v_{(n+1,s)}& =0 & \forall s\in [n+1] 
				\end{align*}\vspace{-0.8cm}}
	\end{tabular}}

	where $U_t(s)= \sum_{i=1}^n \frac{u_k}{1-(1-p)^k} \Prob(R_t\in [k]\mid r_t=s)= \sum_{j=1}^n \sum_{k\geq j}\left( \frac{u_k}{1-(1-p)^k}  \right)\Prob(R_t=j\mid r_t=s)$. Thus the utility collected by the policy if it collects a candidate with rank $i$ is $U_i=\sum_{k\geq i} \frac{u_k}{1-(1-p)^k}$.
	Let $u:[n]\to [0,1]$ such that $\sum_{i=1}^n u_i=1$ and extend $u$ to $\widehat{u}:[n+1]\to [0,1]$ by $\widehat{u}_{n+1}=0$ and define $\widehat{U}_t(s)$ accordingly. Consider the optimal policy that solves the program
	\[
	\widehat{v}_{(t,s)} = \max\left\{ p \widehat{U}_t(s) + \frac{1-p}{t+1} \sum_{\sigma=1}^{t+1} \widehat{v}_{(t+1\sigma)} , \frac{1}{t+1} \sum_{\sigma=1}^{t+1} \widehat{v}_{(t+1,\sigma)} \right\}, \forall t\in [n+1], s\in [t]
	\]
	with the boundary condition $\widehat{v}_{(n+2,s)}=0$ for all $s\in [n+2]$. Call this policy $\widehat{\mathcal{P}}$. Note that when policy $\widehat{\mathcal{P}}$ collects a candidate with rank $i$, then it gets a utility of
	\[
	\widehat{U}_i = \sum_{k\geq i} \frac{\widehat{u}_k}{1-(1-p)^k} = \sum_{k\geq i} \frac{u_k}{1-(1-p)^k} = U_i,
	\]
	for $i\leq n$ and $\widehat{U}_{n+1}=0$. By the choice of $\widehat{\mathcal{P}}$, the expected utility collected by $\mathcal{P}$ is $\mathrm{val}(\widehat{\mathcal{P}})= \widehat{v}_{(1,1)}$. We can obtain a policy $\mathcal{P}$ for $n$ elements out of $\widehat{\mathcal{P}}$ by simulating an entry of $n+1$ elements as follows. Policy $\mathcal{P}$ randomly selects a time $t^*\in [n+1]$ and its availability $b$: we set $b=0$ (unavailable) with probability $1-p$ and $b=1$ (available) with probability $p$. Now, on a input of length $n$, the policy $\mathcal{P}$ will squeeze an item of rank $n+1$ in position $t^*$ and it will run the policy $\widehat{\mathcal{P}}$ in this input, simulating appropriately the new partial ranks. That is, before stage $t^*$ policy $\mathcal{P}$ behaves exactly as $\widehat{\mathcal{P}}$ in the original input of $\mathcal{P}$. When the policy leaves the stage $t^*-1$ to transition to stage $t^*$, then the policy $\mathcal{P}$ simulates the simulated candidate $t^*$ (with real rank $n+1$) that $\widehat{\mathcal{P}}$ would have received and does the following: ignores the candidate and moves to stage $t^*$ if the simulated candidate is unavailable ($b=0$) or if $\widehat{\mathcal{P}}((t^*,t^*))=\textbf{pass}$, while if $\widehat{\mathcal{P}}((t^*,t^*)) =\textbf{offer}$ and the simulated candidate accepts ($b=1$) then the policy $\mathcal{P}$ accepts any candidate from that point on. 
	
	Coupling the input of length $n+1$ for $\widehat{\mathcal{P}}$ and the input of length $n$ with the random stage $t^*$ for $\mathcal{P}$, we can see that the utilities collected by $\mathcal{P}$ and $\widehat{\mathcal{P}}$ coincide, i.e., $U(\mathcal{P})= \widehat{\widehat{\mathcal{P}}}=\widehat{v}_{(1,1)}$. Thus the optimal utility $v_{(1,1)}$ collected by a policy for $n$ candidates and utilities given by $U:[n]\to \R_+$, holds $v_{(1,1)} \geq \widehat{v}_{(1,1)}$. Since $\widehat{v}_{(1,1)}\geq \gamma_{n+1}^*$, by minimizing over $u$ we obtain $\gamma_n^*\geq \gamma_{n+1}^*$. 
\end{proof}

\subsection{Missing proofs from Section~\ref{sec:infinite_LP}}\label{sec:app:infinite_lp_approximation}

In this subsection we show that $|\gamma_n^* - \gamma_\infty^*| \leq \mathcal{O}((\log n)^2 / \sqrt{n})$ for fixed $p$ (Lemma~\ref{lem:approximation_inf_LP_finite_LP}). The proof is similar to other infinite approximation of finite models and we require some preliminary results before showing the result. First, we introduce two relaxations, one for $(LP)$ and one for $(CLP)$. We show that the relaxations have values close to their non-relaxed versions. After these preliminaries have been introduced we present the proof of Lemma~\ref{lem:approximation_inf_LP_finite_LP}.

Consider the relaxation of $(LP)_{n,p}$ to the top $q$ candidate constraints:
\begin{align}
	\gamma_{n,q}^*= \max_{x \geq 0}  \quad \gamma & \nonumber\\
	(LP)_{n,p,q} \quad\quad \quad \quad x_{t,s} & \leq \frac{1}{t}\left( 1 - p \sum_{\tau < t}\sum_{s'=1}^\tau x_{\tau, s'}  \right) && \forall t,s \label{const:capped_lp_dynamic} \\
	\gamma & \leq \frac{p}{(1-(1-p)^k)}\sum_{t=1}^n \sum_{s=1}^t x_{t,s} \Prob(R_t\in [k] \mid r_t = s)  && \forall k \in [q] \label{const:capped_lp_multiobj}
\end{align}
Note that $\gamma_n^* \leq  \gamma_{n,q}^*$ since $(LP)_{n,p,q}$ is a relaxation of $(LP)_{n,p}$. The following result gives a bound on $\gamma_n^*$ compared to $\gamma_{n,q}^*$.

\begin{proposition}\label{prop:relaxed_LP_not_bad}
	For any $q\in [n]$, $\gamma_{n}^* \geq \left( 1-(1-p)^q \right) \gamma_{n,q}^*$.
\end{proposition}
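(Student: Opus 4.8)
The plan is to show that an optimal solution of the relaxed program $(LP)_{n,p,q}$, after scaling down its objective value by the factor $1-(1-p)^q$, remains feasible for the full program $(LP)_{n,p}$. Since $(LP)_{n,p}$ has optimal value $\gamma_n^*$ by Theorem~\ref{thm:main_lp}, this immediately yields $\gamma_n^* \geq (1-(1-p)^q)\gamma_{n,q}^*$.

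First I would fix an optimal solution $(\x^*,\gamma_{n,q}^*)$ of $(LP)_{n,p,q}$. The dynamic constraints~\eqref{const:capped_lp_dynamic} are identical to the first family of constraints of $(LP)_{n,p}$, so $\x^*$ already satisfies those verbatim. The only thing to check is the multi-objective constraints for the dropped indices $k\in\{q+1,\ldots,n\}$; the constraints for $k\in[q]$ are present in the relaxation and hold with value $\gamma_{n,q}^*$.

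To handle $k>q$, write $f(k)=\sum_{t=1}^n\sum_{s=1}^t x_{t,s}^*\,\Prob(R_t\leq k\mid r_t=s)$ and $g(k)=1-(1-p)^k$. The argument rests on two elementary facts: (i) $f$ is nondecreasing in $k$, since $\Prob(R_t\leq k\mid r_t=s)$ is nondecreasing in $k$ and $\x^*\geq 0$; and (ii) $g(k)\leq 1$ for every $k$. The $k=q$ constraint gives $p\,f(q)\geq \gamma_{n,q}^*\,g(q)$, so for any $k>q$,
\[
\frac{p\,f(k)}{g(k)} \;\geq\; \frac{p\,f(q)}{g(k)} \;\geq\; \frac{\gamma_{n,q}^*\,g(q)}{g(k)} \;\geq\; \gamma_{n,q}^*\,g(q) \;=\; \gamma_{n,q}^*\bigl(1-(1-p)^q\bigr),
\]
where the first inequality uses (i), the second uses the $k=q$ constraint, and the third uses (ii) in the form $g(q)/g(k)\geq g(q)$. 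For $k\leq q$ the multi-objective constraint already holds with the larger value $\gamma_{n,q}^*\geq \gamma_{n,q}^*(1-(1-p)^q)$. Hence $\bigl(\x^*,\,\gamma_{n,q}^*(1-(1-p)^q)\bigr)$ satisfies every constraint of $(LP)_{n,p}$, and the conclusion follows from $\gamma_n^*$ being the optimal value of $(LP)_{n,p}$.

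I do not expect a serious obstacle: the whole proof is a direct feasibility check. The only point that requires a moment of care is the inequality chain for $k>q$, where the right move is to combine the monotonicity of the numerator with the crude bound $g(k)\leq 1$, rather than attempting to estimate the ratio $g(q)/g(k)$ more delicately; this is precisely what produces the clean factor $1-(1-p)^q$.
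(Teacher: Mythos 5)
Your proof is correct and follows essentially the same route as the paper's: both arguments take an optimal solution of $(LP)_{n,p,q}$, exploit the monotonicity of $\sum_{t,s} x_{t,s}\Prob(R_t\leq k\mid r_t=s)$ in $k$ together with the bound $1-(1-p)^k\leq 1$, and conclude that the constraints for $k>q$ hold after scaling $\gamma$ by $1-(1-p)^q$. The only cosmetic difference is that the paper telescopes the ratio $f_{j+1}\geq \frac{1-(1-p)^j}{1-(1-p)^{j+1}}f_j$ index by index, whereas you invoke the monotonicity of the numerator in a single step.
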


\begin{proof}
	Let $(\x,\gamma_{n,q}^*)$ be an optimal solution of $(LP)_{n,p,q}$. Let $f_k = \frac{p}{1-(1-p)^k} \sum_{t=1}^n \sum_{s=1}^t x_{t,s} \Prob(R_t\in [k] \mid r_t=s)$.
	Then $\gamma_{n,q}^* = \min_{k=1,\ldots,q} f_i $. It is enough to show that $f_i \geq \left( \frac{1-(1-p)^q}{1-(1-p)^n}  \right)f_q$ for $i\geq q$ because $\gamma_n^* \geq \min_{i=1,\ldots,n} f_i \geq \left( \frac{1-(1-p)^q}{1-(1-p)^n}  \right) \min_{i=1,\ldots,q} f_i \geq \left( 1-(1-p)^q  \right) \gamma_{n,q}^*$.
	
	For any $j$ we have
	\begin{align*}
		f_{j+1} & = \frac{p}{1-(1-p)^{j+1}} \sum_{t=1}^n \sum_{s=1}^t x_{t,s} \Prob(R_t\in [j+1] \mid r_t=s ) \geq \left(\frac{1-(1-p)^j}{1-(1-p)^{j+1}}\right) f_j.
	\end{align*}
	Thus, iterating this for $j> q$ we get $f_{j}  \geq \left( \frac{1-(1-p)^q}{1-(1-p)^j} \right) f_q$ and we obtain the desired result.
\end{proof}

Likewise we consider the relaxation of $(CLP)_p$ to the top $q$ candidates:

\begin{align}
	\gamma_{\infty, q}^* = \max_{\substack{\alpha:[0,1]\times \N \to [0,1] \\ \gamma \geq 0}}  \quad \gamma & \nonumber \\
	(CLP)_{p,q}\quad \quad \quad \quad \quad \alpha (t, s) & \leq \frac{1}{t}\left( 1 - p \int_0^t \sum_{\sigma \geq 1} \alpha(\tau, \sigma) \,  \mathrm{d}\tau  \right) && \forall t\in [0,1],s\geq 1 \label{const:capped_infinity_dynamic} \\
	\gamma & \leq \frac{p}{(1-(1-p)^k)}\int_0^1  \sum_{s\geq 1} \alpha(t,s) \sum_{\ell=s}^k \binom{\ell-1}{s-1} t^s (1-t)^{\ell-s} \, \mathrm{d}t  && \forall k\in [q] \label{const:capped_infinity_multiobj}
\end{align}

We have $\gamma_{\infty, q}^* \geq \gamma_{\infty}^*$ and we have the approximate converse

\begin{proposition}\label{prop:relaxed_CLP_not_bad}
	For any $q \geq 1$, $\gamma_{\infty}^* \geq \left( 1-(1-p)^q \right) \gamma_{\infty,q}^*$.
\end{proposition}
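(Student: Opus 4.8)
The plan is to mirror the finite-model argument of Proposition~\ref{prop:relaxed_LP_not_bad} almost verbatim, replacing the sums over candidates by integrals and the conditional probabilities $\Prob(R_t\le k\mid r_t=s)$ by their continuous counterparts $\sum_{\ell=s}^k\binom{\ell-1}{s-1}t^s(1-t)^{\ell-s}$. Fix $\varepsilon>0$. Since $\gamma_{\infty,q}^*$ is a supremum rather than a maximum, I would first choose a function $\alpha$ satisfying the dynamic constraint~\eqref{const:capped_infinity_dynamic} whose induced value in $(CLP)_{p,q}$ exceeds $\gamma_{\infty,q}^*-\varepsilon$. For $k\ge 1$ set
\[
g_k=\int_0^1\sum_{s\ge 1}\alpha(t,s)\sum_{\ell=s}^k\binom{\ell-1}{s-1}t^s(1-t)^{\ell-s}\,\mathrm{d}t,\qquad f_k=\frac{p}{1-(1-p)^k}\,g_k,
\]
so that feasibility of $(\alpha,\gamma)$ in $(CLP)_{p,q}$ via~\eqref{const:capped_infinity_multiobj} gives $f_k\ge \gamma_{\infty,q}^*-\varepsilon$ for every $k\in[q]$.

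The key step is the monotonicity $g_{k+1}\ge g_k$. This holds because passing from $k$ to $k+1$ only appends the single nonnegative summand $\binom{k}{s-1}t^s(1-t)^{k+1-s}$ inside the integral, and $\alpha\ge 0$; concretely, $\sum_{\ell=s}^k\binom{\ell-1}{s-1}t^s(1-t)^{\ell-s}$ is a partial sum of a negative-binomial distribution, hence nondecreasing in $k$ and bounded above by $1$. Consequently
\[
f_{k+1}=\frac{p}{1-(1-p)^{k+1}}g_{k+1}\ge \frac{p}{1-(1-p)^{k+1}}g_k=\frac{1-(1-p)^k}{1-(1-p)^{k+1}}f_k,
\]
and telescoping this from $q$ to $i$ yields $f_i\ge \frac{1-(1-p)^q}{1-(1-p)^i}\,f_q\ge (1-(1-p)^q)\,f_q$ for every $i\ge q$, where the last inequality uses $1-(1-p)^i\le 1$.

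It then remains to assemble the bound. For $i\le q$ we have $f_i\ge \gamma_{\infty,q}^*-\varepsilon\ge (1-(1-p)^q)(\gamma_{\infty,q}^*-\varepsilon)$, while for $i>q$ the displayed inequality gives $f_i\ge (1-(1-p)^q)f_q\ge (1-(1-p)^q)(\gamma_{\infty,q}^*-\varepsilon)$; hence $\inf_{k\ge 1}f_k\ge (1-(1-p)^q)(\gamma_{\infty,q}^*-\varepsilon)$. Since the dynamic constraint~\eqref{const:capped_infinity_dynamic} of $(CLP)_{p,q}$ is identical to~\eqref{const:infinity_dynamic} of $(CLP)_p$, the pair $(\alpha,\inf_{k\ge 1}f_k)$ is feasible for $(CLP)_p$, so $\gamma_\infty^*\ge \inf_{k\ge 1}f_k\ge (1-(1-p)^q)(\gamma_{\infty,q}^*-\varepsilon)$, and letting $\varepsilon\to 0$ gives the claim. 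The only point requiring care beyond the finite case is the supremum-versus-maximum distinction, handled by this $\varepsilon$-argument; no concentration estimates or limiting approximations are needed, so I expect no genuine obstacle, the proof being essentially a transcription of Proposition~\ref{prop:relaxed_LP_not_bad}.
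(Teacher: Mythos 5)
Your proof is correct and follows essentially the same route as the paper's: the same definition of $f_k$, the same monotonicity observation (dropping the nonnegative $\ell=k+1$ summand), the same telescoping ratio $\frac{1-(1-p)^j}{1-(1-p)^{j+1}}$, and the same conclusion via feasibility of $(\alpha,\inf_k f_k)$ for $(CLP)_p$. The only (immaterial) difference is bookkeeping: the paper argues from an arbitrary feasible pair $(\alpha,\gamma_{\infty,q})$ and takes the supremum at the end, while you handle the supremum with an explicit $\varepsilon$-approximation.
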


\begin{proof}
	Let $(\alpha,\gamma_{\infty,q})$ be a feasible solution of $(CLP)_{p,q}$. Let
	\[
	f_k = \frac{p}{1-(1-p)^k} \int_0^1 \sum_{s\geq 1} \alpha(t,s) \sum_{\ell=s}^{k} \binom{\ell-1}{s-1} t^s (1-t)^{\ell -s }\, \mathrm{d}t
	\]
	Assume that $\gamma_{\infty,q} \leq \min_{k\leq q} f_k$. As in the previous proof, we aim to show that $f_i \geq \left( 1 - (1-p)^q \right)f_q$ for $i\geq q$, since this will imply $\gamma_{\infty}^* \geq (1-(1-p)^q) \gamma_{\infty,q}$ for any $(\alpha,\gamma_{\infty,q})$ feasible for $(CLP)_{p,q}$.
	
	Now, for any $j$ we have
	\begin{align*}
		f_{j+1} & = \frac{p}{1-(1-p)^{j+1}} \int_0^1 \sum_{s\geq 1} \alpha(t,s) \sum_{\ell=s}^{j+1} \binom{\ell-1}{s-1} t^s (1-t)^{\ell -s }\, \mathrm{d}t \\
		& \geq \frac{p}{1-(1-p)^{j+1}} \int_0^1 \sum_{s\geq 1} \alpha(t,s) \sum_{\ell=s}^{j} \binom{\ell-1}{s-1} t^s (1-t)^{\ell -s }\, \mathrm{d}t \\
		& = \left( \frac{1-(1-p)^j}{1-(1-p)^{j+1}} \right) f_j.
	\end{align*}
	Iterating the inequality until reaching $q$ we deduce that for any $j\geq q$ we have $f_j \geq \left( \frac{1-(1-p)^q}{1-(1-p)^j}\right) f_q$. From here the result follows.
\end{proof}

\begin{remark}\label{rem:approx_lost}
	If we set $q=(\log n)/p$, both results imply that $\gamma_{n}^*\geq \left( 1-1/n \right)\gamma_{n,q}^*$ and $\gamma_{\infty}^* \geq \left( 1- 1/n \right) \gamma_{\infty,q}^*$. Thus we lose at most $1/n$ by restricting the analysis to the top $q$ candidates.
\end{remark}

\begin{proposition}\label{prop:key_bound_binomial}
	There is $n_0$ such that for $n\geq n_0$, for any $t$ such that $\sqrt{n} \log n \leq t \leq n - \sqrt{n} \log n$,  $\ell \leq {\log (n)} / {p}$ and $\ell \geq s$ it holds that for any $\tau \in \left[ {t}/{n}, {(t+1)}/{n} \right]$ we have
	\[
	1 - \frac{10}{p\sqrt{n}} \leq \frac{\binom{\ell- 1}{s-1} \binom{n-\ell}{t-s}  /\binom{n}{t}}{\binom{\ell-1}{s-1} \tau^s (1-\tau)^{\ell-s} } \leq 1 + \frac{10}{p\sqrt{n}}.
	\]
\end{proposition}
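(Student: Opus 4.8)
The plan is to first cancel the common factor $\binom{\ell-1}{s-1}$, which appears in both the numerator and the denominator, so that the claim reduces to bounding
\[
\rho \;:=\; \frac{\binom{n-\ell}{t-s}\big/\binom{n}{t}}{\tau^{s}(1-\tau)^{\ell-s}}
\]
away from $1$ by at most $10/(p\sqrt n)$. Writing $[x]_m := x(x-1)\cdots(x-m+1)$ for the falling factorial, I would rewrite the hypergeometric-type ratio as
\[
\frac{\binom{n-\ell}{t-s}}{\binom{n}{t}} \;=\; \frac{[t]_s\,[n-t]_{\ell-s}}{[n]_\ell},
\]
and then split $[n]_\ell = [n]_s\,[n-s]_{\ell-s}$ so that the $\ell$ factors of the denominator pair one-to-one with the $s$ factors of $[t]_s$ and the $\ell-s$ factors of $[n-t]_{\ell-s}$. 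Distributing the powers $\tau^{s}$ and $(1-\tau)^{\ell-s}$ across these factors yields the clean product form
\[
\rho \;=\; \Bigg(\prod_{i=0}^{s-1}\frac{t-i}{\tau\,(n-i)}\Bigg)\Bigg(\prod_{j=0}^{\ell-s-1}\frac{n-t-j}{(1-\tau)\,(n-s-j)}\Bigg),
\]
in which every factor is manifestly close to $1$. This route, avoiding Stirling's approximation, is important because $s$ and $\ell-s$ may be small, so the individual factorials cannot be safely approximated, whereas the per-factor ratios above are always well behaved.

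Next I would take logarithms and estimate each factor. At $\tau=t/n$ the $i$-th factor of the first product equals $1-\tfrac{i(n-t)}{t(n-i)}$, whose deviation from $1$ is at most $2i/t$, and the displacement of $\tau$ over $[t/n,(t+1)/n]$ adds at most a further $1/t$ per factor; the second product is handled symmetrically with $t$ replaced by $n-t$ and $i$ by $j$. Using $t\ge\sqrt n\log n$, $n-t\ge\sqrt n\log n$, and $\ell\le(\log n)/p$, every individual factor deviates from $1$ by $O(1/(p\sqrt n))$. Summing the logarithms of the at most $\ell$ factors, controlling the quadratic Taylor remainder of $\log(1+\delta)$ by taking $n_0$ large, and applying $|e^{x}-1|\le 2|x|$ for small $x$ then converts the additive log-bound into the desired two-sided multiplicative bound on $\rho$. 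The cutoff $\sqrt n\log n\le t\le n-\sqrt n\log n$ is precisely what keeps every ratio $i/t$ and $j/(n-t)$ small, and the cutoff $(\log n)/p$ on $\ell$ limits the number of factors.

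The main obstacle is the accumulation of error across the up to $\ell\le(\log n)/p$ factors. The dominant contribution to $\log\rho$ is $-\tfrac{1}{t}\binom{s}{2}-\tfrac{1}{n-t}\binom{\ell-s}{2}$, the positive corrections from $[n]_\ell$ being of lower order $O(\ell^2/n)$, and this negative part is largest exactly in the boundary regime $t\approx\sqrt n\log n$ (or $t\approx n-\sqrt n\log n$) with $s\approx\ell$. Keeping this accumulated quantity within the stated constant requires careful bookkeeping of the two cutoffs rather than a naive triangle inequality over the factors, so the bulk of the work lies in tracking the two sums $\tfrac{1}{t}\binom{s}{2}$ and $\tfrac{1}{n-t}\binom{\ell-s}{2}$ explicitly and verifying that the calibration of $t$ and $\ell$ controls them in this tightest regime; the remaining pieces — the displacement of $\tau$ from $t/n$ and the quadratic Taylor remainders — are genuinely lower order and are absorbed into the constant by taking $n$ large.
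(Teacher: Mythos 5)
Your decomposition is exactly the one the paper uses: after cancelling $\binom{\ell-1}{s-1}$, the paper also writes $\binom{n-\ell}{t-s}/\binom{n}{t}$ as $(t/n)^s(1-t/n)^{\ell-s}$ times two correction products (its terms $A$, $B$, $C$ correspond precisely to your paired falling-factorial factors), and it bounds each factor essentially as you propose. So the route is the same; the genuine gap is the step you defer to ``careful bookkeeping.'' That step cannot be carried out, because the quantity you correctly identify as dominant, $\tfrac{1}{t}\binom{s}{2}+\tfrac{1}{n-t}\binom{\ell-s}{2}$, is \emph{not} bounded by $10/(p\sqrt n)$ under the stated hypotheses. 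Take $s=\ell=\lfloor (\log n)/p\rfloor$, $t=\lceil \sqrt n\,\log n\rceil$ and $\tau=t/n$: then your second product is empty, the $\tau$-displacement vanishes, and the ratio equals $\prod_{i=0}^{s-1}\frac{1-i/t}{1-i/n} = 1-\frac{s(s-1)}{2t}\left(1+o(1)\right) = 1-\frac{\log n}{2p^2\sqrt n}\left(1+o(1)\right)$, whose deficit exceeds $10/(p\sqrt n)$ as soon as $\log n$ exceeds roughly $20p$; for instance, with $p=1$ and $n=e^{25}$ the deficit is about $4.5\times 10^{-5}$ while the proposition allows only about $3.7\times 10^{-5}$. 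Since violations occur for all sufficiently large $n$, no bookkeeping can close this gap: the lower-bound half of the statement is false in this corner regime, and any proof attempt must founder exactly where yours stops.

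You should also know that the paper's own proof stumbles at the same place: it bounds its term $B$ below by $1-2s^2/t\geq 1-2\log n/(p^2\sqrt n)$ and then asserts $\left(1-\tfrac{4}{p\sqrt n}\right)\left(1-\tfrac{2\log n}{p^2\sqrt n}\right)^2\geq 1-\tfrac{10}{p\sqrt n}$, which fails for large $n$; the upper-bound half (errors of order $s^2/n$, $\ell^2/n$, $\ell/(n-t)$, all $O(1/(p\sqrt n))$) is sound in both your plan and the paper's. The statement should be repaired rather than re-proved: weaken the two-sided error to, say, $10\log n/(p^2\sqrt n)$ (or restrict $s$), which your argument, completed as outlined, and the paper's both deliver. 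This weaker form still suffices downstream, since for fixed $p$ the extra factor of order $\log n/p$ is absorbed by the $O\left((\log n)^2/(p\sqrt n)\right)$ error claimed in Lemma~\ref{lem:approximation_inf_LP_finite_LP}.
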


\begin{proof}
	We only need to show that
	\[
	1 - \frac{10}{p\sqrt{n}} \leq \frac{\binom{n-\ell}{t-s} / \binom{n}{t} }{ \tau^s (1-\tau)^{\ell-s} } \leq 1 + \frac{10}{p\sqrt{n}}.
	\]
	We have
	{\small\begin{align*}
		\frac{ \binom{n-\ell}{t-s}  }{\binom{n}{t}} & = \frac{t!}{(t-s)!} \frac{1}{n!} \frac{(n-\ell)! (n-t)!}{(n-t - (\ell-s))!}\\
		& = \left(\frac{t\cdot (t-1) \cdots (t-s+1)}{n\cdot (n-1) \cdots (n-s+1)} \right)\left( \frac{(n-t)(n-t-1)\cdots (n-t-(\ell-s)+1)}{(n-s)(n-s-1)\cdots (n-s - (\ell-s)+1)} \right) \\
		& = \underbrace{\left(\frac{t}{n}\right)^s \left( 1- \frac{t}{n}  \right)^{\ell-s}}_A \underbrace{\left( \frac{1\cdot\left(1- \frac{1}{t} \right) \cdots \left( 1- \frac{(s-1)}{t} \right) }{1\cdot\left(1- \frac{1}{n} \right) \cdots \left( 1- \frac{(s-1)}{n} \right)} \right)}_B\underbrace{\left( \frac{ 1\cdot\left(1- \frac{1}{n-t} \right) \cdots \left( 1- \frac{(\ell-s)-1}{n-t} \right) }{\left( 1- \frac{s}{n} \right)\cdot\left(1- \frac{(s+1)}{n} \right) \cdots \left( 1- \frac{(s+(\ell-s)-1)}{n} \right)} \right)}_{C}
	\end{align*}}
	We bound terms $A, B$ and $C$ separately. Since $s\leq \ell $ and we are assuming that $\ell \leq (\log n)/p$ and $t\geq \sqrt{n}\log n$ for $n$ large, then we will implicitly use that $s,\ell \leq \min \{t/2, n/2\}$.
	
	\begin{claim*} It holds
		$\left( 1 - {4} / {(p\sqrt{n})} \right)\tau^s (1-\tau)^{\ell-s}\leq A = \left({t} / {n}\right)^s \left( 1- {t}/{n}  \right)^{\ell-s}\leq \left( 1 + {4} / {(p\sqrt{n})} \right)\tau^s (1-\tau)^{\ell-s}$.
	\end{claim*}

	\begin{proof}
		For the upper bound we have
		\begin{align*}
			\left(\frac{t}{n}\right)^s \left( 1- \frac{t}{n}  \right)^{\ell-s} & \leq \tau^s \left(  1- \tau + \frac{1}{n} \right)^{\ell-s} \tag{$\tau\in [t/n,(t+1)/n]$}\\
			& = \tau^s (1-\tau)^{\ell-s} \left(  1+ \frac{1}{(1-\tau)n} \right)^{\ell-s} \\
			& \leq \tau^s (1-\tau)^{\ell-s} e^{{(\ell-s)}/{\left((1-\tau)n\right)}}\\
			& \leq \tau^s (1-\tau)^{\ell-s} e^{{\ell}/{(n-t-1)}}\\
			& \leq \tau^s (1-\tau)^{\ell-s} \left( 1 + 2\frac{\ell}{n-t-1}  \right) \tag{Using $e^x \leq 1+2x$ for $x\in [0,1]$}
		\end{align*}
		The upper bound now follows by using the information over $\ell$ and $t$ and that $\left( 1 + 2{\ell}/{(n-t-1)}  \right)\leq 1+2 {(\log n)}{p^{-1}(\sqrt{n}\log n - 1)^{-1}}\leq 1+ {4}/{(p\sqrt{n})}$ for $n$ large.
		
		For the lower bound we have
		\begin{align*}
			\left( \frac{t}{n} \right)^s \left( 1- \frac{t}{n} \right)^{\ell-s} & \geq \left( \tau - \frac{1}{n}  \right)^s \left( 1- \tau  \right)^{\ell-s} \\
			& = \tau^s(1-\tau)^{\ell-s} \left( 1- \frac{1}{\tau n} \right)^s \\
			& \geq \tau^s (1-\tau)^{\ell-s} e^{-\frac{s}{\tau n -1}} \\
			& \geq \tau^s (1-\tau)^{\ell-s} \left(  1 - {s}/{(\tau n - 1)} \right)
		\end{align*}
		Since ${s} / {(\tau n-1)} \leq {\log (n)} / {(p (t-1))} \leq {2}/{(p\sqrt{n})}$ for $n$ large, the lower bound follows.
	\end{proof}

	\begin{claim*}
		We have $1- 2 {s^2} / {t} \leq B \leq 1 + 2 {s^2} / {n}$
	\end{claim*}

	\begin{proof}
		For the upper bound we upper bound the denominator
		\begin{align*}
			\left( \frac{1\cdot\left(1- \frac{1}{t} \right) \cdots \left( 1- \frac{(s-1)}{t} \right) }{1\cdot\left(1- \frac{1}{n} \right) \cdots \left( 1- \frac{(s-1)}{n} \right)} \right) & \leq \frac{1}{1\cdot\left(1- \frac{1}{n} \right) \cdots \left( 1- \frac{(s-1)}{n} \right)} \\
			&\leq   e^{\sum_{k=1}^{s-1} {k}/{(n-k)}} \\
			& \leq e^{{s^2}/{n}} \tag{Function $x \mapsto {x}/{(n-x)}$ is increasing}\\
			& \leq 1 + 2 \frac{s^2}{n}
		\end{align*}
		For the lower bound we lower bound the numerator:
		\begin{align*}
			\left( \frac{1\cdot\left(1- \frac{1}{t} \right) \cdots \left( 1- \frac{(s-1)}{t} \right) }{1\cdot\left(1- \frac{1}{n} \right) \cdots \left( 1- \frac{(s-1)}{n} \right)} \right) &  \geq 1\cdot\left(1- \frac{1}{t} \right) \cdots \left( 1- \frac{(s-1)}{t} \right) \\
			&\geq e^{- \sum_{k=1}^{s-1} {k}/{(t-k)}} \tag{Using $1-{k}/{t} = {\left(1+ {k}/{(t-k)}\right)^{-1}} \geq e^{-{k}/{(t-k)}}$} \\
			& \geq 1- \frac{s^2}{t-s} \geq 1- 2 \frac{s^2}{t}.  
		\end{align*}
	\end{proof}

	\begin{claim*}
		We have $1- 2 {\ell^2}/{(n-t)}\leq C \leq 1+ 2 {\ell^2}/{n}$
	\end{claim*}

	\begin{proof}
		Similar to the previous claim, we bound denominator for an upper bound and numerator for a lower bound.
		\begin{align*}
			\left( \frac{ 1\cdot\left(1- \frac{1}{n-t} \right) \cdots \left( 1- \frac{(\ell-s)-1}{n-t} \right) }{\left( 1- \frac{s}{n} \right)\cdot\left(1- \frac{(s+1)}{n} \right) \cdots \left( 1- \frac{(s+(\ell-s)-1)}{n} \right)} \right) & \leq \frac{1}{\left( 1- \frac{s}{n} \right)\cdot\left(1- \frac{(s+1)}{n} \right) \cdots \left( 1- \frac{(s+(\ell-s)-1)}{n} \right)} \\
			& \leq e^{\sum_{k=0}^{\ell-s-1} {(k+s)}/{(n-k)}}  \leq 1 + 2 \frac{\ell^2}{n},
		\end{align*}
		and
		\begin{align*}
			\left( \frac{ 1\cdot\left(1- \frac{1}{n-t} \right) \cdots \left( 1- \frac{(\ell-s)-1}{n-t} \right) }{\left( 1- \frac{s}{n} \right)\cdot\left(1- \frac{(s+1)}{n} \right) \cdots \left( 1- \frac{(s+(\ell-s)-1)}{n} \right)} \right) & \geq 1\cdot\left(1- \frac{1}{n-t} \right) \cdots \left( 1- \frac{(\ell-s)-1}{n-t} \right) \\
			& \geq e^{- \sum_{k=0}^{\ell-s -1} {k}/{(n-t-k)}}  \geq 1 - 2\frac{\ell^2}{n-t}.
		\end{align*}
	\end{proof}
	We can now upper bound $ABC$ as
	\begin{align*}
		ABC &\leq \tau^s (1-\tau)^{\ell-s} \left(  1 + \frac{4}{p\sqrt{n}} \right)\left( 1+ 2 \frac{s^2}{n} \right)\left( 1+ 2\frac{\ell^2}{n}\right) \\
		& \leq  \tau^s (1-\tau)^{\ell-s} \left(  1 + \frac{4}{p\sqrt{n}}  \right) \left( 1 + 2 \frac{(\log n)^2}{p^2 n}  \right)^2 \tag{Using $t\leq n - \sqrt{n}\log n$ and $s\leq \ell \leq (\log n)/p$} \\
		& \leq \tau^s (1-\tau)^{\ell-s}\left(  1 + \frac{4}{p\sqrt{n}} \right)\left( 1 + 6 \frac{(\log n)^2}{p^2 n}  \right) \tag{Using $(1+x)^2 \leq 1 + 3x$ if $x\in [0,1]$} \\
		& \leq \tau^s (1-\tau)^{\ell-s} \left( 1 + \frac{10}{p \sqrt{n}}  \right).
	\end{align*}
	Recall that we are assuming $p$ constant and $n$ large, thus the dominating term is $1/\sqrt{n}$. Similarly, we can lower bound $ABC$ as
	\begin{align*}
		ABC & \geq \tau^s(1-\tau)^{\ell-s} \left( 1 - \frac{4}{p\sqrt{n}}   \right)\left(1 - 2 \frac{s^2}{t}\right) \left(  1 - 2 \frac{\ell^2}{n-t}\right)\\
		& \geq \tau^s (1-\tau)^{\ell-s} \left( 1 - \frac{4}{p\sqrt{n}}   \right) \left( 1- 2 \frac{(\log n)^2}{p^2 t} \right)\left(  1 - 2 \frac{(\log n)^2}{p^2(n-t)} \right) \\
		& \geq \tau^s (1-\tau)^{\ell-s}\left( 1 - \frac{4}{p\sqrt{n}}   \right) \left( 1 - 2 \frac{(\log n)}{p^2 \sqrt{n}} \right)^2 
		\geq \tau^s (1-\tau)^{\ell-s} \left( 1 - \frac{10}{p\sqrt{n}}  \right).
	\end{align*}
\end{proof}

\begin{proof}[Proof of Lemma~\ref{lem:approximation_inf_LP_finite_LP}] We are going to show $|\gamma_n^* - \gamma_{\infty}^*|\leq \mathcal{O}((\log n)^2/\sqrt{n})$. Since we can only guarantee good approximation of the binomial terms in Proposition~\ref{prop:key_bound_binomial} for $\ell \leq (\log n)/p$, we need to restrict our analysis to $\gamma_{n,q}^*$ and $\gamma_{\infty,q}^*$ for $q=(\log n)/p$. This is enough since these values are withing $1/n$ of $\gamma_n^*$ and $\gamma_{\infty}^*$ respectively due to Propositions~\ref{prop:relaxed_LP_not_bad} and~\ref{prop:relaxed_CLP_not_bad} (see Remark~\ref{rem:approx_lost}). 
	
	Before proceeding, we give two technical results that allow us to control an error for values of $t$ not considered by Proposition~\ref{prop:key_bound_binomial}. The deduction is a routine calculation and it is skipped for brevity.

	\begin{claim*}\label{claim:small_t_values}
		For any $x$ feasible for Constraints~\eqref{const:capped_lp_dynamic} and such that $x_{t,s}=0$ for $s> q$, we have for $k\leq q$
		\begin{itemize}
			\item $\sum_{t=1}^{\sqrt{n}\log n} \sum_{s=1}^t x_{t,s} \Prob(R_t\in [k] \mid r_t=s) \leq 10 {(\log n)^2} /{(p\sqrt{n})} $.
			\item $\sum_{t=n-\sqrt{n}\log n}^{n} \sum_{s=1}^t x_{t,s} \Prob(R_t\in [k] \mid r_t=s) \leq 10 {(\log n)^2} /{(p\sqrt{n})} $.
		\end{itemize}
	\end{claim*}

	\begin{claim*}\label{claim:small_t_values_CLP}
		For any $\alpha$ feasible for Constraints~\eqref{const:capped_infinity_dynamic}, we have for $k\leq q$
		\begin{itemize}
			\item $\int_{1-(\log n)/\sqrt{n}}^1 \sum_{s=1}^k \alpha(\tau, s) \sum_{\ell=s}^k \binom{\ell-1}{s-1} \tau^s(1-\tau)^{\ell-s} \mathrm{d}\tau \leq {(\log n)^2} /{(p\sqrt{n})}$.
			\item $\int_{0}^{(\log n)/\sqrt{n}} \sum_{s=1}^k \alpha(\tau, s) \sum_{\ell=s}^k \binom{\ell-1}{s-1} \tau^s(1-\tau)^{\ell-s} \mathrm{d}\tau \leq {(\log n)^2} /{(p\sqrt{n})}$.
		\end{itemize}
	\end{claim*}

	First, we show that $\gamma_{n,q}^* \geq \gamma_{\infty}^* - 40 (\log n)^2/(p\sqrt{n})$. Let $(\alpha,\gamma)$ be a feasible solution of the continuous LP $(CLP)_p$. We construct a solution of the $(LP)_{n,p,q}$ as follows. Define
	\begin{align*}
		x_{t,s} &= \frac{t-(1-p)}{t} \int_{{(t-1)}/{n}}^{{t}/{n}} \alpha(\tau ,s) \, \mathrm{d} \tau && \forall t\in [n], \forall s\in [t],
	\end{align*}
	and $\gamma_{n,q} = \min_{k\leq q} \frac{p}{1-(1-p)^k} \sum_{t=1}^n \sum_{s=1}^t x_{t,s} \Prob(R_t\in [k] \mid r_t = s)$. Let us show that $(\x,\gamma_{n,q})$ is feasible for $(LP)_{n,q}$, i.e., it satisfies Constraints~\eqref{const:capped_lp_dynamic}-\eqref{const:capped_lp_multiobj}. First, for $\tau\in\left[ {(t-1)}/{n}, {t} / {n} \right]$ we have
	\begin{align*}
		\tau \alpha(\tau,s) + p\int_{{(t-1)}/{n}}^\tau \alpha(\tau',s) \, \mathrm{d}\tau'  & \leq 1- p\int_{0}^\tau \sum_{\sigma\geq 1} \alpha(\tau', \sigma) \,\mathrm{d} \tau' + p\int_{{(t-1)}/{n}}^\tau \alpha(\tau',s) \, \mathrm{d}\tau' \\
		& \leq 1 - p \int_0^{{(t-1)}/{n}} \sum_{\sigma \geq 1} \alpha(\tau',\sigma) \, \mathrm{d}\tau'  \leq 1 - p\sum_{\tau'=1}^{t-1} \sum_{\sigma=1}^{\tau'} x_{\tau',\sigma}.
	\end{align*}
	We now integrate in $\left[ {(t-1)}/{n}, {t}/{n} \right]$ on both sides of the inequality. After integration, the RHS equals $\left(  1 - p \sum_{\tau=1}^t \sum_{\sigma=1}^\tau x_{\tau,\sigma}  \right) / n$. On the LHS we obtain,
	\begin{align*}
		\int_{{(t-1)}/{n}}^{{t}/{n}} \left( \tau \alpha(\tau,s) + p \int_{{(t-1)}/{n}}^\tau \alpha(\tau',s) \mathrm{d} \tau'   \right) \, \mathrm{d}\tau & = \frac{t}{n}\int_{{{(t-1)}}/{n}}^{{t}/{n}} \alpha(\tau,s) \mathrm{d}\tau - (1-p) \int_{{(t-1)}/{n}}^{{t}/{n}} \left( \frac{t}{n} - \tau \right)\alpha(\tau,s)\mathrm{d}\tau \\
		& \geq \frac{(t-(1-p))}{n} \int_{{(t-1)}/{n}}^{{t}/{n}} \alpha(\tau,s)\mathrm{d}\tau \tag{Using $t/n-\tau \leq 1/n$} \\ 
		&= \frac{t}{n} x_{t,s}.
	\end{align*}
	Thus Constraints~\eqref{const:capped_lp_dynamic} hold. By definition of $\gamma_{n,q}$, Constraints~\eqref{const:capped_lp_multiobj} also hold.
	
	Now, note that for $t\geq \sqrt{n}\log n$ we have
	$$x_{t,s} \geq \left( 1 - \frac{1}{\sqrt{n}\log n} \right) \int_{{(t-1)}/{n}}^{{t}/{n}} \alpha(\tau,s)\mathrm{d}\tau.$$
	Then, 
	{\small \begin{align*}
		\gamma_{n,q} &  = \min_{k\leq q} \frac{p}{1-(1-p)^k} \sum_{t=1}^n \sum_{s=1}^t x_{t,s} \sum_{\ell=s}^{k\wedge(n-t+s)} \frac{\binom{\ell-1}{s-1} \binom{n-\ell}{t-s}}{\binom{n}{t}} \tag{Definition of $\Prob(R_t\in [k]\mid r_t=s)$} \\
		& \geq \min_{k\leq q} \frac{p}{1-(1-p)^k} \sum_{t=\sqrt{n}\log n}^{n-\sqrt{n}\log n} \sum_{s= 1}^t \int_{{(t-1)}/{n}}^{{t}/{n}} \alpha(\tau, s) \, \mathrm{d} \tau  \sum_{\ell=s}^{k\wedge(n-t+s)} \frac{\binom{\ell-1}{s-1} \binom{n-\ell}{t-s}}{\binom{n}{t}} \left( 1- \frac{1}{\sqrt{n}\log n} \right) \\
		& \geq \min_{k\leq q} \frac{p}{1-(1-p)^k} \sum_{t=\sqrt{n}\log n}^{n-\sqrt{n}\log n} \sum_{s= 1}^k \int_{{(t-1)}/{n}}^{{t}/{n}} \alpha(\tau, s)    \sum_{\ell=s}^{k} \binom{\ell-1}{s-1}\tau^s (1-\tau)^{\ell-s}  \, \mathrm{d}\tau \left( 1 - \frac{20}{p\sqrt{n}}  \right) \tag{Since $n-t+s \geq \sqrt{n}\log n \geq k$ and $t \leq k$ and using Proposition~\ref{prop:key_bound_binomial}} \\
		& = \min_{k\leq q} \frac{p}{1-(1-p)^k} \sum_{t=\sqrt{n}\log n}^{n-\sqrt{n}\log n} \int_{{(t-1)}/{n}}^{{t}/{n}} \sum_{\ell = 1}^k \sum_{s=1}^{\ell} \alpha(\tau, s)  \binom{\ell-1}{s-1}\tau^s (1-\tau)^{\ell-s}  \, \mathrm{d}\tau \left( 1 - \frac{20}{p\sqrt{n}}  \right) \\
		& \geq \min_{k\leq q} \left(\frac{p}{1-(1-p)^k} \int_{0}^{1} \sum_{\ell = 1}^k \sum_{s=1}^{\ell} \alpha(\tau, s)  \binom{\ell-1}{s-1}\tau^s (1-\tau)^{\ell-s}  \, \mathrm{d}\tau - 2\frac{(\log n)^2}{p\sqrt{n}}\right) \left( 1 - \frac{20}{p\sqrt{n}}  \right) \tag{Claim~\ref{claim:small_t_values_CLP}} \\
		& \geq \left( \gamma - 2\frac{(\log n)^2}{p\sqrt{n}}   \right) \left( 1- \frac{10}{p\sqrt{n}}  \right) \geq  \gamma - \frac{20}{p\sqrt{n}} - 2 \frac{\log n}{\sqrt{n}}.
	\end{align*}}
	\modif{With this, we have proved that $(x,\gamma_{n,q})$ is a feasible solution of $(LP)_{n,p,q}$ with an objective value $\gamma_{n,q}$ at least $\gamma - {20}/{p\sqrt{n}} - 2 {\log n}/{\sqrt{n}}$. Hence, the optimal value of $(LP)_{n,p,q}$, $\gamma_{n,q}^*$ is at least $\gamma - {20}/{p\sqrt{n}} - 2 {\log n}/{\sqrt{n}}$. Since, $(\alpha,\gamma)$ is any feasible solution of $(CLP)_{p,q}$, and $\gamma_{\infty,q}^*\geq \gamma_\infty^*$, we obtain $\gamma_{n,q}^* \geq \gamma_\infty^* - 40 {\log (n)} /{\left(p\sqrt{n}\right)}$ for $n$ large.  }
		
	
	Now, we show that $\gamma_{\infty,q}^* \geq \gamma_{n,q}^* - 40 (\log n)^2/(p \sqrt{n})$. Let $(x,\gamma_{n,q})$ be a solution of $(LP)_{n,p,q}$. Let us construct a solution of $(CLP)_{p,q}$. Note that we can assume $x_{t,s}=0$ for $s > q$ as $(LP)_{n,p,q}$ does not improve its objective function by allocating any mass to these variables. Consider $\alpha$ defined as follows: for $\tau\in [0,1]$ let
	\[
	\alpha(\tau,s) = \begin{cases}
		n x_{t,s}\left( 1- {\log (n)}/{\sqrt{n}} \right) & t= \lceil \tau n \rceil \geq \sqrt{n}, s \leq \min\{ t, \log n/p  \} \\
		0 & t= \lceil \tau n \rceil < \sqrt{n} \text{ or } s  > \min\{ t, \log n/p  \} \\
	\end{cases}
	\]
	Let $\gamma_{\infty,q} = \min_{k\leq q} \frac{p}{1-(1-p)^k} \int_{0}^1 \sum_{s\geq 1}\alpha(\tau,s) \sum_{\ell=s}^{k} \binom{\ell-1}{s-1} \tau^s(1-\tau)^{\ell-s}\, \mathrm{d}\tau $. We show first that $(\alpha,\gamma_{\infty,q})$ is feasible for $(CLP)_{p,q}$, and for this it is enough to show that $\alpha$ holds Constraints~\eqref{const:capped_lp_dynamic}. For $\tau < 1/\sqrt{n}$ we have $\alpha(\tau, s)=0$ for any $s$, thus Constraint~\eqref{const:capped_infinity_dynamic} is satisfied in this case. Let us verify that for $\tau\geq 1/\sqrt{n}$ the constraint also holds. Let $t= \lceil \tau n \rceil$ and $s\geq 1$. Then
	\begin{align*}
		\tau \alpha(\tau, s) + p \int_0^\tau \sum_{\sigma \geq 1} \alpha(\tau', \sigma) \, \mathrm{d} \tau' & \leq \tau \alpha(\tau, s) + p \sum_{t'=1}^{t-1} \int_{\frac{t'-1}{n}}^{\frac{t'}{n}}\sum_{s=1}^{t'} \alpha(\tau',s) \, \mathrm{d}\tau' +  p\int_{\frac{t-1}{n}}^\tau \sum_{\sigma=1}^{\frac{\log n}{p}} \alpha(\tau',\sigma) \, \mathrm{d}\tau' \\
		& \leq \left( 1 - \frac{\log n}{\sqrt{n}}  \right) \left( t x_{t,s} +p \sum_{t'=1}^{t-1}\sum_{s=1}^{t'} x_{t',s} + p \frac{\log n}{p t}  \right) \tag{Since $x_{t,s}\leq \frac{1}{t}$ always} \\
		& \leq \left( 1- \frac{\log n}{\sqrt{n}} \right) \left(  1 + \frac{\log n}{t}  \right) \\
		& \leq \left( 1 - \frac{\log n}{\sqrt{n}} \right) \left( 1 + \frac{\log n}{\sqrt{n}} \right).  \tag{$t \geq \sqrt{n}$}
	\end{align*}
	The last term is $<1$. Thus $(\alpha, \gamma_{\infty,q})$ is feasible for $(CLP)_{p,q}$. Now, 
	{\small\begin{align*}
		\gamma_{\infty,q} & = \min_{k\leq q} \frac{p}{1-(1-p)^k} \int_{0}^1 \sum_{s\geq 1} \alpha(\tau,s) \sum_{\ell=s}^{k} \binom{\ell-1}{s-1} \tau^s (1-\tau)^{\ell-s} \, \mathrm{d}\tau \\
		& \geq \min_{k\leq q} \frac{p}{1-(1-p)^k} \sum_{t=\sqrt{n}\log n}^{n- \sqrt{n}\log n} \int_{\frac{t-1}{n}}^{\frac{t}{n}} \sum_{s\geq 1} \alpha(\tau,s) \sum_{\ell=s}^{k} \binom{\ell-1}{s-1} \tau^s (1-\tau)^{\ell-s} \, \mathrm{d}\tau \\
		& \geq \min_{k\leq q} \frac{p}{1-(1-p)^k} \sum_{t=\sqrt{n}\log n}^{n- \sqrt{n}\log n} \int_{\frac{t-1}{n}}^{\frac{t}{n}} \sum_{s\geq 1} \alpha(\tau,s) \sum_{\ell=s}^{k} \frac{\binom{\ell-1}{s-1} \binom{n-\ell}{t-s}}{\binom{n}{t}} \, \mathrm{d}\tau \left( 1- \frac{10}{p\sqrt{n}}  \right) \tag{Proposition~\ref{prop:key_bound_binomial}} \\
		&\geq \min_{k\leq q} \frac{p}{1-(1-p)^k} \sum_{t=\sqrt{n}\log n}^{n- \sqrt{n}\log n} \sum_{s=1}^t x_{t,s} \sum_{\ell=s}^{k} \frac{\binom{\ell-1}{s-1} \binom{n-\ell}{t-s}}{\binom{n}{t}} \, \mathrm{d}\tau \left( 1- \frac{\log n}{\sqrt{n}} \right)\left( 1- \frac{10}{p\sqrt{n}}  \right)\\
		&\geq \left(\min_{k\leq q} \frac{p}{1-(1-p)^k} \sum_{t=1}^{n} \sum_{s=1}^t x_{t,s} \sum_{\ell=s}^{k} \frac{\binom{\ell-1}{s-1} \binom{n-\ell}{t-s}}{\binom{n}{t}} \, \mathrm{d}\tau - 20 \frac{(\log n)^2}{p\sqrt{n}} \right) \left( 1- \frac{\log n}{\sqrt{n}} \right)\left( 1- \frac{10}{p\sqrt{n}}  \right) \tag{Claim~\ref{claim:small_t_values}} \\
		& \geq \gamma_{n,q} - 40 \frac{(\log n)^2}{p \sqrt{n}}.
	\end{align*}}

	\modif{With this, we have formed a feasible solution $(\alpha,\gamma_{\infty,q})$ of $(CLP)_{p,q}$. Hence, $\gamma_{\infty,q}^*\geq \gamma_{n,q}-40(\log n)^2/p\sqrt{n}$. Since $(x,\gamma_{n,q})$ is any feasible solution of $(LP)_{n,p,q}$, we can optimize over $(x,\gamma_{n,q})$ and obtain the inequality $\gamma_{\infty,q}^* \geq \gamma_{n,q}^* - 40 (\log n)^2/(p\sqrt{n})$. }
	
	\modif{Using Propositions~\ref{prop:relaxed_LP_not_bad} and~\ref{prop:relaxed_CLP_not_bad} we can conclude that, for $n$ large, $\gamma_{\infty}^* - 50 {(\log n)^2} /{\left(p\sqrt{n}\right)} \leq \gamma_n^* \leq \gamma_{\infty}^* + 50 {(\log n)^2} /{\left(p\sqrt{n}\right)}$, where the additional constant factors appear as a byproduct of choosing $q=\log n/p$ in both propositions.}

\end{proof}

\subsection{Missing proofs from Section~\ref{sec:upper_bound}}\label{sec:app:upper_bounds}


The following result is the reduction from $\SPUA$ to i.i.d.\ prophet inequality propblem

\begin{lemma}\label{lem:reduction_sec_to_proph_1}
	Then there is an algorithm $\mathcal{A}'$ for the i.i.d.\ prophet inequality problem that for any $\varepsilon,\delta>0$ satisfying
	\[
	(1+\varepsilon)p < 1, \qquad n\geq \frac{2}{p\varepsilon^2}\log \left(\frac{2}{\delta}\right), \qquad m=\lfloor (1+\varepsilon)p n \rfloor,
	\]
	ensures
	\[
	\E\left[\mathrm{Val}(\mathcal{A}')\right] + \delta \geq \gamma (1- 4\varepsilon - \delta)\E\left[\max_{i\leq m} X_i\right],
	\]
	for any $X_1,\ldots,X_m$ sequence of i.i.d.\ random variables with support in $[0,1]$, where $\mathrm{Val}(\mathcal{A}')$ is the profit obtained by $\mathcal{A}'$ from the sequence of values $X_1,\ldots,X_m$ in the prophet problem.
\end{lemma}

\begin{proof}
	The input of the i.i.d.\ prophet inequality problem corresponds to a known distribution $\mathcal{D}$ with support in $[0,1]$. The DM sequentially accesses at most $m$ samples from $\mathcal{D}$ and upon observing one of these values, she has to decide irrevocably if to take it and stop the process or continue. We are going to use $\mathcal{A}$ to design a strategy for the prophet problem. We assume that the samples from $\mathcal{D}$ are all distinct. Indeed, we can add some small Gaussian noise to the distribution and consider a continuous distribution $\mathcal{D}'$ instead. 
	
	Note that $\mathcal{A}$ runs on an input of size $n$ where a fraction $p$ of the candidates accept an offer. We interpret $pn \approx m$ as the set of samples for the prophet inequality problem, while the remaining $(1-p)n$ items are used as additional information for the algorithm. By concentration bounds, we are going to argue that we only need to run $\mathcal{A}$ in at most $(1+\varepsilon)pn$ positive samples.
	
	Formally, we proceed as follows. Fix $n$ and $\varepsilon>0$ and consider the algorithm $\mathcal{B}$ that receives an online input of $n$ numbers $x_1,\ldots,x_n$. The algorithm flips $n$ coins with probability of heads $p$ and marks item $i$ as available if the corresponding coins that turn out heads. Algorithm $\mathcal{B}$ feeds algorithm $\mathcal{A}$ with the partial rankings given by the ordering given by $x_1,\ldots,x_n$. If $\mathcal{A}$ selects a candidate but the candidate is marked as unavailable, then $\mathcal{B}$ moves to the next item. If $\mathcal{A}$ selects a candidate $i$ and it is marked as available, then the process ends with $\mathcal{B}$ collecting the value $x_i$. Let us denote by $\mathrm{Val}(\mathcal{B},x_1,\ldots,x_n)$ the value collected by $\mathcal{B}$ in the online input $x_1,\ldots,x_n$. Then we have the following claim.
	
	\begin{claim*}
		$\E_{\substack{X_1,\ldots,X_n\\ S}}\left[ \mathrm{Val}(\mathcal{B},X_1,\ldots,X_n) \right]\geq \gamma \E_{\substack{X_1,\ldots,X_n\\S}}$$\left[ \max_{i\in S}X_i \right]$, where $S$ is the random set of items marked as available and $X_1,\ldots,X_n$ are $n$ i.i.d.\ random variables with common distribution $\mathcal{D}$.
	\end{claim*}

	\begin{proof}
		Fix $x_1,\ldots,x_n$ points in the support of $\mathcal{D}$. Then, a simple application of Proposition~\ref{prop:utility_characterization} shows
		\begin{align*}
			\frac{\E_{S,\pi}\left[\mathrm{Val}(\mathcal{B},x_{\pi(1)},\ldots,x_{\pi(n)})\right]}{\E_{\substack{S}}\left[\max_{i\in S}x_i  \right]}
			& \geq \gamma
		\end{align*}
		Note that we need to feed $\mathcal{B}$ with all permutations of $x_1,\ldots,x_n$ in order to obtain the guarantee of $\mathcal{A}$. From here, we obtain $\E_{\mathcal{S},\pi}\left[\mathrm{Val}(\mathcal{B},x_\pi(1),\ldots,x_\pi(n))\right] \geq \gamma \E_{\substack{\mathcal{S}}}\left[\max_{i\in S}x_i  \right]$ and the conclusion follows by taking expectation in $X_1=x_1,\ldots,X_n=x_n$.
	\end{proof}
	
	For ease of notation, we will refer by $\mathrm{Val}(\cdot)$ to $\mathrm{Val}(\cdot,X_1,\ldots,X_n)$. We modify slightly $\mathcal{B}$. Consider $\mathcal{B}'$ that runs normally $\mathcal{B}$ if $|S|\leq (1+\varepsilon)pn$ or simply return $0$ value if $|S|> (1+\varepsilon)pn$. Then, we have

	\begin{claim*}
		Let $\varepsilon, \delta >0$. For $n\geq {2}\log \left( {2}/{\delta} \right)/({p\varepsilon^{2}})$ we have $\E_{\substack{X_1,\ldots,X_n\\ \mathcal{S}}}\left[ \max_{i\in \mathcal{S}} X_i  \right] \geq \left( 1 - \delta \right) \E\left[  \max\limits_{i \leq (1-\varepsilon) pn } X_i   \right]$ and $\E\left[ \mathrm{Val}(\mathcal{B}')  \right]+ \delta \geq \E\left[ \mathrm{Val}(\mathcal{B}) \right]$.
	\end{claim*}

	\begin{proof}
		Using standard Chernoff concentration bounds (see for instance~\citep{boucheron2013concentration}) we get			$\Prob_{\mathcal{S}}\left( \left| |\mathcal{S}| - pn    \right| \geq \varepsilon pn  \right) \leq 2 e^{-{pn \varepsilon^2}/{2}} = \delta.$ 		Hence, for $n \geq {2}\log \left( {2}/{\delta} \right)/({p\varepsilon^{2}})$, we can guarantee that
		\[
		\E_{\substack{X_1,\ldots,X_n\\ \mathcal{S}}}\left[ \max_{i\in \mathcal{S}} X_i  \right] \geq \left( 1 - \delta \right) \E\left[  \max_{i \leq (1-\varepsilon) pn } X_i   \right].
		\]
		For the second part we have $\E\left[ \mathrm{Val}(\mathcal{B})  \right] \leq \delta + \E\left[  \mathrm{Val}(\mathcal{B}) \mid |S| \leq (1+\varepsilon)pn \right] = \delta + \E\left[ \mathrm{Val}(\mathcal{B}')  \right]$.
	\end{proof}

	\begin{claim*}
		For any $\varepsilon>0$ we have $\E\left[ \max\limits_{i \leq (1-\varepsilon) pn } X_i  \right]  \geq (1-\varepsilon)^2 \E\left[ \max\limits_{i \leq (1+\varepsilon) pn } X_i  \right]$.
	\end{claim*}

	\begin{proof}
		Since $\Prob\left(  \max_{i\leq k} X_i \leq x   \right)= \Prob(X_1 \leq x)^k$, then we have
		\begin{align*}
			\frac{\E\left[ \max\limits_{i \leq (1-\varepsilon) pn } X_i  \right]}{\E\left[ \max\limits_{i \leq (1+\varepsilon) pn } X_i  \right]} & = \frac{\int_0^\infty \left( 1- \Prob(X_1 \leq x)^{pn(1-\varepsilon)}  \right)\, \mathrm{d}x}{\int_0^\infty \left( 1- \Prob(X_1 \leq x)^{pn(1+\varepsilon)}  \right)\, \mathrm{d}x} \geq \inf_{x\geq 0} \frac{1- \Prob(X_1 \leq x)^{pn(1-\varepsilon)}}{1- \Prob(X_1 \leq x)^{pn(1+\varepsilon)}}  \geq \inf_{v\in [0,1)} f(v)
		\end{align*}
		where $f(v)= {(1-v^{1-\varepsilon})}/{(1-v^{1+\varepsilon})}$. Now the conclusion follows by using the fact that the function $f$ is nonincreasing and that $\inf_{v\in [0,1)}f(v)=\lim_{v\to 1} f(v)= {(1-\varepsilon)}/{(1+\varepsilon)}\geq (1-\varepsilon)^2$.
	\end{proof}

	Putting together these two claims, we obtain an algorithm that checks at most $(1+\varepsilon)pn$ items and guarantees
	\[
	\gamma (1-\varepsilon)^2 (1-\delta)\E\left[ \max_{i\leq (1+\varepsilon)pn} X_i  \right] \leq \E\left[ \mathrm{Val}(\mathcal{B}')  \right] +\delta.
	\]


	Now, fix $\varepsilon>0$ small enough such that $(1+\varepsilon)p < 1$. We know that the set $\{ \lfloor (1+\varepsilon)pn \rfloor  \}_{n\geq 1}$ contains all non-negative integers. Thus, for $n \geq {2}\log \left( {2}/{\delta} \right)/({p\varepsilon^{2}})$ algorithm $\mathcal{B}'$ in an input of length $m=\lfloor (1+\varepsilon)pn \rfloor$ guarantees
	\[
	\gamma (1-\varepsilon)^2 (1-\delta) \E\left[ \max_{i\leq m} X_i  \right] \leq \E\left[ \mathrm{Val}(\mathcal{B}')  \right] +\delta.
	\]
	for any distribution $\mathcal{D}$ with support in $[0,1]$. This finishes the proof.
\end{proof}

The next result uses notation from the work by Hill \& Kertz. For the details we refer the reader to the work~\citep{hill1982comparisons}. The result states that there is a hard instance for the i.i.d.\ prophet inequality problem where $\E[\max_{i\leq m} X_i]$ is away from $0$ by a quantity at least $1/m^3$. The importance of this reformulation of the result by Hill \& Kertz is that $e^{-\Theta(n)}/\E[\max_{i\leq m} X_i] \to 0$ which is what we needed to show that $\gamma \leq 1/\beta$. Recall that $\beta\approx 1.341$ is the unique solution of the integral equation $\int_0^1 ( y(1-\log y) +\beta -1 )^{-1} \mathrm{d}y = 1$~\citep{kertz1986stop}.

\begin{proposition}[Reformulation of Proposition 4.4 by~\cite{hill1982comparisons}]\label{prop:hill_kertz_reform}
	Let $a_m$ be the sequence constructed by Hill \& Kertz, i.e, such that $a_m\to \beta$ and for any sequence of i.i.d.\ random variables $X_1,\ldots,X_m$ with support in $[0,1]$ we have
	\[
	\E\left[ \max_{i\leq m} X_i  \right] \leq a_m \sup\left\{ \E[X_t] : t\in T_m  \right\},
	\]
	where $T_m$ is the set of stopping rules for $X_1,\ldots,X_m$. Then, for $m$ large enough, there is a sequence of i.i.d.\ random variables $\widehat{X}_1,\ldots,\widehat{X}_m$ with support in $[0,1]$ such that
	\begin{itemize}
		\item $\E\left[ \max_{i\leq m} \widehat{X}_i  \right] \geq 1/m^3$, and 
		
		\item $\E\left[ \max_{i\leq m} \widehat{X}_i  \right] \geq(a_m- 1/m^3) \sup\left\{ \E[\widehat{X}_t] : t\in T_m  \right\}$.
	\end{itemize}
\end{proposition}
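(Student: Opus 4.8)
The plan is to take the tightness of the constant $a_m$ at face value and then control the scale of the extremal instance. Recall that $a_m$ is not merely an upper bound: in Hill \& Kertz's Proposition~4.4 it is shown to be the \emph{optimal} constant for $m$ i.i.d.\ summands, so $a_m = \sup \{ \E[\max_{i\leq m} X_i] / \sup_{\tau\in T_m}\E[X_\tau] \}$ over all i.i.d.\ instances supported in $[0,1]$, and the supremum is approached by an explicit one-parameter family of distributions. First I would invoke that family to pick an instance $Z_1,\ldots,Z_m$ (i.i.d., support in $[0,1]$) whose ratio $\E[\max_{i} Z_i] / \sup_{\tau}\E[Z_\tau]$ exceeds $a_m - 1/m^3$; this already yields the second bullet once we rename $Z$ to $\widehat X$. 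The only remaining task is to guarantee the absolute lower bound $\E[\max_i \widehat X_i]\geq 1/m^3$ without disturbing this ratio.

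The key leverage is that the ratio is invariant under the scaling $Z_i \mapsto c\,Z_i$, while $\E[\max_i Z_i]$ scales linearly. Hence the ratio requirement fixes only the \emph{shape} of the distribution, and I am free to choose the scale $c$ subject to keeping the support inside $[0,1]$, i.e.\ $c\cdot \mathrm{ess\,sup}(Z_1)\le 1$. Taking $c = 1/\mathrm{ess\,sup}(Z_1)$ normalizes the support to fill $[0,1]$, and I would then lower-bound $\E[\max_i \widehat X_i]$ by the mass the extremal distribution places near its top: if $\Prob(\widehat X_1 \geq 1-\eta)\ge q$ then $\E[\max_i \widehat X_i]\ge (1-\eta)\big(1-(1-q)^m\big)\ge \tfrac12 (1-\eta)\,mq$ whenever $mq\le 1$, so any inverse-polynomial bound $q\ge c_0/m^2$ on the top quantile delivers $\E[\max_i\widehat X_i]\ge 1/m^3$ for $m$ large. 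Extracting such a $q$ is exactly where the explicit form of the Hill--Kertz family is used: a near-extremal instance must concentrate enough probability at values of order its essential supremum, since otherwise a single-threshold stopping rule would already capture a $(1-o(1))$ fraction of $\E[\max_i Z_i]$ and force the ratio down toward $1$; their formulas make this quantitative.

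I expect the main obstacle to be precisely this scale and tail control, rather than the tightness itself. Near-extremal i.i.d.\ instances are heavy-tailed: to push the ratio toward $\beta$ one spreads mass across a wide range of values, and after renormalizing the support to $[0,1]$ a careless choice of scale makes $\E[\max_i \widehat X_i]$ collapse, because much of the prophet's advantage comes from a thin, high tail. The honest resolution is to track Hill \& Kertz's explicit construction (or, equivalently, to cap the negligible top tail at the appropriate quantile and verify that this perturbs both $\E[\max_i Z_i]$ and $\sup_{\tau}\E[Z_\tau]$ by only $O(1/m^3)$ at the chosen scale), and then read off the resulting inverse-polynomial value of $\E[\max_i\widehat X_i]$. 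I would absorb the accumulated $O(1/m^3)$ slacks by running the argument with a slightly sharper accuracy target, say $1/(3m^3)$, before relabeling. Finally, it is worth noting that the downstream use of this proposition only requires $\E[\max_i \widehat X_i]$ to exceed the exponentially small quantity $\delta$, so any $1/\mathrm{poly}(m)$ bound suffices; the value $1/m^3$ is convenient but not sharp, which is what makes this a \emph{slight} reformulation of Proposition~4.4.
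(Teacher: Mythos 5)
Your second bullet is unproblematic: given that $a_m$ is the optimal constant, an i.i.d.\ instance with ratio at least $a_m-1/m^3$ exists and can be relabeled $\widehat{X}$. The genuine gap is in the first bullet, and it is not a deferrable technicality. Your only concrete mechanism for it is: normalize the essential supremum to $1$, then claim that any near-extremal instance must have top-quantile mass $q\geq c_0/m^2$, justified by the dichotomy that otherwise a single-threshold stopping rule would capture a $(1-o(1))$ fraction of $\E[\max_i Z_i]$ and force the ratio toward $1$. That dichotomy is false: near-extremality places no constraint at all on the mass near the essential supremum. Concretely, take an instance $Y$ with ratio at least $a_m-1/(2m^3)$ and $\sup_{\tau}\E[Y_\tau]\geq 1/(4m^3)$ (the instance the paper builds has both properties), and let $Z=\tfrac{3}{4}Y$ with probability $1-2^{-m}$ and $Z=1$ with probability $2^{-m}$. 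Both $\E[\max_i Z_i]$ and $\sup_\tau \E[Z_\tau]$ differ from $\tfrac{3}{4}$ times their $Y$-counterparts by at most $m2^{-m}$, so the ratio of $Z$ is still at least $a_m-1/m^3$ for $m$ large; yet $Z$ has essential supremum $1$ while $\Prob(Z\geq 0.8)=2^{-m}$, so your bound degenerates to $\E[\max_i Z_i]\gtrsim m2^{-m}$. The same defect sits inside the Hill--Kertz family itself: its second-highest atom lies at $V_{m-1}(\widehat{X})\leq \sup_\tau \E[\widehat{X}_\tau]\leq 1/(a_m-\varepsilon')\approx 0.745$, bounded away from $1$, so the mass near the top is \emph{exactly} the perturbation mass $\varepsilon'$ placed at the value $1$ --- and Hill \& Kertz assert validity of the construction only ``for $\varepsilon'$ sufficiently small,'' with no quantification in $m$. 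Nothing you invoke rules out that the admissible range is, say, $\varepsilon'\leq 2^{-m}$, in which case no rescaling or tail-capping recovers an inverse-polynomial bound.

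That quantification is the entire content of the proposition, and it is what the paper's proof supplies. The constraint limiting $\varepsilon'$ is explicit: Hill \& Kertz move mass $\varepsilon'$ from the atom at $V_{m-1}(\widehat{X})$, which has probability $\widehat{p}_{m-1}$, to the value $1$, so one needs $\widehat{p}_{m-1}-\varepsilon'>0$. The paper lower-bounds $\widehat{p}_{m-1}$ from their closed-form expressions (their Proposition 3.6): $\widehat{s}_{m-2}=(1-1/m)^{1/(m-1)}(1-\alpha_m/m)^{1/(m-1)}$ with $1/(3e)\leq\alpha_m\leq 1/(e-1)$, whence $\widehat{p}_{m-1}=1-\widehat{s}_{m-2}\geq 1/m^2$ for $m$ large; therefore $\varepsilon'=1/m^3$ is admissible, and the first bullet is then immediate because the atom of mass $\varepsilon'$ at the value $1$ gives $\E[\max_i\widehat{X}_i]\geq \varepsilon'=1/m^3$. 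Your closing remark that one could ``track Hill \& Kertz's explicit construction \ldots and read off the resulting inverse-polynomial value'' names exactly this computation, but the proposal defers it, and the scaling/tail apparatus offered in its place cannot substitute for it; as written, the first bullet is not proved.
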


\begin{proof}
	In Proposition 4.4~\citep{hill1982comparisons}, it is shown that that for any $\varepsilon'$ sufficiently small, there is a random variable $\widehat{X}$ with $\widehat{p}_0= \Prob(\widehat{X}=0)$, $\widehat{p}_j=\Prob(\widehat{X}= V_j(\widehat{X}))$ for $j=0,\ldots,m-2$, $\Prob(\widehat{X}= V_{m-1}(\widehat{X})) = \widehat{p}_{m-1}-\varepsilon'$ and $\Prob(\widehat{X}=1)=\varepsilon'$ such that $\E[\max_{i\leq m} \widehat{X}_i] \geq (a_m - \varepsilon') \sup\left\{ \E[\widehat{X}_t] : t\in \widehat{T}_m \right\}$, where $\widehat{X}_1,\ldots,\widehat{X}_m$ are $m$ independent copies of $\widehat{X}$. Here $V_j(\widehat{X})= \E[ \widehat{X} \wedge \E[V_{j-1}(\widehat{X})] ]$ corresponds to the optimal value computed via dynamic programming and one can show that $\sup\left\{ \E[\widehat{X}_t] : t\in \widehat{T}_m \right\}= V_m(\widehat{X})$ (see Lemma 2.1 in~\citep{hill1982comparisons}). We only need to show that we can choose $\varepsilon'=1/m^3$. The probabilities $\widehat{p}_0,\ldots,\widehat{p}_{m-1}$ are computed as follows: Let $\widehat{s}_j = (\eta_{j,m}(\alpha_m))^{1/m}$ for $j=1,\ldots,n-2$ where $\alpha_m\in(0,1)$ is the (unique) solution of $\eta_{m-1,m}(\alpha_m)=1$, then $\widehat{p}_0=\widehat{s}_0$, $\widehat{p}_j= \widehat{s}_j-\widehat{s}_{j-1}$ for $j=1,\ldots,n-2$ and $\widehat{p}_{n-1}= 1 - \widehat{s}_{n-2}$. One can show that $\widehat{s}_{m-2} = \left( 1-1/m \right)^{1/(m-1)}\left( 1 - { \alpha_m}/{m}  \right)^{1/(m-1)}$ and $\alpha_m $ holds $ 1/(3e) \leq \alpha_m \leq 1/(e-1)$ (see Proposition 3.6 in~\citep{hill1982comparisons}). For $m$ large we have 
	\[
	e^{-1/(m-1)} \leq \widehat{s}_{m-2} \leq e^{-1/(3em^2)}
	\]
	then $\widehat{p}_{m-1}= 1- \widehat{s}_{m-2} \geq 1 - e^{-1/(m-1)} \geq 1/m^2$ for $m$ large. Thus we can set $\varepsilon'=1/m^3$ and $\widehat{p}_{m-1}-\varepsilon' >0$ and the rest of the proof follows. Furthermore, $\E[\max_{i\leq m} X_i] \geq \varepsilon' \cdot 1 \geq 1/m^3$.
\end{proof}

\subsection{Missing proofs from Section~\ref{sec:exact_sol_large_p}}\label{subsec:app:exact_sol_large_p}

\begin{proof}[Proof of Lemma~\ref{lem:ratio_bound}] For $p\geq p^*$ and $\ell=0,1,\ldots, 4$, we calculate tight lower bounds for the expression in the letf-hand side of the inequality in the claim, and we show that these lower bounds are at least one, with the lower bound attaining equality with $1$ for $\ell=1,2$. For $\ell \geq 5$ we can generalize the previous bounds and show a universal lower bound of at least $1$.
	\begin{itemize}[leftmargin=*]
		\item For $\ell=0$, we have
		\begin{align*}
			\int_{p^{{1}/{(1-p)}}}^1 \frac{1}{t^p} \,\mathrm{d}t & = \frac{1}{1-p} \left( 1 -p \right) = 1 = (1-p)^0 .
		\end{align*}
		\item For $\ell=1$, we have
		\begin{align*}
			\int_{p^{{1}/{(1-p)}}}^1 \frac{(1-t)}{t^p} \,\mathrm{d}t & = 1 - \int_{p^{{1}/{(1-p)}}}^1 t^{1-p} \,\mathrm{d}t  = 1 - \frac{1}{2-p} \left( 1- p^{{(2-p)}/{(1-p)}}  \right) .
		\end{align*}
		The last value is at least $1-p$ if an only if $p (2-p) \geq 1-p^{{(2-p)}/{(1-p)}}$ iff $p^{(2-p)/(1-p)}\geq (1-p)^2$. The last inequality holds iff $p\geq p^* \approx 0.594134$ where $p^*$ is computed numerically by solving $(1-p)^2 = p^{(2-p)/(1-p)}$.
		
		\item For $\ell = 2$, we use the approximation $p^{1/(1-p)} \leq {(1+p)}/{(2e)}$ that follows from the concavity of the function $p^{1/(1-p)}$ and the first-order approximation of the function at $p=1$. With this we can lower bound the integral
		\begin{align*}
			\int_{p^{1/(1-p)}}^1 \frac{(1-t)^2}{t^p} \, \mathrm{d}t & \geq \int_{{(1+p)}/{(2e)}}^1 \frac{(1-t)^2}{t^p} \, \mathrm{d}t \\
			& = \int_0^{1-{(1+p)}/{(2e)}} u^2 (1-u)^{-p} \, \mathrm{d}u \tag{change of variable $u=1-t$} \\
			& \geq \int_0^{1-{(1+p)}/{(2e)}} u^2 \left( 1 + p u + p(p+1) \frac{u^2}{2} \right) \, \mathrm{d} u \tag{Using the series $(1-u)^{-p} = \sum_{k\geq 0} \binom{-p}{k} (-u)^k$ } \\
			& =  \frac{1}{3}\left(1-\frac{1+p}{2e}\right)^3 + \frac{p}{4}\left(1-\frac{1+p}{2e}\right)^4 + \frac{p(p+1)}{10} \left( 1- \frac{1+p}{2e} \right)^5.
		\end{align*}
		By solving the polynomial we see that the last expression is $\geq(1-p)^2$ if and only if $p \geq 0.585395 $, thus the inequality holds for $p\geq p^*$.
		
		\item For $\ell=3,4$ we can use a similar approach to get
		\[
		\int_{p^{{1}/{(1-p)}}}^1 \frac{(1-t)^\ell}{t^p} \, \mathrm{d}t \geq  \frac{1}{\ell+1} \left( 1-\frac{1+p}{2e} \right)^{\ell+1} + \frac{p}{\ell+2} \left( 1 - \frac{1+p}{2e}  \right)^{\ell+2}.
		\]
		The last expression is $\geq (1-p)^{\ell}$ for $\ell=3,4$ if and only if $p\geq 0.559826$.
		
		\item For $\ell \geq 5$, we have
		\begin{align*}
			\int_{p^{{1}/{(1-p)}}}^1 \frac{(1-t)^\ell}{t^p} \, \mathrm{d} t \geq \frac{(1-{(1+p)}/{(2e)})^{\ell+1}}{\ell+1} .
		\end{align*}
		We show that ${\left( 1 - {(1+p)}/{(2e)}  \right)^{\ell+1}}/{(\ell+1)} \geq (1-p)^\ell$. This is equivalent to
		$$\left(\frac{1-(1+p)/(2e)}{1-p}\right)^\ell \left( 1- \frac{1+p}{2e} \right)\geq \ell+1.$$
		Note that the function $f(p) = {\left(1-(1+p)/(2e)\right)}/{(1-p)}$ is increasing since $f'(p) = {(1-1/e)}/{(1-p)^2} > 0$. For $\overline{p} = {(2e-1)}/{(4e-3)} \approx 0.56351$ we have $f(\overline{p}) = 2-1/e $. Thus for $p\geq p^* > \overline{p}$ and $\ell =5$ we have $f(p)^5\left( 1 - {(1+p)}/{(2e)} \right) \geq (2-1/e)^5 (1-1/e)\geq 7.32 \geq 6$. By an inductive argument, we can show that $f(p)^\ell \left( 1 - {(1+p)}/{(2e)}  \right) \geq \ell+1$ for any $\ell\geq 5$ and this finishes the proof. 
	\end{itemize}
	
\end{proof}

\begin{proof}[Proof of Lemma~\ref{lem:limit_log_tk}]
	During the proof, we assume that $1/p \notin \N$. This is an assumption that is easy to remove with a density argument. We divide the proof into a series of propositions and lemmas.
	
	Taking logarithm on both sides of Identity~\eqref{eq:definition_of_tk} we obtain
	\[
	\log t_{k+1} - \log t_k = \frac{1}{1-kp} \log \left(  \frac{A_k(1-p)}{A_{k-1}} \right).
	\]
	From here, we obtain
	\begin{align*}
		\log t_{\lfloor 1/p \rfloor + 1} - \log t_2 &= \sum_{j=2}^{\lfloor 1/p \rfloor} \frac{1}{1-jp} \log\left( \frac{A_j(1-p)}{A_{j-1}} \right) \\
		& = \sum_{j=2}^{\lfloor 1/p \rfloor} \frac{1}{1-jp}\int_{A_{j-1}}^{(1-p)A_j} \frac{1}{x} \, \mathrm{d}x.,
	\end{align*}
	and also
	\begin{align*}
		\log t_{k+1} - \log t_{\lfloor 1/p \rfloor + 1} &= \sum_{j=\lfloor 1/p \rfloor}^k \frac{1}{jp-1} \log \left( \frac{A_{j-1}}{A_j(1-p)}  \right) \\
		& = \sum_{j=\lfloor 1/p \rfloor}^k \frac{1}{jp-1} \int_{A_j(1-p)}^{A_j} \frac{1}{x} \, \mathrm{d}x.
	\end{align*}
	
	\begin{proposition}
		We have
		\begin{enumerate}
			\item For $k <1/p$,
			\[
			\frac{\gamma p (1-kp)}{A_k(1-p)} \leq  \int_{A_{k-1}}^{A_k (1-p)} \frac{1}{x} \, \mathrm{d}x \leq \frac{\gamma p (1-kp)}{A_{k-1}}
			\]
			
			\item For $k> 1/p$,
			\[
			\frac{\gamma p (kp-1)}{A_{k-1}} \leq  \int_{A_{k}(1-p)}^{A_{k-1}} \frac{1}{x} \, \mathrm{d}x \leq \frac{\gamma p (kp - 1)}{A_{k}(1-p)}
			\]
		\end{enumerate}
	\end{proposition}
	
	\begin{proof}
		Both results follow by using the monotonicity of $1/x$ and that
		\[
		A_k(1-p) - A_{k-1}  = t_1(1-p)^{-k+1} + \gamma p k (1-p) - t_1(1-p)^{-k+1} - \gamma p (k-1) \\
		 = \gamma p (1-kp).
		\]
	\end{proof}
	
	The next result shows bound over $\log t_{k+1}$. We use this result to interpret the bounds as Riemann sums.
	
	\begin{proposition}[Bounds on $\log t_{k+1}$]
		For $k\geq 1/p$, we have
		\[
		\sum_{j=2}^{k-1} \frac{\gamma p}{A_j}\leq \log t_{k+1} -\log t_2 \leq \sum_{j=1}^{k} \frac{\gamma p}{A_j} + \frac{p}{1-p} \sum_{j=\lfloor 1/p \rfloor+1}^{k} \frac{\gamma p}{A_j}.
		\]
	\end{proposition}
	
	\begin{proof}
		For the upper bound we have
		\begin{align*}
			\log t_{k+1} &\leq \log t_2 + \sum_{j=2}^{\lfloor 1/ p \rfloor} \frac{\gamma p}{A_{j-1}} + \frac{1}{1-p} \sum_{j=\lfloor 1/p \rfloor+1}^k \frac{\gamma p}{A_j} \\
			& \leq \log t_2 + \sum_{j=1}^{k} \frac{\gamma p}{A_j} + \frac{p}{1-p}\sum_{j=\lfloor 1/p \rfloor+1}^k \frac{\gamma p}{A_j}
		\end{align*}
		For the lower bound we have
		\begin{align*}
			\log t_{k+1} & \geq \log t_2 +  \frac{1}{1-p}\sum_{j=2}^{\lfloor 1/p \rfloor} \frac{\gamma p}{A_{j}} + \sum_{j=\lfloor 1/p \rfloor + 1}^{k} \frac{\gamma p}{A_{j-1}} \geq \log t_2 + \sum_{j=2}^{k-1} \frac{\gamma p}{A_j}.
		\end{align*}
	\end{proof}

	For $p>0$ but small enough, $t_1 e^{jp} + \gamma p j \leq A_j \leq t_1 e^{jp/(1-p)} + \gamma p j$. Using this in the bounds of the previous proposition, we obtain
	\[
	\int_2^{\infty} \frac{\gamma p}{t_1 e^{xp/(1-p) + \gamma xp }}\, \mathrm{d}x  \leq \lim_{k\to \infty}  \log t_{k+1}  -\log t_2 \leq \frac{\gamma p}{A_1} + \int_1^\infty \frac{\gamma p}{t_1 e^{xp} + \gamma xp} \, \mathrm{d}x + \frac{p}{1-p} \int_{\lfloor 1/p \rfloor}^\infty \frac{\gamma p}{t_1 e^{px} +\gamma p x} \, \mathrm{d}x.
	\]
	Note that 
	\[
	\frac{p}{1-p}\int_{\lfloor 1/p \rfloor }^k \frac{\gamma p}{ t_ 1 e^{px} + \gamma p x}\, \mathrm{d}x \leq \frac{p}{1-p} \int_{1}^\infty \frac{\gamma}{t_1} e^{-x}\,\mathrm{d}x =  \frac{p}{1-p}\frac{\gamma}{t_1} e^{-1}.
	\]
	Then, taking $p\to 0$, we obtain
	\[
	\int_0^\infty \frac{\gamma}{t_1e^{x} + \gamma x} \, \mathrm{d}x = \lim_{k\to \infty} \log t_{k} - \log t_1
	\]
	where we used that $t_2= t_1 \left( 1+ \gamma p(1-p)/t_1 \right)^{1/(1-p)}\to t_1$ when $p\to 0$. This concludes the proof of Lemma~\ref{lem:limit_log_tk}.
\end{proof}

\end{document}